\documentclass[10pt,journal, final]{IEEEtran}
\usepackage{ifpdf}
\usepackage{cite}
\usepackage{algorithmic}
\usepackage{algorithm}
\ifCLASSINFOpdf
\usepackage[pdftex]{graphicx}

\else
 \usepackage[dvips]{graphicx}
\fi
\usepackage[cmex10]{amsmath}
\usepackage{epstopdf}
\usepackage{array}
\usepackage{flushend}
\usepackage{balance}
\usepackage{eqparbox}

\usepackage[tight,footnotesize]{subfigure}
\usepackage{fixltx2e}
\usepackage{color}
\usepackage{float}

\usepackage{dblfloatfix}
\usepackage{url}
\usepackage{cite}
\usepackage{subfigure}
\usepackage{amsmath}
\usepackage{epstopdf}\newtheorem{theorem}{Theorem}
\newtheorem{proposition}{Proposition}
\newenvironment{proof}{{\indent \indent \it Proof:}}{\hfill $\blacksquare$}
\newtheorem{remark}{Remark}
\newtheorem{lemma}{Lemma}
\newtheorem{assumption}{Assumption}
\usepackage{amssymb}
\usepackage{color}

\begin{document}
\allowdisplaybreaks[3]
\title{Interference Management in Reverse TDD HAPS-Ground Networks: A Large System Analysis}
\setlength{\columnsep}{0.21 in}

\author{Shasha~Liu,~\IEEEmembership{Student Member,~IEEE,}
Abla~Kammoun,~\IEEEmembership{Member,~IEEE,}      
and
Mohamed-Slim~Alouini,~\IEEEmembership{Fellow,~IEEE}


}
\maketitle
\begin{abstract}
High-altitude platform stations (HAPSs) are expected to coexist with terrestrial networks (TNs) to provide seamless wide-area wireless connectivity in future sixth-generation (6G) systems. However, the strong line-of-sight (LoS) propagation of HAPS links introduces severe cross-tier interference under spectrum sharing. This paper proposes a reverse time division duplexing (RTDD)-based transmission framework for integrated HAPS--TN systems, where the HAPS and terrestrial base station (BS) operate in opposite transmission directions to mitigate cross-tier interference while preserving TDD channel reciprocity. Standardized 3GPP channel models are adopted for both HAPS and terrestrial links. To enable tractable performance analysis, an analytical Kronecker channel model is developed to approximate the empirical 3GPP terrestrial BS channel, based on which a random matrix theory (RMT)-based large-system analysis is established to derive deterministic approximations for the uplink and downlink signal-to-interference-plus-noise ratios (SINRs). Numerical results demonstrate an excellent agreement between the proposed analysis and Monte Carlo simulations. Moreover, the proposed RTDD framework effectively suppresses cross-tier interference, achieves near interference-free performance.
\end{abstract}
\begin{IEEEkeywords}
High-altitude platform stations (HAPSs), reverse time division duplexing (RTDD), random matrix theory (RMT), Massive MIMO.
\end{IEEEkeywords}
\IEEEpeerreviewmaketitle
\section{Introduction}
High-altitude platform stations (HAPSs) have emerged as a promising technology for 6G and beyond, providing wide-area wireless connectivity from the stratosphere \cite{kurt2021vision,mohammed2011role,shibata2020system,deng2026ai}. Owing to their high-altitude operation and deployment flexibility, HAPSs are well suited for extending wireless connectivity to remote, rural, mountainous, and disaster-stricken areas where terrestrial infrastructure is unavailable or difficult to deploy \cite{karaman2025solutions}. This capability is often referred to as connecting the unconnected.
In addition to extending wireless coverage, HAPSs are expected to complement existing terrestrial networks by providing additional capacity in temporary traffic hotspots and filling coverage gaps caused by terrain, buildings, or insufficient infrastructure, thereby connecting the connected. As a result, future wireless systems are envisioned to operate as integrated HAPS–terrestrial networks, where HAPSs and ground base stations (BSs) coexist and jointly provide seamless wireless services \cite{deng2026distributed,alam2021high}.
\par
One of the key advantages of HAPS communications is the strong line-of-sight (LoS) propagation between HAPS and ground users, which significantly enhances coverage and communication reliability. To fully exploit these advantages in future wireless systems, HAPSs are expected to coexist and share spectrum with terrestrial networks (TNs), particularly in densely populated regions where spectrum resources are scarce. However, the favorable LoS propagation of HAPS links also leads to strong cross-tier interference between HAPSs and terrestrial BSs under spectrum sharing, making efficient interference management a key challenge for enabling reliable and spectrum-efficient integrated HAPS--TN systems. The coexistence issue has attracted increasing research attention. For example, \cite{yuan2022interference} analyzed the cross-tier interference between HAPS and terrestrial mobile networks under co-channel deployment, highlighting the challenges of spectrum sharing.
\par
To mitigate cross-tier interference, massive multiple-input multiple-output (mMIMO) has emerged as a key enabling technology for integrated HAPS--TN systems \cite{bjornson2017massive,bjornson2025enabling, bjornson2015optimal}. By exploiting the large spatial degrees of freedom provided by massive antenna arrays, mMIMO enables highly directional beamforming that enhances the desired signals while effectively suppressing cross-tier interference. Motivated by these advantages, numerous studies \cite{liu2026sum, shamsabadi2024enhancing,jang2025haps, deng2025two} have investigated optimization-based beamforming schemes for integrated HAPS systems. Although these approaches can effectively mitigate interference, they generally rely on iterative optimization algorithms and require high computational complexity as well as instantaneous channel state information (CSI), making real-time implementation challenging in large-scale HAPS deployments.
\par 
An alternative approach is to exploit reverse time division duplexing (RTDD) to mitigate cross-tier interference at the network level. Reference \cite{hoydis2013making} showed that the combination of massive MIMO and RTDD can effectively suppress cross-tier interference in heterogeneous cellular networks. Building upon this idea, \cite{sanguinetti2015interference} employed large-system analysis to characterize interference management in reverse-TDD heterogeneous networks with wireless backhaul. In addition, \cite{xia2017large} further developed a random matrix theory (RMT)-based framework for the joint optimization of time and frequency resource allocation in large-scale RTDD heterogeneous networks with wireless backhaul. More recently, \cite{guo2018dynamic} further demonstrated the effectiveness of dynamic TDD for mitigating cross-link interference in cellular networks, highlighting its potential as an efficient spectrum-sharing mechanism.
For HAPS--terrestrial coexistence, \cite{fujii2024mobile} proposed an RTDD-based spectrum-sharing approach, where HAPS and terrestrial systems operate over the same frequency band with opposite uplink and downlink transmission timings to mitigate cross-tier interference. More recently, \cite{nakazawa2026adaptive} investigated adaptive control of the reverse use of the TDD frame structure to maximize the system throughput in co-existing HAPS and terrestrial BS networks.
\par
In addition to mitigating interference, RTDD preserves the channel reciprocity of conventional TDD because each network tier still operates in TDD mode. Consequently, accurate downlink CSI can be acquired from uplink pilots without explicit CSI feedback, making the channel estimation overhead independent of the number of antennas \cite{marzetta2010noncooperative,marzetta2006much}. Furthermore, unlike coordinated beamforming schemes that require exchanging instantaneous CSI between the HAPS and terrestrial BS, RTDD enables implicit coordination through transmission scheduling, thereby significantly reducing signaling overhead.
\par
Despite these advantages, the application of RTDD to integrated HAPS--TNs remains largely unexplored. Existing RTDD studies mainly focus on terrestrial heterogeneous networks and do not account for the unique propagation characteristics of HAPS links. Moreover, the lack of tractable analytical models makes it difficult to evaluate the performance of large-scale HAPS systems without relying on computationally expensive Monte Carlo simulations.
Motivated by these observations, this paper investigates RTDD for integrated HAPS--TN systems under practical 3GPP channel models and develops a tractable analytical framework to characterize its performance. The main contributions of this paper are summarized as follows.
\begin{itemize}
    \item We propose an integrated HAPS--TN transmission framework operating in the RTDD mode, where the HAPS and terrestrial BS operate in opposite transmission directions to effectively mitigate cross-tier interference while preserving the channel reciprocity of TDD.
    \item We establish a comprehensive evaluation framework by adopting the standardized 3GPP channel models for both the HAPS and terrestrial BS links. In the proposed RTDD framework, both the HAPS and the terrestrial BS employ zero-forcing (ZF) precoding for downlink transmission and minimum mean-square-error (MMSE) receivers for uplink signal detection. The proposed RTDD framework is benchmarked against the conventional TDD scheme, a null-space-aided TDD (NSA-TDD) scheme, and two interference-free reference configurations, namely the HAPS-only and BS-only systems.
    \item To facilitate tractable performance analysis, we derive an analytical Kronecker channel model that accurately approximates the empirical 3GPP terrestrial BS channel. Based on this model, we develop a large-system analytical framework using random matrix theory (RMT) to derive deterministic approximations for the uplink SINR at the BS and the downlink SINR of the BUEs under Rician fading channels, while the SINRs of the HAPS links are obtained in closed form by exploiting the deterministic LoS propagation.
    \item Numerical results verify the accuracy of the proposed deterministic approximations through an excellent agreement with Monte Carlo simulations. Moreover, the proposed RTDD framework effectively suppresses cross-tier interference, achieves near interference-free performance, and provides a low-complexity alternative to optimization-based interference mitigation schemes.
\end{itemize}

The remainder of this paper is organized as follows. Section~\ref{section:sytem model} introduces the integrated HAPS--TN system model and the proposed RTDD transmission protocol. Section~\ref{section:Large System Analysis} presents the large-system analysis and derives deterministic approximations for the uplink and downlink SINRs. Section~\ref{section:TDD} presents the conventional TDD scheme and the NSA-TDD benchmark for comparison.
Section~\ref{section:Simulation Result} provides numerical results to validate the proposed analytical framework and evaluate the performance of the RTDD scheme under practical 3GPP channel models. Finally, Section~\ref{section:Conclusions} concludes the paper.

\noindent{\bf Notations.}
The following notations are used. Boldface upper-case letters denote matrices, and boldface lowercase letters denote column vectors. The norm $\|.\|$ stands for the Euclidean norm for vectors and the associated operator norm for matrices. The notations $\mathbf{X}^T$,  $\mathbf{X}^H$ denote the transpose and conjugate of $\mathbf{X}$, respectively.  The trace of the matrix $\mathbf{X}$ is denoted by $\operatorname{tr}(\mathbf{X}) $. 
We use $[\cdot]_{i, k}$ to denote the $(i, k)^{th}$ element of
the enclosed matrix. We write \({\bf x} \sim \mathcal{CN}(\boldsymbol{\mu}, \boldsymbol{\Sigma})\) to denote a circularly symmetric complex Gaussian vector with mean \(\boldsymbol{\mu}\) and covariance \(\boldsymbol{\Sigma}\).
The notation $\mathbf{I}_M$ is the $M \times M$ identity matrix.

\section{System Model}
\label{section:sytem model}
\begin{figure*}[!h]
\centering
\subfigure[HAPS DL--BS UL phase: one spatial DoF at the HAPS is used to suppress interference toward the BS.]{
\includegraphics[width=3in]{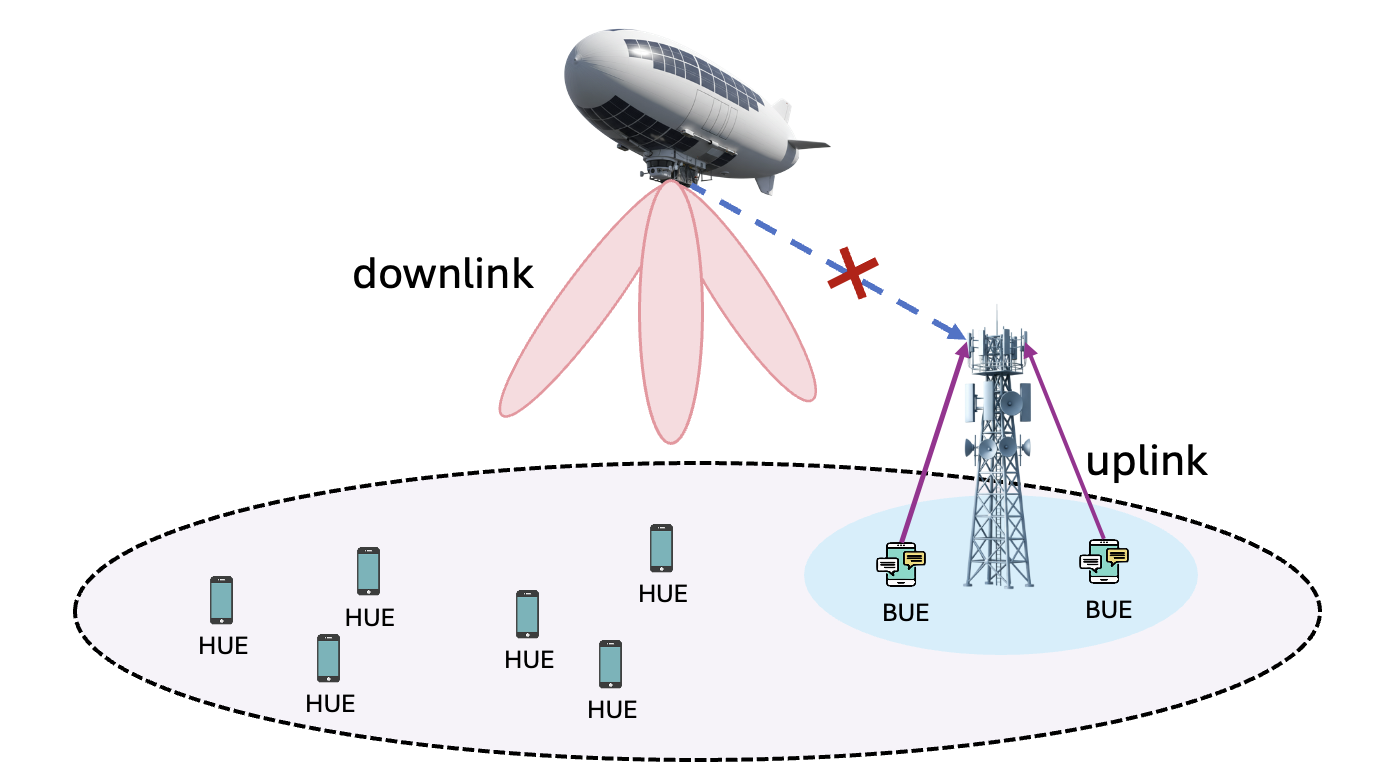}
\label{Fig:SM_HAPS_DL}
}
\hspace{0.3in}
\subfigure[HAPS DL--BS UL phase: the interference from the BS to the HAPS is negligible due to the down-tilted BS antennas.]{
\includegraphics[width=3in]{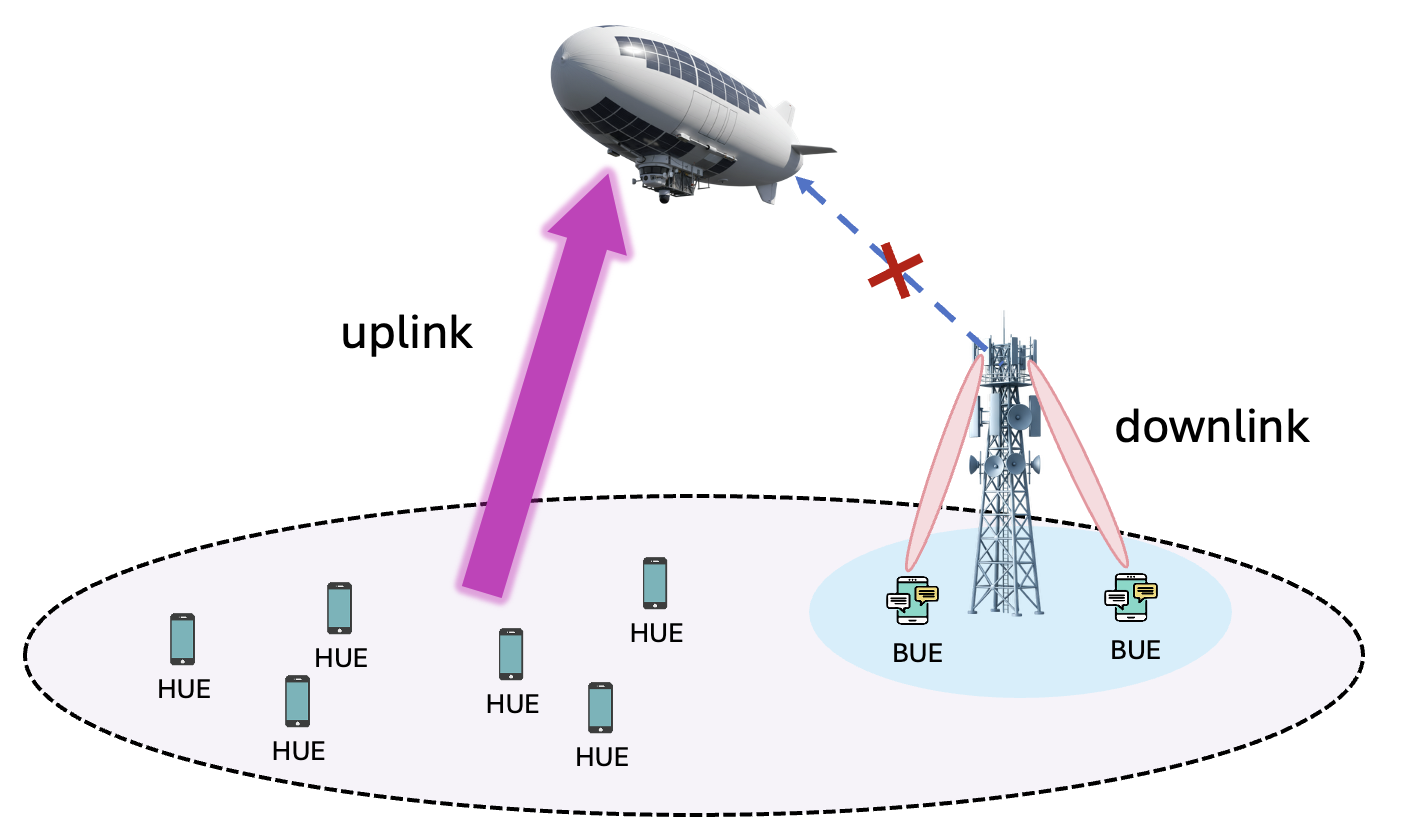}
\label{Fig:SM_HAPS_UL}
}
\caption{Illustration of the two transmission phases in the considered RTDD HAPS--ground network.}
\label{Fig:SM}
\end{figure*}
As shown in Fig.~\ref{Fig:SM}, we consider an integrated HAPS--TN consisting of one HAPS and one terrestrial BS. The HAPS is equipped with $M$ antennas and serves  $K$ single-antenna HAPS-served user equipments (HUEs). The HUEs are uniformly distributed within the HAPS coverage area excluding the BS coverage region. The terrestrial BS is equipped with $N$ antennas and serves  $U$ single-antenna BS-served user equipments (BUEs).

\subsection{Transmission Protocol}
The proposed transmission protocol is illustrated in Fig.~\ref{Fig:trans_protocol}. We adopt a co-channel RTDD protocol \cite{hoydis2013making}, where the HAPS and terrestrial BS operate in opposite transmission directions over the same time-frequency resources.
During the first transmission interval $T_1$, the HAPS operates in the downlink (DL) while the terrestrial BS operates in the uplink (UL). During the second transmission interval $T_2$, the transmission directions are reversed, i.e., the HAPS operates in the UL and the BS operates in the DL.
\begin{figure}[h]
\centering
\includegraphics[width=3.5in]{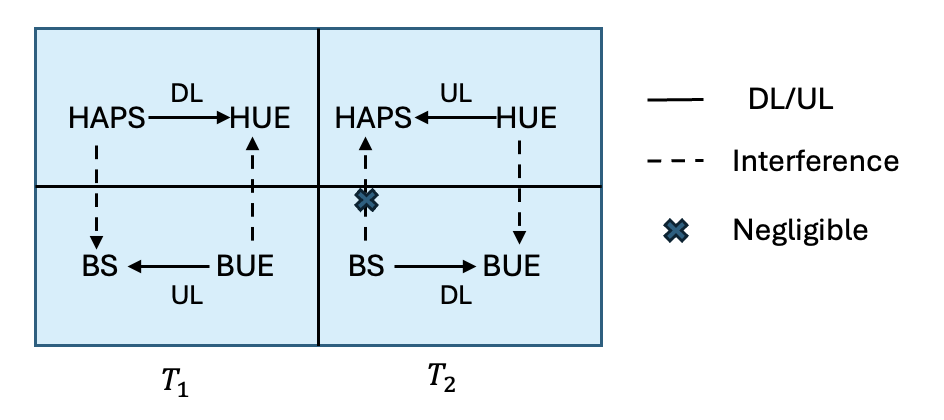}
\caption{Illustration of the transmission protocol. The exchange of information within each tier takes place in reverse order, i.e., the HAPS is in the DL mode (HAPS → HUE) when the BS operates in the UL (BS ← BUE), and vice versa.}
\label{Fig:trans_protocol}
\end{figure}
\par
The advantages of the previously described co-channel RTDD transmission protocol are listed as follows.
\begin{itemize}
    \item In the first time interval $T_1$, co-channel interference from the HAPS downlink transmission to the BS receiver can be avoided
    by constraining the HAPS precoder to lie in the null space of the channel between the HAPS antennas and the BS.
    \item In the second time interval $T_2$, the HAPS experiences co-channel interference from the BS, which can be negligible since the BS antennas are down-tilted toward the ground, while the HAPS is located at a high elevation angle and therefore lies outside the BS main radiation lobe.
\end{itemize}

\subsection{Channel Model}
In this subsection, we introduce the channel models considered in the system.

\subsubsection{HAPS-to-UE Channels} 
Owing to the dominant line-of-sight (LoS) propagation between the HAPS and ground users $l=1,\cdots,K$, the HAPS-to-UE $l$ channel is modeled as
\begin{equation}
\label{Eq:channel_HAPS}
\mathbf{h}_l^{(\rm HAPS)} = \sqrt{\beta_l^{(\rm HAPS)}}\mathbf h_{l,\mathrm{HAPS}}^{\rm LoS},
\end{equation}
where \(\beta_l^{(\rm HAPS)}\) denotes the large-scale channel gain. The large-scale channel gain expressed in dB is given by
$
\beta_{l,\mathrm{dB}}^{(\rm HAPS)} = G_{\mathrm{UE}} - PL_{l,\mathrm{dB}} + G_{E,\mathrm{max}},
$
where \(G_{\rm UE}\), \(PL_{l,\rm dB}\), and \(G_{E,\max}\) denote the UE antenna gain, free-space path loss, and maximum antenna element gain, respectively.
\par
The LoS channel vector is expressed element-wise as
\begin{equation}
[\mathbf h_{l,\mathrm{HAPS}}^{\rm LoS}]_p 
\!\!=\!\! e^{-j 2\pi d_{3D,l}/\lambda}\,
\sqrt{F_p(\phi_l^{\mathrm{LoS}},\theta_l^{\mathrm{LoS}})}\,[\mathbf{a}_{\mathrm{tx}}^H(\phi_l^{\mathrm{LoS}},\theta_l^{\mathrm{LoS}})]_p,
\end{equation}
where $d_{3D,l}$ is the 3D link distance, $\lambda$ is the wavelength, $F_p(\cdot)$ denotes the antenna 
element pattern, and $\mathbf{a}_{\mathrm{tx}}(\cdot)$ is the transmit array response, while $\phi_l^{\mathrm{LoS}}$ and $\theta_l^{\mathrm{LoS}}$ are the azimuth and zenith angle for LoS direction.
\subsubsection{BS-to-UE Channels}
The BS-to-UE channel is modeled as a Rician fading model consisting of deterministic LoS and non-line-of-sight (NLoS) components, i.e.,
\begin{equation}
\label{Eq:channel}
\mathbf{h}_l^{(\rm BS)}
=
\sqrt{\beta_l^{\rm BS}}
\left(
\sqrt{\frac{\kappa_l}{\kappa_l+1}}
\mathbf{h}_l^{\mathrm{LoS}}
+
\sqrt{\frac{1}{\kappa_l+1}}
\mathbf{h}_l^{\mathrm{NLoS}}
\right),
\end{equation}
where \(\beta_l^{\rm BS}\) and \(\kappa_l\) denote the path-loss and the Rician \(K\)-factor associated with UE \(l\), respectively.

The LoS component is given by
\begin{equation}
\mathbf{h}_l^{\mathrm{LoS}}
=
\exp\!\left(-j2\pi\frac{d_{3D,l}}{\lambda}\right)
\sqrt{g_E\!\left(\phi_l^{\rm LoS},\theta_l^{\rm LoS}\right)}
\mathbf{a}_{\rm tx}^{H}
\!\left(\phi_l^{\rm LoS},\theta_l^{\rm LoS}\right),
\end{equation}
The NLoS component is modeled in two different ways.
The first model directly adopts the standardized 3GPP stochastic channel model specified in TR~38.901 \cite{3gpp_tr38901},
\begin{equation}
\label{Eq:NLoS_3GPP}
\mathbf{h}_l^{\mathrm{NLoS}}
=
\mathbf{h}_{\mathrm{3gpp},l}^{\mathrm{NLoS}},
\end{equation}
where \(\mathbf{h}_{\mathrm{3gpp},l}^{\mathrm{NLoS}}\) is generated according to the 3GPP channel generation procedure.
\par
To facilitate analytical performance analysis, we further construct a Kronecker channel model whose spatial correlation is extracted from the empirical 3GPP channel realizations. Specifically, the correlation matrix is estimated as
\begin{equation}
\label{Eq:R_3gpp}
\mathbf{R}_{l}^{\mathrm{3gpp}}
=
\frac{1}{N_s}
\sum_{i=1}^{N_s}
\mathbf{h}_{\mathrm{3gpp},l}^{\mathrm{NLoS}}(i)
\left(
\mathbf{h}_{\mathrm{3gpp},l}^{\mathrm{NLoS}}(i)
\right)^H.
\end{equation}
where $N_s$ is the number of independent realizations of \(\mathbf{h}_{\mathrm{3gpp},l}^{\mathrm{NLoS}}\).
\par
The resulting Kronecker channel model is
\begin{equation}
\label{Eq:NLoS_Kronecker}
\mathbf{h}_l^{\mathrm{NLoS}}
=
\left(
\mathbf{R}_{l}^{\mathrm{3gpp}}
\right)^{1/2}
\mathbf{z}_l,
\end{equation}
where \(\mathbf{z}_l \sim \mathcal{CN}(\mathbf{0},\mathbf{I})\).
This model preserves the spatial correlation characteristics of the 3GPP channel while providing a tractable analytical representation for the subsequent RMT analysis.

\subsubsection{HUEs to BUEs}
The interference channel from HUE \(k\) to BUE \(u\) is modeled as a Rayleigh fading channel,
\begin{equation}
h_{ku}=\sqrt{\beta_{ku}}z_{ku}
\end{equation}
where \(\beta_{ku}\) denotes the large-scale fading coefficient accounting for path loss and penetration loss, and \(z_{ku}\sim\mathcal{CN}(0,1)\).


\section{Large System Analysis for RTDD}
\label{section:Large System Analysis}
This section jointly introduces the beamforming vectors employed reverse TDD and analyzes their resulting performance. The derivations are presented separately for the two transmission intervals, with the beamforming design and the corresponding SINR analysis developed together for each case. Since the HAPS links are dominated by deterministic LoS propagation, their uplink and downlink SINRs admit closed-form expressions. In contrast, the terrestrial BS links experience Rician fading, making the corresponding SINRs random quantities. The proposed framework can be evaluated using either the empirical 3GPP channel model or the analytical Kronecker channel model. Unless otherwise specified, all RMT-based analysis in this section relies on the Kronecker channel model.

 \subsection{Analysis for Transmission Interval $T_1$}
During the first transmission interval $T_1$, the HAPS operates in the DL and serves the HUEs, while the terrestrial BS operates in the UL and receives signals from the BUEs. The performance of the two links is analyzed separately in the following.
\subsubsection{HAPS Downlink}
The received signal at HUE $k=1,\cdots, K$ is given by
\begin{equation}
\begin{aligned}
y_k^{(\rm HUE)} 
=&\sum_{i\in \mathcal{K}}
\sqrt{\frac{P^{(\rm HAPS)}}{K}} (\mathbf{h}_k^{(\rm HAPS)})^H \mathbf{w}_i^{(\rm HAPS)} x_i^{(\rm HAPS)}\\
+& \sum_{u\in \mathcal{U}} \sqrt{p_u} h_{ku} x_u + n_k ,
\end{aligned}
\end{equation}
where $\mathbf{h}_k^{(\rm HAPS)} \in \mathbb{C}^{M\times 1}$ denotes the channel vector from the HAPS to HUE $k$, and $\mathbf{w}_k^{(\rm HAPS)} \in \mathbb{C}^{M\times 1}$ is the corresponding precoding vector, whereas the scalar $h_{ku}$ represents the channel coefficient from BUE $u$ to HUE $k$, 
$P^{(\mathrm{HAPS})}$ denotes the transmit power of HAPS, and $p_u$ denotes the transmit power of BUE $u$. The transmitted symbols satisfy $x_i^{(\rm HAPS)} \sim \mathcal{CN}(0,1)$ and $x_u \sim \mathcal{CN}(0,1)$. The additive white gaussian noise is modeled as $n_k \sim \mathcal{CN}(0,\sigma^2)$.
\par
Let $\mathbf{W}^{(\rm HAPS)}=[\mathbf{w}_1^{(\rm HAPS)},\mathbf{w}_2^{(\rm HAPS)}, \cdots, \mathbf{w}_K^{(\rm HAPS)}]\in \mathbb{C}^{M\times K}$ be the precoding matrix in HAPS. The SINR at HUE $k=1,\cdots,K$ is given as:
\begin{equation}
\label{eq:HAPS_DL_SINR}
\begin{aligned}
&\rm{SINR}_k^{(\rm HUE,DL)}=\\
&\frac{\frac{P^{(\rm HAPS)}}{K}|(\hat{\mathbf{h}}_k^{(\rm HAPS)})^H\mathbf{w}_k^{(\rm HAPS)}|^2}{\sum_{\substack{i \in \mathcal{K} \\ i\neq k}}\frac{P^{(\rm HAPS)}}{K}|(\hat{\mathbf{h}}_k^{(\rm HAPS)})^H\mathbf{w}_i^{(\rm HAPS)}|^2+\sum_{u \in \mathcal{U}}p_u|\hat{h}_{ku}|^2+1}
\end{aligned}
\end{equation}
where $\hat{\mathbf{h}}_k^{(\rm HAPS)}= \sqrt{\rho_k^{(\rm HAPS)}}\mathbf{h}_{k,\rm HAPS}^{\mathrm{LoS}}$ with $\rho_k^{(\rm HAPS)}=\frac{\beta_k^{(\rm HAPS)}}{\sigma^2}$, $\hat{h}_{ku}=\sqrt{\rho_{ku}}z_{ku}$, with $\rho_{ku}=\frac{\beta_{ku}}{\sigma^2}$. 
\par
Considering that the strong LoS link between the HAPS and the BS leads to severe HAPS-to-BS interference,  the HAPS adopts linear precoding and sacrifices one spatial degree of freedom to null its interference toward the BS while simultaneously serving the HUEs.
Because the HAPS--BS channel is LoS-dominated, the channel matrix is of rank one and can be represented by the vector $\mathbf h_{HB}$. The corresponding projection matrix is
\begin{equation}
\mathbf{T} =
\mathbf{I}_M -
\frac{\mathbf{h}_{HB}\mathbf{h}_{HB}^H}{\mathbf{h}_{HB}^H\mathbf{h}_{HB}}.
\label{eq:T_projection}
\end{equation}
The effective channel after null-space projection is $\mathbf{U}=\mathbf{T}\hat{\mathbf{H}}^{(\rm HAPS)}_{\rm HUE}\in \mathbb{C}^{M\times K}$, where   $\hat{\mathbf{H}}^{(\rm HAPS)}_{\rm HUE}=[\hat{\mathbf{h}}_1^{(\rm HAPS)},\hat{\mathbf{h}}_2^{(\rm HAPS)},\cdots, \hat{\mathbf{h}}_K^{(\rm HAPS)}]$.
To eliminate the intra-cell interference among the HUEs, ZF precoding is adopted. The ZF precoder is constructed as
$$
\widetilde{\mathbf W}^{(\rm HAPS)}=\mathbf{U}(\mathbf{U}^H\mathbf{U})^{-1},
$$
The normalized transmit beamforming matrix is then given by
$
\mathbf W^{(\rm HAPS)}
= \widetilde{\mathbf W}^{(\rm HAPS)} \mathbf \Lambda,
$
where
$
\mathbf\Lambda
=\operatorname{diag}
\left(
\frac1{\|\widetilde{\mathbf w}_1^{(\rm HAPS)}\|},
\ldots,
\frac1{\|\widetilde{\mathbf w}_K^{(\rm HAPS)}\|}
\right),
$
\par
\begin{assumption}
\label{assump:HUE_interference}
 There exists $\rho_{\rm min}$ and $\rho_{\rm max}$ such that the normalized large-scale fading coefficients satisfy
\begin{equation}
\frac{\rho_{\rm min}}{U}\leq \min_{\substack{u=1,\cdots,U\\ k=1,\cdots,K}}\rho_{ku}\leq\max_{\substack{u=1,\cdots,U\\ k=1,\cdots,K}}\rho_{ku}<\frac{\rho_{\rm max}}{U}
\end{equation}
and there exists a constant $P_{\max}>0$ such that
\begin{equation}
0<p_i\le P_{\max},\qquad \forall i \in \mathcal{U}\cup\mathcal{K}.
\end{equation}
\end{assumption}
\begin{remark}
\label{remark:1}
Before continuing, it is worth highlighting that Assumption~\ref{assump:HUE_interference} is introduced solely for analytical convenience. In particular, the scaling $\rho_{ku}=\mathcal O(1/U)$ is not intended to represent a physical requirement on the HUEs-to-BUEs users interference links. Rather, it should be viewed as a technical device that ensures that the aggregate interference remains well behaved as $U$ grows, thereby enabling the application of the law of large numbers in the subsequent asymptotic analysis. The assumption therefore serves as a mathematical tool for deriving the large-system characterization, rather than as a constraint imposed by the underlying propagation model.
\end{remark}

Since the HAPS--HUE channels are deterministic, the only random term in
$\mathrm{SINR}_k^{(\rm HUE,DL)}$ is the aggregate uplink interference from the BUEs.
Under Assumption~\ref{assump:HUE_interference} and assuming that $\{z_{ku}\}_{u=1}^{U}$ are independent, the following holds as $U$ grows to infinity:
\begin{equation}
\sum_{u=1}^{U}
p_u|\hat h_{ku}|^2-\sum_{u=1}^U
p_u\rho_{ku}\xrightarrow[U\rightarrow\infty]{a.s.}
0.
\end{equation}
Therefore, assuming further that the spectral norm of  $\frac{1}{\sqrt{K}}\hat{\bf H}^{({\rm HAPS})}$ remains bounded as $K$ and $M$ increase, we have that
\begin{equation}
\rm{SINR}_k^{(\rm HUE,DL)}-\overline{\rm{SINR}}_k^{(\rm HUE,DL)} \xrightarrow[U\rightarrow\infty]{a.s.} 0
\end{equation}
where
\begin{equation}\
\begin{aligned}
&\overline{\rm{SINR}}_k^{(\rm HUE,DL)}=\\
&\frac{\frac{P^{(\rm HAPS)}}{K}|(\hat{\mathbf{h}}_k^{\rm HAPS)})^H\mathbf{w}_k^{(\rm HAPS)}|^2}{\sum_{i\neq k}\frac{P^{(\rm HAPS)}}{K}|(\hat{\mathbf{h}}_k^{(\rm HAPS)})^H\mathbf{w}_i^{(\rm HAPS)}|^2+\sum_{u\in \mathcal{U}}p_u\rho_{ku}+1}
\end{aligned}
\end{equation}

\subsubsection{BS Uplink}
The received signal at the BS is expressed as
\begin{equation}
\begin{aligned}
\mathbf{y}^{(\rm BS)} 
=& \sum_{u\in \mathcal{U}} \sqrt{p}_u\,\mathbf{h}_u^{(\rm BS)} x_u \\
&+ \sum_{i\in \mathcal{K}} \sqrt{\frac{P^{(\rm HAPS)}}{K}}\, \mathbf{H}_{{\rm HB}}^H \mathbf{w}_i^{(\rm HAPS)} x_i^{(\rm HAPS)} 
+ \mathbf{n}^{(\rm BS)} ,
\end{aligned}
\end{equation}
where $\mathbf{h}_u^{(\rm BS)}\in \mathbb{C}^{N\times1}$ is the channel vector from BUE $u$ to BS, $\mathbf{H}_{{\rm HB}}\in \mathbb{C}^{M\times N}$ is the channel matrix from the HAPS to BS, and
$\mathbf{n}^{(\rm BS)}\sim \mathcal{CN}(0,\sigma_{BS}^2\mathbf{I}_{N})$ represent the additive white Gaussian noise at BS.
\par
Since the HAPS precoder lies in the null space of the HAPS--BS channel, the cross-tier interference from the HAPS is completely eliminated. Consequently, the BS only needs to detect the uplink signals transmitted by the BUEs. To this end, the BS employs the linear MMSE receiver. Accordingly, the uplink SINR of BUE $u$ is given by
\begin{equation}
\begin{aligned}
{\rm SINR}_u^{(\rm BS,UL)}=&\frac{p_u|\mathbf{g}_u^H\tilde{\mathbf{h}}_u^{(\rm BS)}|^2}{\sum_{\substack{j\in \mathcal{U} \\ j\neq u}}p_j|\mathbf{g}_u^H\tilde{\mathbf{h}}_j^{(\rm BS)}|^2+\|\mathbf{g}_u\|^2} \\
=&p_u(\tilde{\mathbf{h}}_u^{(\rm BS)})^H\mathbf\Psi_u\tilde{\mathbf{h}}_u^{(\rm BS)}.
\label{eq:SINR}
\end{aligned}
\end{equation}
where $\tilde{\mathbf{h}}_u^{(\rm BS)}=\sqrt{\tilde{\rho}_u^{(\rm BS)}/\beta_u^{(\rm BS)}}\mathbf{h}_u^{(\rm BS)}$ with $\tilde{\rho}_u^{(\rm BS)}=\frac{\beta_u^{(\rm BS)}}{\sigma^2_{BS}}$, 
the MMSE combining vector is
$$\mathbf{g}_u=[\sum_{j\in \mathcal{U}} p_j\tilde{\mathbf{h}}_j^{(\rm BS)}(\tilde{\mathbf{h}}_j^{(\rm BS)})^H+  \mathbf{I}_N]^{-1}\tilde{\mathbf{h}}_u^{(\rm BS)}$$ 
and 
$$
\mathbf\Psi_u=[\sum_{j\neq u}p_j\tilde{\mathbf{h}}_j^{(\rm BS)}(\tilde{\mathbf{h}}_j^{(\rm BS)})^H+  \mathbf{I}_N]^{-1}
$$
Unlike the deterministic HAPS links, the BS channels experience spatially correlated Rician fading, making the uplink SINR a random quantity. The normalized Kronecker BS channel is modeled as
\begin{equation}
\tilde{\mathbf h}_{l}^{(\rm BS)}
=
\sqrt{\tilde{\rho}_{l}^{(\rm BS)}}
\left(
\sqrt{\frac{1}{1+\kappa_l}}
\mathbf h_l^{\rm NLoS}
+
\sqrt{\frac{\kappa_l}{1+\kappa_l}}
\mathbf h_l^{\rm LoS}
\right),
\end{equation}
where $\mathbf{h}_l^{\text{NLoS}}=\mathbf{R}_l^{\frac{1}{2}}\mathbf{z}_l$.
For each BUE $i=1,\cdots,U$, define
\begin{equation}
\boldsymbol{\Omega}_{i}
=
\frac{
p_{i}\tilde{\rho}_{i}^{(\rm BS)}
}{
1+\kappa_{i}
}
\mathbf R_{i},
\end{equation}
and
\begin{equation}
\mathbf a_i
=
\sqrt{
\frac{
p_i\tilde{\rho}_i^{(\rm BS)}
\kappa_i
}{
1+\kappa_i
}
}
\mathbf h_i^{\rm LoS}.
\end{equation}
Collecting the deterministic LoS components, define
\begin{equation}
\mathbf A
=
\left[
\mathbf a_1,\ldots,\mathbf a_U
\right].
\end{equation}
To facilitate the asymptotic performance analysis, we adopt the following assumption, which is introduced for technical reasons, as discussed in Remark~\ref{remark:1}
\begin{assumption}
\label{ass:regime}
There exists $\tilde{\rho}_{\rm min}$ and $\tilde{\rho}_{\rm max}$ such that:
\begin{equation}
\tilde{\rho}_{\rm min}\leq \min_{u=1,\cdots,U}\tilde{\rho}_u^{(\rm BS)}\leq \max_{u=1,\cdots,U}\tilde{\rho}_u^{(\rm BS)}<\tilde{\rho}_{\rm max}
\end{equation}
Furthermore, the transmit powers are uniformly bounded, i.e.,
\begin{equation}
\sup_{u=1,\cdots,U} p_u < \infty.
\end{equation}
Finally, we assume that as $M$ and $U$ grow large:
$$
\limsup_{M,U}\|\frac{\bf A}{\sqrt{U}}\|<\infty
$$
and 
$$
0<\lim\inf\min_{u=1,\cdots,U}\frac{1}{M}{\rm tr}\ \boldsymbol{\Omega}_u\leq \max_{u=1,\cdots,U}\|\boldsymbol{\Omega}_u\|<\infty
$$
\end{assumption}
As seen in \eqref{eq:SINR}, the SINR of BUE $u$ is expressed as as a standard quadratic form involving the matrix $\boldsymbol{\Psi}_u$, which has a resolvent-like structure extensively studied in the random matrix theory literature \cite{liu2026asymptotic,Kammoun2019RZF,kammoun2020asymptotic,sanguinetti2018theoretical,couillet2011random}.   In the large-system regime, this quadratic form converges to a deterministic quantity that depends only on the large-scale channel statistics. To pave the way for the derivation of this deterministic equivalent, we introduce the following deterministic quantities, defined as the solutions to a system of fixed-point equations. More specifically, for $z\in \mathbb{R}^{-}$, define $\{\delta_i(z)\}_{i=1}^U$ and $\{\tilde{\delta}_i(z)\}_{i=1}^{U}$ the solutions to the following system of equations:
\begin{equation}
\begin{cases}
\delta_{i}(z)
=
\operatorname{tr}
\left(
\boldsymbol{\Omega}_{i}
\boldsymbol{\Theta}(z)
\right),
&
i=1,\ldots,U,
\\[1mm]
\widetilde{\delta}_{i}(z)
=
[\widetilde{\boldsymbol{\Theta}}(z)]_{i,i},
&
i=1,\ldots,U.
\end{cases}
\end{equation}
where $\boldsymbol{\Theta}(z)$ and $\widetilde{\boldsymbol{\Theta}}(z)$ are given by:
\begin{equation}
\boldsymbol{\Theta}(z)
=
\left(
\mathbf F^{-1}(z)
-
z\mathbf A
\widetilde{\mathbf F}(z)
\mathbf A^H
\right)^{-1},
\end{equation}

\begin{equation}
\widetilde{\boldsymbol{\Theta}}(z)
=
\left(
\widetilde{\mathbf F}^{-1}(z)
-
z\mathbf A^H
\mathbf F(z)
\mathbf A
\right)^{-1}.
\end{equation}
with
\begin{equation}
\mathbf F(z)
=
\left[
-z
\left(
\mathbf I_N
+
\sum_{i=1}^{U-1}
\boldsymbol{\Omega}_{i}
\widetilde{\delta}_{i}(z)
\right)
\right]^{-1},
\end{equation}

\begin{equation}
\widetilde{\mathbf F}(z)
=
\operatorname{diag}
\left(
\frac{-1}
{z\left(1+\delta_{i}(z)\right)};
\,1\leq i\leq U
\right),
\end{equation}
Then, by combining Lemma~\ref{lem:stieltjes_LoS} and Proposition~\ref{prop:LOO_quadratic_form}, we obtain under the regime specified at Assumption \ref{ass:regime},
\begin{equation}
\mathrm{SINR}_u^{(\rm BS,UL)}
-
\overline{\mathrm{SINR}}_u^{(\rm BS,UL)}
\xrightarrow[N,U\to\infty]{a.s.}
0,
\end{equation}
where
\begin{equation}
\label{eq:DE_UL_BUE}
\overline{\mathrm{SINR}}_u^{(\rm BS,UL)}
=
\delta_u(-1)
+
\frac{
\left(1+\delta_u(-1)\right)q_u(-1)
}{
1+\delta_u(-1)-q_u(-1)
}.
\end{equation}
\begin{equation}
   q_u= \mathbf a_u^H
\boldsymbol{\Theta}(-1)
\mathbf a_u
\end{equation}

 \subsection{Analysis for Transmission Interval $T_2$}
During the second transmission interval $T_2$, the HAPS operates in the uplink and receives signals from the HUEs, while the terrestrial BS performs downlink transmission to the BUEs. Since the HAPS and BS employ different transmission and reception strategies, the performance of the two links is analyzed separately in the following.
\subsubsection{HAPS Uplink}
Due to the downward-oriented antenna pattern of the terrestrial BS and the large propagation distance between the BS and the HAPS, the interference from the BS downlink transmission to the HAPS is negligible. Therefore, the received signal at the HAPS is given by
\begin{equation}
\mathbf{y}^{(\rm HAPS)}
=\sum_{k\in \mathcal{K}}\sqrt{p_k}\mathbf{h}_k^{(\rm HAPS)}x_k
+\mathbf{n}^{(\rm HAPS)},
\end{equation}
where $\mathbf h_k^{(\rm HAPS)}$ denotes the channel between HUE $k$ and the HAPS, $p_k$ is the transmit power of HUE $k$, $x_k\sim\mathcal{CN}(0,1)$ is the transmitted symbol, and
$\mathbf n^{(\rm HAPS)}
\sim \mathcal{CN}(\mathbf0,\sigma_{\rm HAPS}^2\mathbf I_M)$
is the additive white Gaussian noise vector.
\par
The HAPS employs the linear MMSE receiver for uplink signal detection. Let
$\mathbf V = [\mathbf v_1,\ldots,\mathbf v_K]$ denote the receive combining matrix. 
The uplink SINR of the $k$-th HUE is then given by
\begin{equation}
\begin{aligned}
&{\rm SINR}_k^{(\rm HAPS,UL)}=\frac{p_k|\mathbf{v}_k^H\bar{\mathbf{h}}_k^{(\rm HAPS)}|^2}{\sum_{\substack{i\in \mathcal{K}\\ i\neq k}}p_i|\mathbf{v}_k^H\bar{\mathbf{h}}_i^{(\rm HAPS)}|^2+\|\mathbf{v}_k\|^2} \\
& = p_k
(\bar{\mathbf h}_k^{(\mathrm{HAPS})})^H(
\sum_{i\neq k}p_i\bar{\mathbf h}_i^{(\mathrm{HAPS})}(\bar{\mathbf h}_i^{(\mathrm{HAPS})})^H
+\mathbf I_M
)^{-1}
\bar{\mathbf h}_k^{(\mathrm{HAPS})},
\end{aligned}
\end{equation}
where
$\bar{\mathbf h}_k^{(\rm HAPS)}
=\sqrt{\bar{\rho}_k^{(\rm HAPS)}}\mathbf h_{k,\rm HAPS}^{\rm LoS}$,
with
$\bar{\rho}_k^{(\rm HAPS)}
=\beta_k^{(\rm HAPS)}/\sigma_{\rm HAPS}^2$.
The MMSE receive vector for HUE $k$ is given by
\begin{equation}
\mathbf v_k=
\left(
\sum_{i\in \mathcal{K}}
p_i
\bar{\mathbf h}_i^{(\rm HAPS)}
(\bar{\mathbf h}_i^{(\rm HAPS)})^H
+\mathbf I_M
\right)^{-1}
\bar{\mathbf h}_k^{(\rm HAPS)}.
\end{equation}
Since the HAPS channels consist only of deterministic LoS components, the uplink SINRs are deterministic for a given user deployment and can be computed directly from the above expressions without invoking random matrix theory.

\subsubsection{BS Downlink}
The received signal at BUE $u$
\begin{equation}
\begin{aligned}
y_u^{(\rm BUE)} 
=& \sum_{j\in \mathcal{U}} 
\sqrt{\frac{P^{(\rm BS)}}{U}} (\mathbf{h}_u^{(\rm BS)})^H \mathbf{w}_j^{(\rm BS)} x_j^{(\rm BS)}\\
 &+ \sum_{k\in \mathcal{K}} \sqrt{p_k} h_{ku} x_k
+ n_u ,
\end{aligned}
\end{equation}
where $\mathbf h_u^{(\rm BS)}$ denotes the channel vector between the BS and the $u$-th BUE, $\mathbf w_j^{(\rm BS)}$ is the precoding vector intended for BUE $j$, $P^{(\rm BS)}$ is the total BS transmit power, $x_j^{(\rm BS)}\sim\mathcal{CN}(0,1)$ is the transmitted symbol, $h_{ku}$ denotes the interference channel from HUE $k$ to BUE $u$, and $n_u\sim\mathcal{CN}(0,\sigma^2)$ is the additive white Gaussian noise.
\par
Let
$
\mathbf W^{(\rm BS)}
=
\left[
\mathbf w_1^{(\rm BS)},
\ldots,
\mathbf w_U^{(\rm BS)}
\right]
$ denote the ZF precoding matrix. The downlink SINR of BUE $u$ is:
\begin{equation}
\label{eq:BUE_DL_SINR}
\begin{aligned}
&\rm{SINR}_u^{(\rm{BUE},DL)}=\\
&\frac{\frac{P^{(\rm{BS})}}{U}|(\hat{\mathbf{h}}_u^{(\rm BS)})^H\mathbf{w}_u^{(\rm BS)}|^2}{\sum_{\substack{j\in \mathcal{U}\\ j\neq u}}\frac{P^{(\rm BS)}}{U}|(\hat{\mathbf{h}}_u^{(\rm BS)})^H\mathbf{w}_j^{(\rm BS)}|^2+\sum_{k \in \mathcal{K}} p_k|\hat{h}_{ku}|^2+1}
\end{aligned}
\end{equation}
where $\hat{\mathbf{h}}_u^{(\rm BS)}= \sqrt{\rho_u^{(\rm BS)}/\beta_u^{(\rm BS)}}\mathbf{h}_u^{(\rm BS)}$ with $\rho_u^{(\rm BS)}=\frac{\beta_u^{(\rm BS)}}{\sigma^2}$, and $\hat{h}_{ku}=\sqrt{\rho_{ku}}z_{ku}$, with $\rho_{ku}=\frac{\beta_{ku}}{\sigma^2}$.
 Define the normalized BS--BUE channel matrix as
\begin{equation}
\hat{\mathbf H}_{\rm BUE}^{(\rm BS)}
=
\left[
\hat{\mathbf h}_1^{(\rm BS)},
\ldots,
\hat{\mathbf h}_U^{(\rm BS)}
\right]
\in\mathbb C^{N\times U}.
\end{equation}
The unnormalized ZF precoding matrix is given by
\begin{equation}
\widetilde{\mathbf W}^{(\rm BS)}
=
\hat{\mathbf H}_{\rm BUE}^{(\rm BS)}
\left[
(\hat{\mathbf H}_{\rm BUE}^{(\rm BS)})^H
\hat{\mathbf H}_{\rm BUE}^{(\rm BS)}
\right]^{-1},
\end{equation}
where
$\widetilde{\mathbf W}^{(\rm BS)}
=
[\widetilde{\mathbf w}_1^{(\rm BS)},\ldots,
\widetilde{\mathbf w}_U^{(\rm BS)}]$.
The normalized ZF beamforming vector for BUE $u$ is then
\begin{equation}
\mathbf w_u^{(\rm BS)}
=
\frac{\widetilde{\mathbf w}_u^{(\rm BS)}}
{\|\widetilde{\mathbf w}_u^{(\rm BS)}\|},
\qquad u\in\mathcal U.
\end{equation}
Since ZF precoding eliminates the intra-cell interference among the BUEs, equation \eqref{eq:BUE_DL_SINR} reduces to
\begin{equation}
\mathrm{SINR}_u^{(\rm BUE,DL)}
=
\frac{
\frac{P^{(\rm BS)}}{U}
\left|
(\hat{\mathbf h}_u^{(\rm BS)})^H
\mathbf w_u^{(\rm BS)}
\right|^2
}{
\sum_{k\in \mathcal{K}}
p_k|\hat h_{ku}|^2+1
}.
\end{equation}

To facilitate the large-system analysis, we employ the analytical Kronecker channel model introduced in Section~II, which preserves the spatial correlation characteristics of the 3GPP channel while providing a tractable analytical representation. 
In addition, we define that
\begin{equation}
\hat{\boldsymbol{\Omega}}_u
=
\frac{\rho_u^{(\rm BS)}\mathbf R_u}{1+\kappa_u},
\qquad
\hat{\mathbf{a}}_u
=
\frac{1}{\sqrt U}
\sqrt{
\frac{\rho_u^{(\rm BS)}\kappa_u}{1+\kappa_u}
}
\mathbf h_u^{\rm LoS}.
\end{equation}
Collecting the deterministic LoS components of all BUEs, define
\begin{equation}
\hat{\mathbf A}
=
\left[
\hat{\mathbf a}_1,\ldots,\hat{\mathbf a}_U
\right].
\end{equation}

\par
The zero-forcing (ZF) precoder requires the inversion of the
channel Gram matrix. To ensure that this inversion is
well-defined and to characterize its asymptotic behavior,
we impose the following assumption on the system dimensions
and the user channel covariance matrices. In particular,
the assumption guarantees that the number of transmit
antennas exceeds the number of users and that each
covariance matrix has a sufficiently large number of
eigenvalues bounded away from zero.

\begin{assumption}
\label{ass:zeroforcing}
Assume that $\lim\inf\frac{N}{U}>1$, and that there exists $\varphi$ and $\eta$ such that:
$$
\min_{i=1,\cdots,U}\frac{1}{U}{\rm tr} \hat{\boldsymbol{\Omega}}_i(\hat{\boldsymbol{\Omega}}_i+\varphi{\bf I}_N)^{-1}\geq 1+\eta
$$
\end{assumption}
\begin{theorem}
\label{thm:BUE_DL_SINR}
Under Assumptions~\ref{assump:HUE_interference} and~\ref{ass:zeroforcing},
and assuming that $\rho_u^{(\rm BS)}$, $\hat{\boldsymbol{\Omega}}_u$, and
$\hat{\mathbf A}$ satisfy the conditions in Assumption~\ref{ass:regime}, 
the downlink SINR of BUE $u\in\mathcal U$ satisfies
\begin{equation}
\mathrm{SINR}_u^{(\rm BUE,DL)}
-
\overline{\mathrm{SINR}}_u^{(\rm BUE,DL)}
\xrightarrow[N,U,K\to\infty]{a.s.}
0,
\end{equation}
where
\begin{equation}
\label{eq:DE_SINR_BUE}
\overline{\mathrm{SINR}}_u^{(\rm BUE,DL)}
=
\frac{
P^{(\rm BS)}
\stackrel{\circ}{\mu}_u
}{
\displaystyle
\sum_{k\in\mathcal K}
p_k\rho_{ku}
+1
},
\end{equation}
with
\begin{equation}
\stackrel{\circ}{\mu}_u
=
\underline{\delta}_u
+
\frac{
\underline{\delta}_u
\underline{q}_u
}{
\underline{\delta}_u
-
\underline{q}_u
}.
\end{equation}
Here,
\begin{equation}
\underline{\delta}_u
=
\frac{1}{U}
\operatorname{tr}
\left(
\hat{\boldsymbol{\Omega}}_u
\underline{\boldsymbol{\Theta}}
\right),
\qquad
\underline q_u
=
\hat{\mathbf{a}}_u^H
\underline{\boldsymbol{\Theta}}
\hat{\mathbf{a}}_u,
\end{equation}
The matrix $\underline{\boldsymbol{\Theta}}$ is the full-system ZF
deterministic equivalent characterized by the fixed-point equations as follows:
\begin{equation}
\label{eq:fixed_point_Theta}
\begin{cases}
\displaystyle
\underline{\delta}_{i}
=
\frac{1}{U}
\operatorname{tr}
\left(
\hat{\boldsymbol{\Omega}}_{i}
\underline{\boldsymbol{\Theta}}
\right),
&
i=1,\ldots,U,
\\[2mm]
\displaystyle
\underline{\widetilde{\delta}}_{i}
=
\left[
\underline{\widetilde{\boldsymbol{\Theta}}}
\right]_{i,i},
&
i=1,\ldots,U.
\end{cases}
\end{equation}
where
\begin{equation}
\underline{\mathbf F}
=
\left(
\mathbf I_N
+
\frac{1}{U}
\sum_{i=1}^{U}
\hat{\boldsymbol{\Omega}}_{i}
\underline{\widetilde{\delta}}_{i}
\right)^{-1},
\end{equation}

\begin{equation}
\underline{\widetilde{\mathbf F}}
=
\operatorname{diag}
\left(
\frac{1}{\underline{\delta}_{i}};
\, i=1,\ldots,U
\right),
\end{equation}

\begin{equation}
\underline{\boldsymbol{\Theta}}
=
\left(
\underline{\mathbf F}^{-1}
+
\hat{\mathbf A}
\underline{\widetilde{\mathbf F}}
\hat{\mathbf A}^H
\right)^{-1},
\end{equation}

and
\begin{equation}
\underline{\widetilde{\boldsymbol{\Theta}}}
=
\left(
\underline{\widetilde{\mathbf F}}^{-1}
+
\hat{\mathbf A}^H
\underline{\mathbf F}
\hat{\mathbf A}
\right)^{-1}.
\end{equation}

\end{theorem}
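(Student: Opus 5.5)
Under ZF, $(\hat{\mathbf h}_u^{(\rm BS)})^H\widetilde{\mathbf w}_u^{(\rm BS)}=1$, so $|(\hat{\mathbf h}_u^{(\rm BS)})^H\mathbf w_u^{(\rm BS)}|^2=1/\|\widetilde{\mathbf w}_u^{(\rm BS)}\|^2=1/[(\hat{\mathbf H}^H\hat{\mathbf H})^{-1}]_{u,u}$. By the Schur complement,
\[
\frac{1}{[(\hat{\mathbf H}^H\hat{\mathbf H})^{-1}]_{u,u}}=(\hat{\mathbf h}_u^{(\rm BS)})^H\mathbf P_u^{\perp}\hat{\mathbf h}_u^{(\rm BS)},
\]
where $\mathbf P_u^{\perp}$ is the orthogonal projector onto the complement of the span of $\{\hat{\mathbf h}_j^{(\rm BS)}\}_{j\neq u}$. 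The useful-signal term therefore equals $\frac{P^{(\rm BS)}}{U}\hat{\mathbf h}_u^H\mathbf P_u^\perp\hat{\mathbf h}_u$. The denominator is handled exactly as in the HAPS downlink analysis: under Assumption~\ref{assump:HUE_interference}, the strong law of large numbers for independent, uniformly bounded-variance terms gives $\sum_k p_k|\hat h_{ku}|^2-\sum_k p_k\rho_{ku}\to0$ a.s. Since the limit is bounded, and the denominator is at least $1$, the theorem reduces to showing that $\frac1U\hat{\mathbf h}_u^H\mathbf P_u^\perp\hat{\mathbf h}_u-\stackrel{\circ}{\mu}_u\to0$ a.s.

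To analyse this quadratic form, I write $\mathbf P_u^\perp=\lim_{z\to0^-}(-z)\big(\frac1U\sum_{j\neq u}\hat{\mathbf h}_j\hat{\mathbf h}_j^H-z\mathbf I_N\big)^{-1}$. This is a regularized-ZF (RZF) type resolvent for a Rician, Kronecker-correlated channel with normalized parameters $\hat{\boldsymbol\Omega}_j/U$ and LoS matrix $\hat{\mathbf A}$. For fixed $z<0$, Lemma~\ref{lem:stieltjes_LoS} and Proposition~\ref{prop:LOO_quadratic_form} (the same tools used for the uplink MMSE result \eqref{eq:DE_UL_BUE}) yield a deterministic equivalent for $(-z)\frac1U\hat{\mathbf h}_u^H\mathbf Q_u(z)\hat{\mathbf h}_u$. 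This equivalent takes the form $\delta_u(z)+\frac{(1+\delta_u(z))q_u(z)}{1+\delta_u(z)-q_u(z)}$, with $\delta_u,q_u,\boldsymbol\Theta(z)$ solving the fixed-point system with $z$ in place of $-1$. The structure comes from splitting $\hat{\mathbf h}_u$ into its NLoS and LoS parts. The NLoS–NLoS term concentrates by the trace lemma. The cross terms vanish. The LoS–LoS term involves $\mathbf a_u^H\mathbf Q_u\mathbf a_u$, and removing the contribution of user $u$'s own LoS from $\boldsymbol\Theta$ produces the rational correction. Rescaling everything by $-z$ is the step that turns $\mathbf F,\widetilde{\mathbf F}$ into the matrices $\underline{\mathbf F}$ and $\underline{\widetilde{\mathbf F}}$ of the statement.

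The remaining work is the passage $z\to0^-$. I would set $\underline\delta_i=\lim_{z\to0^-}(-z)^{-1}\cdot$(appropriately scaled $\delta_i(z)$), or equivalently track $\delta_i(z)$ and $-z\tilde\delta_i(z)$. I would then show that these limits exist and solve \eqref{eq:fixed_point_Theta}, noting that $\widetilde{\mathbf F}$ becomes $\mathrm{diag}(1/\underline\delta_i)$ because $1+\delta_i(z)\sim\delta_i(z)$ blows up like $1/(-z)$. Substituting into the rational expression gives $\stackrel{\circ}{\mu}_u=\underline\delta_u+\underline\delta_u\underline q_u/(\underline\delta_u-\underline q_u)$. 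Assumption~\ref{ass:zeroforcing} is what makes this limit legitimate. The condition $\liminf N/U>1$ ensures that the smallest eigenvalue of $\frac1U\hat{\mathbf H}^H\hat{\mathbf H}$ stays bounded away from zero almost surely. The trace condition gives uniform lower bounds $\underline\delta_i\ge c>0$, and guarantees existence and uniqueness of the limiting fixed point via a standard-interference-function or contraction argument.

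\textbf{Main obstacle.} The main obstacle is interchanging the limits $z\to0^-$ and $N,U\to\infty$. I would handle this by proving that both the random quantity and its deterministic equivalent are Lipschitz in $z$ on $[-\epsilon,0]$, uniformly in the dimensions. For the random side, I would use the almost-sure lower bound on the eigenvalues of the Gram matrix, which gives $\|(\frac1U\hat{\mathbf H}_{-u}^H\hat{\mathbf H}_{-u}-z\mathbf I)^{-1}\|$ uniformly bounded. For the deterministic side, I would use the bounds on $\underline\delta_i$ together with the bounds on $\|\hat{\mathbf A}\|$ and $\|\hat{\boldsymbol\Omega}_i\|$ from Assumption~\ref{ass:regime}. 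An $\epsilon/3$ argument then concludes. Finally, the denominator $\underline\delta_u-\underline q_u$ must stay bounded away from zero; I would verify this by identifying it as the deterministic equivalent of a strictly positive Schur complement.
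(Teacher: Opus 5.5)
Your proposal is correct and follows essentially the paper's route: your Schur-complement identity yields directly the quantity $\frac1U\hat{\mathbf h}_u^H\mathbf P_u^\perp\hat{\mathbf h}_u$ that the paper reaches via its regularized-inverse limit and Proposition~\ref{prop:ZF_limit_equivalence}. Both arguments then obtain $\stackrel{\circ}{\mu}_u$ by scaling the fixed-$z$ leave-one-out equivalent of Proposition~\ref{prop:LOO_quadratic_form} by $-z$ and letting $z\to0$ (the paper's Proposition~\ref{prop:LOO_quadratic_form_ZF}), with the HUE interference handled by the law of large numbers. Your explicit treatment of the interchange of $z\to0$ and $N,U\to\infty$ goes beyond the paper, which takes this limit formally; note, though, that the quantities with finite nonzero limits are $(-z)\delta_i(z)$ and $\widetilde\delta_i(z)$ (not $-z\widetilde\delta_i(z)$), and that the almost-sure lower bound on the smallest eigenvalue of $\frac1U\hat{\mathbf H}^H\hat{\mathbf H}$ does not follow from $\liminf N/U>1$ alone but requires the trace condition of Assumption~\ref{ass:zeroforcing} together with a genuine smallest-eigenvalue (no-outlier) result for this Rician Gram model.
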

\begin{proof}
See Appendix~\ref{app:proof_of_BUE_DL_SINR}.
\end{proof}

\section{TDD Scheme}
\label{section:TDD}
In this section, we consider conventional TDD solely as a benchmark for comparison with the proposed reverse-TDD scheme. Under conventional TDD, the HAPS and the terrestrial BS operate in the same transmission direction over the same time-frequency resources. Consequently, both tiers transmit simultaneously in the DL and receive simultaneously in the UL, resulting in cross-tier interference between the HAPS and terrestrial networks. We do not develop a separate theoretical analysis for the conventional TDD configuration; instead, its performance is evaluated numerically in the next section and used as a reference for assessing the benefits of the proposed reverse-TDD operation.
\subsection{Downlink Transmission}
During the DL phase, the HAPS serves the HUEs while the terrestrial BS simultaneously serves the BUEs. The received signal at HUE $k\in\mathcal K$ is given by
\begin{equation}
\begin{aligned}
y_k^{(\rm HUE, TDD)} 
=& \sum_{i\in \mathcal{K}}
\sqrt{\frac{P^{(\rm HAPS)}}{K}} (\mathbf{h}_k^{(\rm HAPS)})^H \mathbf{w}_i^{(\rm HAPS,TDD)} x_i^{(\rm HAPS)}\\
+& \sum_{u\in \mathcal{U}}\sqrt{\frac{P^{(\rm BS)}}{U}} (\mathbf{h}_k^{(\rm BS)})^H \mathbf{w}_u^{(\rm BS,TDD)} x_u^{(\rm BS)} 
+ n_k
\end{aligned}
\end{equation}
where $\mathbf{h}_k^{(\rm HAPS)}\in\mathbb{C}^{M\times1}$ and
$\mathbf{h}_k^{(\rm BS)}\in\mathbb{C}^{N\times1}$ denote the channel vectors
from the HAPS and the terrestrial BS to HUE $k$, respectively.
$\mathbf{w}_i^{(\rm HAPS,TDD)}\in\mathbb{C}^{M\times1}$ and
$\mathbf{w}_u^{(\rm BS,TDD)}\in\mathbb{C}^{N\times1}$ denote the ZF precoding
vectors for HUE $i$ and BUE $u$, respectively.
The transmitted symbols satisfy
$x_i^{(\rm HAPS)}\sim\mathcal{CN}(0,1)$ and
$x_u^{(\rm BS)}\sim\mathcal{CN}(0,1)$.
The additive white Gaussian noise is modeled as
$n_k\sim\mathcal{CN}(0,\sigma^2)$.

\par
Similarly, the received signal at BUE $u$ is given by
\begin{equation}
\begin{aligned}
y_u^{(\rm BUE, TDD)} 
&= \sum_{j\in \mathcal{U}} 
\sqrt{\frac{P^{(\rm BS)}}{U}} (\mathbf{h}_u^{(\rm BS)})^H \mathbf{w}_j^{(\rm BS,TDD)} x_j^{(\rm BS)} \\
+& \sum_{i \in \mathcal{K}}
\sqrt{\frac{P^{(\rm HAPS)}}{K}} (\mathbf{h}_u^{(\rm HAPS)})^H \mathbf{w}_i^{(\rm HAPS,TDD)} x_i^{(\rm HAPS)}
+ n_u ,
\end{aligned}
\end{equation}
where $\mathbf{h}_u^{(\rm BS)}\in\mathbb{C}^{N\times1}$ and
$\mathbf{h}_u^{(\rm HAPS)}\in\mathbb{C}^{M\times1}$ denote the channel vectors
from the terrestrial BS and the HAPS to BUE $u$, respectively, and
$n_u\sim\mathcal{CN}(0,\sigma^2)$ denotes the additive white Gaussian noise.
The DL SINR of HUE $k$ is
\begin{equation}
\mathrm{SINR}_{k}^{(\mathrm{HUE},\mathrm{DL}, \mathrm{TDD})}
=
\frac{
\frac{P^{(\mathrm{HAPS})}}{K}
|(\mathbf{h}_{k}^{(\mathrm{HAPS})})^{H}
\mathbf{w}_{k}^{(\mathrm{HAPS,TDD})}|^{2}}
{I_k^{(\mathrm{HUE},\mathrm{DL})}
+\sigma^2
}.
\end{equation}

where 
\begin{equation}
\begin{aligned}
I_k^{(\mathrm{HUE},\mathrm{DL})}=&\frac{P^{(\mathrm{HAPS})}}{K}
\sum_{\substack{i\in\mathcal K\\i\neq k}}
|
(\mathbf{h}_{k}^{(\mathrm{HAPS})})^{H}
\mathbf w_{i}^{(\mathrm{HAPS,TDD})}
|^{2}\\
+&
\frac{P^{(\mathrm{BS})}}{U}\sum_{j\in\mathcal U}
|(\mathbf h_{k}^{(\mathrm{BS})})^{H}
\mathbf w_{j}^{(\mathrm{BS,TDD})}|^{2}
\end{aligned}
\end{equation}
Similarly, the DL SINR of BUE $u$ is

\begin{equation}
\mathrm{SINR}_{u}^{(\mathrm{BUE},\mathrm{DL},\mathrm{TDD})}
=\frac{
\frac{P^{(\mathrm{BS})}}{U}
|({\mathbf h}_{u}^{(\mathrm{BS})})^{H}
\mathbf w_{u}^{(\mathrm{BS,TDD})}|^{2}
}
{I_{u}^{(\mathrm{BUE},\mathrm{DL})}
+\sigma^2
}.
\end{equation}
where 
\begin{equation}
\begin{aligned}
I_{u}^{(\mathrm{BUE},\mathrm{DL})}= &\frac{P^{(\mathrm{BS})}}{U}
\sum_{\substack{j\in\mathcal U\\j\neq u}}
|({\mathbf h}_{u}^{(\mathrm{BS})})^{H}
\mathbf w_{j}^{(\mathrm{BS,TDD})}|^{2}\\
& +
\frac{P^{(\mathrm{HAPS})}}{K}
\sum_{i\in\mathcal K}|
({\mathbf h}_{u}^{(\mathrm{HAPS})})^{H}
\mathbf w_{i}^{(\mathrm{HAPS,TDD})}|^{2}
\end{aligned}
\end{equation}
\subsubsection{Conventional TDD}
For conventional TDD, the HAPS and terrestrial BS independently design
their ZF precoders using only the channel information of their associated
users. Hence, ZF suppresses intra-tier multi-user interference but does not
explicitly mitigate cross-tier interference.
Define the BS--BUE channel matrix as:
\begin{equation}
\mathbf H_{\rm BUE}^{(\rm BS)}
= \left[ \mathbf h_1^{(\rm BS)}, \ldots, \mathbf h_U^{(\rm BS)}
\right].
\end{equation}
The corresponding unnormalized ZF precoding matrix at the BS is
\begin{equation}
\widetilde{\mathbf{W}}^{(\mathrm{BS},\mathrm{TDD})}
={\mathbf{H}}_{\mathrm{BUE}}^{(\mathrm{BS})}
\left[
({\mathbf{H}}_{\mathrm{BUE}}^{(\mathrm{BS})})^H
{\mathbf{H}}_{\mathrm{BUE}}^{(\mathrm{BS})}
\right]^{-1},
\end{equation}
Let
$\widetilde{\mathbf w}_u^{(\rm BS,\mathrm{TDD})}$
denote its $u$-th column. The normalized precoding vector is
\begin{equation}
\mathbf{w}_u^{(\mathrm{BS},\mathrm{TDD})}
=
\frac{\widetilde{\mathbf{w}}_u^{(\mathrm{BS},\mathrm{TDD})}}
{\left\|
\widetilde{\mathbf{w}}_u^{(\mathrm{BS},\mathrm{TDD})}
\right\|}.
\end{equation}

Similarly, define the HAPS--HUE channel matrix as
\begin{equation}
\mathbf H_{\rm HUE}^{(\rm HAPS)}
=\left[ \mathbf h_1^{(\rm HAPS)}, \ldots, \mathbf h_K^{(\rm HAPS)}
\right].
\end{equation}
The unnormalized ZF precoding matrix at the HAPS is
\begin{equation}
\widetilde{\mathbf{W}}^{(\mathrm{HAPS},\mathrm{TDD})}
={\mathbf{H}}_{\mathrm{HUE}}^{(\mathrm{HAPS})}
\left[
({\mathbf{H}}_{\mathrm{HUE}}^{(\mathrm{HAPS})})^H
{\mathbf{H}}_{\mathrm{HUE}}^{(\mathrm{HAPS})}
\right]^{-1},
\end{equation}
and the normalized beamforming vector for HUE $k$ is
\begin{equation}
\mathbf{w}_k^{(\mathrm{HAPS},\mathrm{TDD})}
=
\frac{\widetilde{\mathbf{w}}_k^{(\mathrm{HAPS},\mathrm{TDD})}}
{\left\|
\widetilde{\mathbf{w}}_k^{(\mathrm{HAPS},\mathrm{TDD})}
\right\|}.
\end{equation}
\subsubsection{Null-Space Aided TDD}
As an interference-mitigation benchmark, we further consider a
null-space-aided TDD (NSA-TDD) scheme. Unlike conventional TDD, the HAPS sacrifices part of its available spatial degrees of freedom to project its transmission onto the null space of the HAPS--BUE channel, thereby suppressing the cross-tier interference toward the BUEs.
Define the HAPS--BUE channel matrix as
\begin{equation}
{\mathbf{H}}_{\rm BUE}^{(\rm HAPS)}
=\left[{\mathbf{h}}_1^{(\rm HAPS)},
\ldots, {\mathbf{h}}_U^{(\rm HAPS)} \right] \in\mathbb{C}^{M\times U}.
\end{equation}
The projection matrix onto the null space of
$\mathbf H_{\rm BUE}^{(\rm HAPS)}$ is given by
\begin{equation}
\begin{aligned}
\mathbf{T}^{(\mathrm{NSA})}
=&
\mathbf{I}_M
-{\mathbf{H}}_{\rm BUE}^{(\rm HAPS)}
\left[
({\mathbf{H}}_{\rm BUE}^{(\rm HAPS)})^H
{\mathbf{H}}_{\rm BUE}^{(\rm HAPS)}
\right]^{-1}\\
&\times
({\mathbf{H}}_{\rm BUE}^{(\rm HAPS)})^H.
\end{aligned}
\end{equation}
The projected HAPS--HUE channel is then
\begin{equation}
\overline{\mathbf H}_{\rm HUE}^{(\rm HAPS)}
=\mathbf{T}^{(\mathrm{NSA})}
\mathbf H_{\rm HUE}^{(\rm HAPS)}.
\end{equation}
Based on this projected channel, the HAPS employs ZF precoding as
\begin{equation}
\widetilde{\mathbf W}^{(\rm HAPS,\rm NSA)}
=
\overline{\mathbf H}_{\rm HUE}^{(\rm HAPS)}
\left[
(\overline{\mathbf H}_{\rm HUE}^{(\rm HAPS)})^H
\overline{\mathbf H}_{\rm HUE}^{(\rm HAPS)}
\right]^{-1}.
\end{equation}
The normalized precoding vector for HUE $k$ is
\begin{equation}
\mathbf w_{k}^{(\rm HAPS,\rm NSA)}
=
\frac{
\widetilde{\mathbf w}_k^{(\rm HAPS,\rm NSA)}
}{
\|
\widetilde{\mathbf w}_k^{(\rm HAPS,\rm NSA)}
\|
}.
\end{equation}
The BS employs the same ZF precoder as in conventional TDD, while the
HAPS precoder is replaced by
$\mathbf W^{(\rm HAPS,NSA)}$.
Accordingly, the HAPS-induced interference to the BUEs is eliminated.

\subsection{Uplink Transmission}
During the UL phase, the HUEs and BUEs simultaneously transmit to the HAPS and terrestrial BS, respectively. As a result, each receiver experiences both intra-tier multi-user interference and cross-tier interference. The received signal at the HAPS is
\begin{equation}
\begin{aligned}
\mathbf{y}^{(\mathrm{HAPS},\mathrm{TDD})}=&
\sum_{k\in \mathcal{K}} \sqrt{p_k}\,\mathbf{h}_k^{(\mathrm{HAPS})} x_k\\
+&\sum_{u\in \mathcal{U}}\sqrt{p_u}\,\mathbf{h}_u^{(\mathrm{HAPS})} x_u+ \mathbf{n}^{(\mathrm{HAPS})},
\end{aligned}
\end{equation}
where $\mathbf{n}^{(\mathrm{HAPS})} \sim \mathcal{CN}(\mathbf{0},\sigma_{\mathrm{HAPS}}^2\mathbf{I})$ denotes the additive noise at the HAPS.
\par
Similarly, the received signal at the BS is given by
\begin{equation}
\begin{aligned}
\mathbf{y}^{(\mathrm{BS},\mathrm{TDD})}=&
\sum_{u\in \mathcal{U}}\sqrt{p_u}\,\mathbf{h}_u^{(\mathrm{BS})} x_u\\
+&
\sum_{k\in \mathcal{K}} \sqrt{p_k}\,\mathbf{h}_k^{(\mathrm{BS})} x_k
+\mathbf{n}^{(\mathrm{BS})}
\end{aligned}
\end{equation}
where $\mathbf{n}^{(\mathrm{BS})} \sim \mathcal{CN}(\mathbf{0},\sigma_{\mathrm{BS}}^2\mathbf{I})$ is the additive noise at the BS.
\par
Let $\mathbf v_k^{(\rm TDD)}$ and
$\mathbf g_u^{(\rm TDD)}$ denote the MMSE combining vectors for HUE $k$ and BUE $u$, respectively. The corresponding uplink SINRs are given by:
\begin{equation}
\mathrm{SINR}_k^{(\mathrm{HUE},\mathrm{UL},\mathrm{TDD})}=
\frac{p_k|(\mathbf{v}^{(\rm TDD)}_k)^H{\mathbf{h}}_k^{(\mathrm{HAPS})}|^2}{I_k^{(\mathrm{HUE},\mathrm{UL})}+\sigma_{HAPS}^2\|\mathbf{v}^{(\rm TDD)}_k\|^2
}.
\end{equation}
where
\begin{equation}
\begin{aligned}
I_k^{(\mathrm{HUE},\mathrm{UL})}=&\sum_{\substack{i\in \mathcal{K}\\i\neq k}}
p_i|(\mathbf{v}^{(\rm TDD)}_k)^H{\mathbf{h}}_i^{(\mathrm{HAPS})}|^2\\
+&\sum_{u \in \mathcal{U}} p_u|(\mathbf{v}^{(\rm TDD)}_k)^H{\mathbf{h}}_u^{(\mathrm{HAPS})}|^2
\end{aligned}
\end{equation}
Similarly, the uplink SINR for BUE $u$ at the BS is given by
\begin{equation}
\mathrm{SINR}_u^{(\mathrm{BUE},\mathrm{UL},\mathrm{TDD})}
=
\frac{p_u|(\mathbf{g}^{(\rm TDD)}_u)^H {\mathbf{h}}_u^{(\mathrm{BS})}|^2}{
I_u^{(\mathrm{BUE},\mathrm{UL})}
+ \sigma_{BS}^2\|\mathbf{g}^{(\rm TDD)}_u\|^2}
.
\end{equation}
where 
\begin{equation}
I_u^{(\mathrm{BUE},\mathrm{UL})}=\sum_{\substack{j \in \mathcal{U}\\j\neq u}}
p_j|(\mathbf{g}^{(\rm TDD)}_u)^H {\mathbf{h}}_j^{(\mathrm{BS})}
|^2 + \sum_{k\in \mathcal{K}}p_k|(\mathbf{g}^{(\rm TDD)}_u)^H {\mathbf{h}}_k^{(\mathrm{BS})}|^2
\end{equation}
For the uplink reception, neither conventional TDD nor NSA-TDD performs cross-tier receive interference suppression. Accordingly, each receiver constructs its MMSE combining vector using only the CSI of its associated users.
The MMSE
combining vector for HUE $k$ at the HAPS is
\begin{equation}
\begin{aligned}
\mathbf{v}^{(\rm TDD)}_k
=&(
\sum_{i\in \mathcal{K}}p_i{\mathbf{h}}_i^{(\mathrm{HAPS})}({\mathbf{h}}_i^{(\mathrm{HAPS})})^H
+ \sigma^2_{HAPS}\mathbf{I})^{-1}
{\mathbf{h}}_k^{(\mathrm{HAPS})}.
\end{aligned}
\end{equation}
Likewise, the MMSE combining vector for BUE $u$ at the terrestrial BS is
\begin{equation}
\begin{aligned}
\mathbf{g}^{(\rm TDD)}_u
=&(
\sum_{j\in \mathcal{U}}p_j{\mathbf{h}}_j^{(\mathrm{BS})}({\mathbf{h}}_j^{(\mathrm{BS})})^H
+\sigma^2_{BS}\mathbf{I})^{-1}
{\mathbf{h}}_u^{(\mathrm{BS})}.
\end{aligned}
\end{equation}

\section{Numerical Results}
\label{section:Simulation Result}
We consider an integrated HAPS--terrestrial network consisting of one HAPS equipped with $M$ antennas and one terrestrial BS equipped with $N$ antennas. The HAPS covers a circular area with radius $R_{\rm HAPS}$, while the BS serves a local coverage area with radius $R_{\rm BS}$. The $U$ BUEs are uniformly distributed within the BS coverage area, whereas the $K$ HUEs are uniformly distributed within the HAPS coverage area but outside the BS coverage region. 
Unless otherwise specified, the simulation parameters are adopted from the 3GPP specifications in \cite{3gpp_tr38863,3gpp_tr38811} and are summarized in Table~\ref{tab:simulation_parameters}.
\begin{table}[!h]
\centering
\caption{Simulation Parameters}
\label{tab:simulation_parameters}
\begin{tabular}{|p{0.6\columnwidth}|c|}
\hline
\textbf{Parameter} & \textbf{Value} \\ \hline
Carrier frequency $f_c$ & 2 GHz \\ \hline
HAPS altitude & 20 km \\ \hline
BS height & 25 m \\ \hline
UE height & 1.5 m \\ \hline
HAPS coverage radius $R_{\rm HAPS}$ & 50 km \\ \hline
BS coverage radius $R_{\rm BS}$& 5000 m \\ \hline
DL bandwidth & 20 MHz \\ \hline
UL bandwidth & 1.08 MHz \\ \hline
UE transmit power & 23 dBm \\ \hline
UE antenna gain & 0 dBi \\ \hline
UE noise figure & 9 dB \\ \hline
Building penetration loss & 15 dB \\ \hline
BS transmit power & 47 dBm \\ \hline
BS noise figure & 5 dB \\ \hline
HAPS element gain $G_{E,\max}$ & 8 dBi \\ \hline
$\theta_{3\mathrm{dB}}$ & $65^\circ$ \\ \hline
$\phi_{3\mathrm{dB}}$ & $65^\circ$ \\ \hline
HAPS noise figure & 5 dB \\ \hline
Transmit power per HAPS antenna panel & 43 dBm \\ \hline
The number of HAPS antennas & 200 \\ \hline
\end{tabular}
\end{table}

To evaluate the SINR distribution among users, we consider the empirical cumulative distribution function (CDF), defined as
\begin{equation}
F_{\mathrm{SINR}}(\gamma_{\rm th})
=
\mathbb{E}
\left[
\frac{\#\{\mathrm{SINR}\le\gamma_{\rm th}\}}
{N_{\rm UE}}
\right],
\end{equation}
where $\gamma_{\rm th}$ denotes the SINR threshold, $\#\{\cdot\}$ denotes the cardinality of a set, and $N_{\rm UE}$ denotes the number of users in the corresponding tier (i.e., $K$ for HUEs and $U$ for BUEs).

\subsection{Performance Evaluation}
In addition to the conventional TDD and the proposed NSA-TDD schemes, two interference-free reference configurations are considered for comparison. In the HAPS-only configuration, the terrestrial network is completely removed, such that the system consists only of the HAPS and its associated HUEs. Hence, no terrestrial BSs or BUEs are present, and no cross-tier interference arises. Conversely, in the BS-only configuration, the HAPS network is completely removed, and only the terrestrial BS and its associated BUEs are considered. In this case, the HAPS and HUEs are absent, and the terrestrial network operates without cross-tier interference from the HAPS tier.
These two configurations are introduced to isolate the performance of each network in the absence of cross-tier interference and thereby provide reference points for quantifying the performance degradation caused by HAPS-terrestrial coexistence and the extent to which the proposed NSA-TDD scheme mitigates this degradation.

\begin{figure*}[!t]
\centering
\subfigure[CDF of downlink SINR for HUE.]{
\includegraphics[width=2.5in]{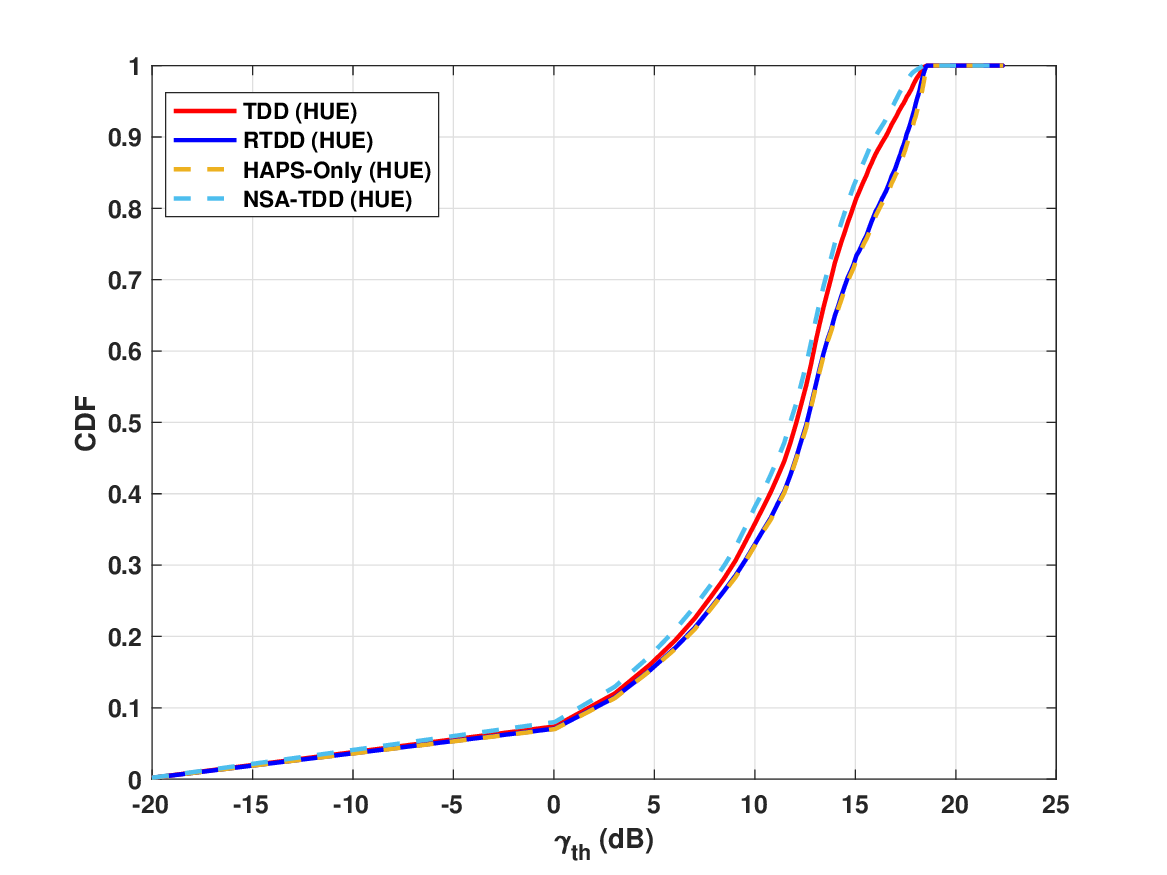}
\label{Fig:HUE_DL}
}
\hspace{-0.4cm}
\subfigure[CDF of downlink SINR for BUE.]{
\includegraphics[width=2.5in]{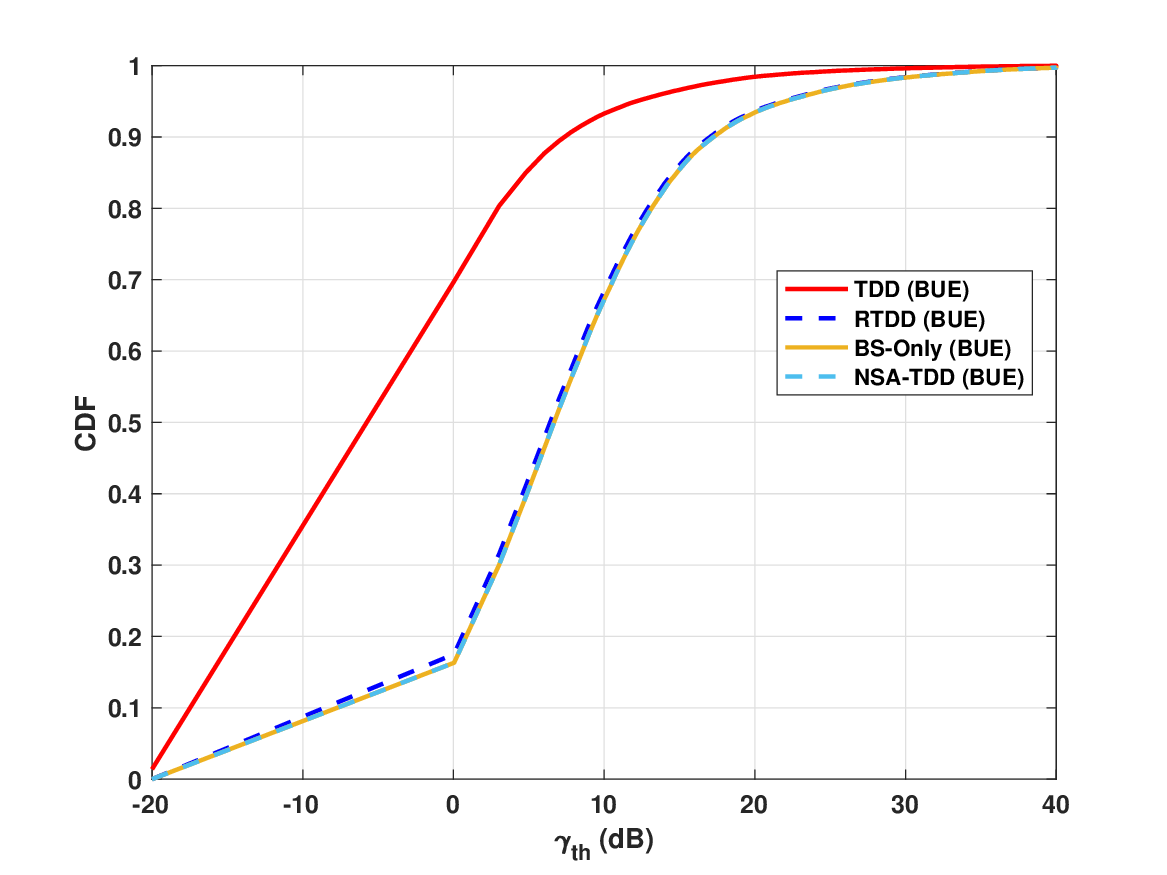}
\label{Fig:BUE_DL}
}

\subfigure[CDF of uplink SINR for HUE.]{
\includegraphics[width=2.5in]{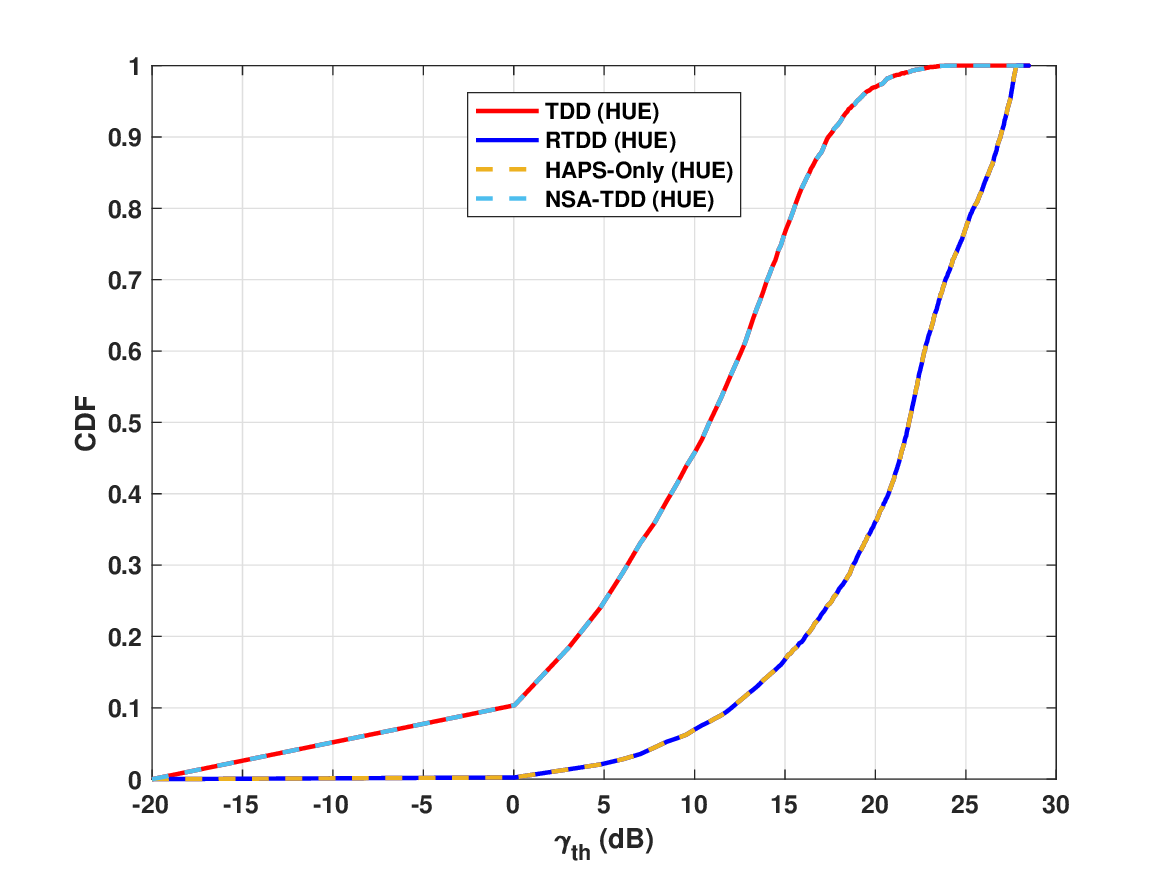}
\label{Fig:HUE_UL}
}
\hspace{-0.4cm}
\subfigure[CDF of uplink SINR for BUE.]{
\includegraphics[width=2.5in]{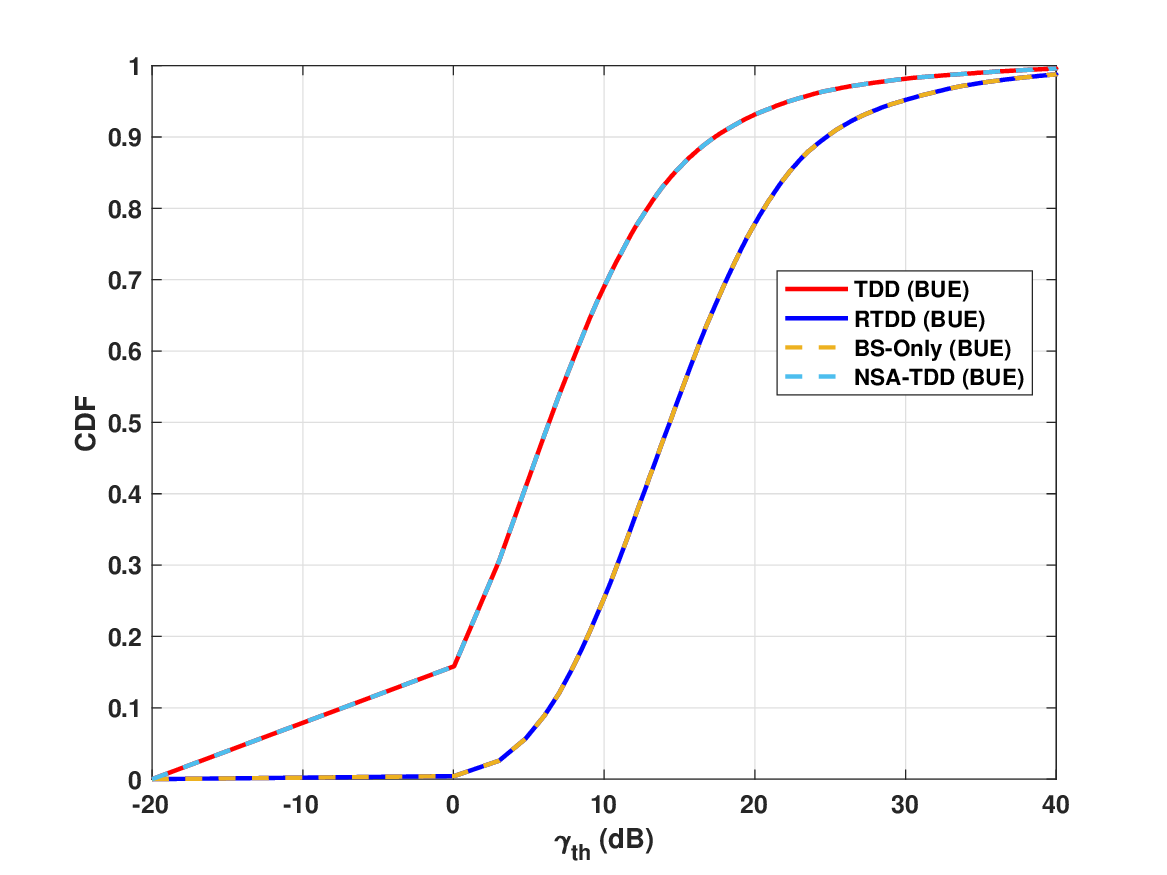}
\label{Fig:BUE_UL}
}

\caption{CDFs of the downlink and uplink SINR under different transmission schemes.}
\label{Fig:RTDD}
\end{figure*}

Fig.~\ref{Fig:HUE_DL} and Fig.~\ref{Fig:BUE_DL} show the CDFs of the downlink SINR for HUEs and BUEs under different transmission schemes. As shown in Fig.~\ref{Fig:HUE_DL}, both the conventional TDD and the proposed NSA-TDD exhibit degraded downlink SINR performance for HUEs compared with the proposed RTDD protocol. For conventional TDD, the performance degradation is caused by the cross-tier interference from the terrestrial BS. 
In contrast, although NSA-TDD suppresses the interference from the HAPS to the BUEs, the null-space projection constrains the HAPS precoder and reduces the available spatial degrees of freedom, resulting in a slight loss in beamforming gain for the HUEs. Moreover, implementing NSA-TDD requires the HAPS to acquire the CSI of the BUEs, which introduces additional channel estimation and signaling overhead.
By comparison, the proposed RTDD protocol experiences negligible cross-tier interference owing to the large propagation loss and building penetration loss between the BS and the HUEs, thereby achieving better downlink SINR performance.

A significant performance improvement is observed for BUEs, as illustrated in Fig.~\ref{Fig:BUE_DL}. Under the conventional TDD scheme, the strong LoS downlink transmission from the HAPS causes severe cross-tier interference to the BUEs, resulting in a substantial degradation in the downlink SINR. In the RTDD protocol, however, the HAPS operates in the uplink while the BS performs downlink transmission. Consequently, the interference from HUE uplink transmissions to the BUEs is negligible due to the large propagation distance and penetration loss. Furthermore, the proposed NSA-TDD eliminates the residual HAPS-induced interference through null-space precoding. As a result, the BUE performance achieved by NSA-TDD is identical to that of the BS-only configuration.
\par
Fig.~\ref{Fig:HUE_UL} and Fig.~\ref{Fig:BUE_UL} show the CDFs of the uplink SINR for HUEs and BUEs under different transmission schemes. It can be observed that the RTDD protocol achieves almost the same uplink SINR performance as the HAPS-only configuration for HUEs and the BS-only configuration for BUEs. This is because, under the RTDD protocol, the cross-tier interference is significantly attenuated by the large propagation distance and building penetration loss, making its impact on the uplink reception negligible. Consequently, the uplink performance is nearly identical to that achieved in the absence of the other network tier.
In contrast, the proposed NSA-TDD exhibits the same uplink performance as the conventional TDD scheme since the null-space projection is applied only to the HAPS downlink precoder and does not affect the uplink transmission or the MMSE receivers.

\begin{figure*}[!t]
\centering
\subfigure[Average downlink SINR of HUE.]{
\includegraphics[width=2.5in]{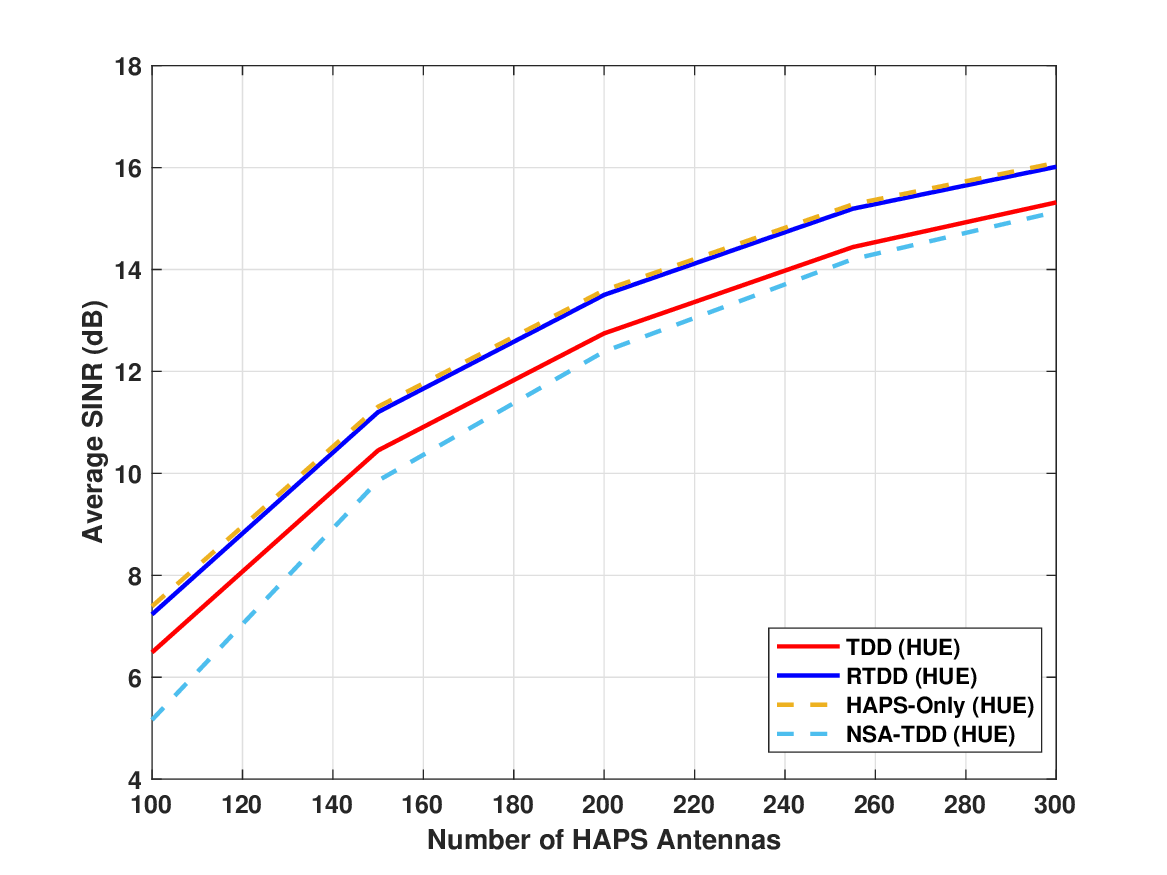}
\label{Fig:HUE_DL_HAPS_AN}
}
\hspace{-0.4cm}
\subfigure[Average downlink SINR of BUE.]{
\includegraphics[width=2.5in]{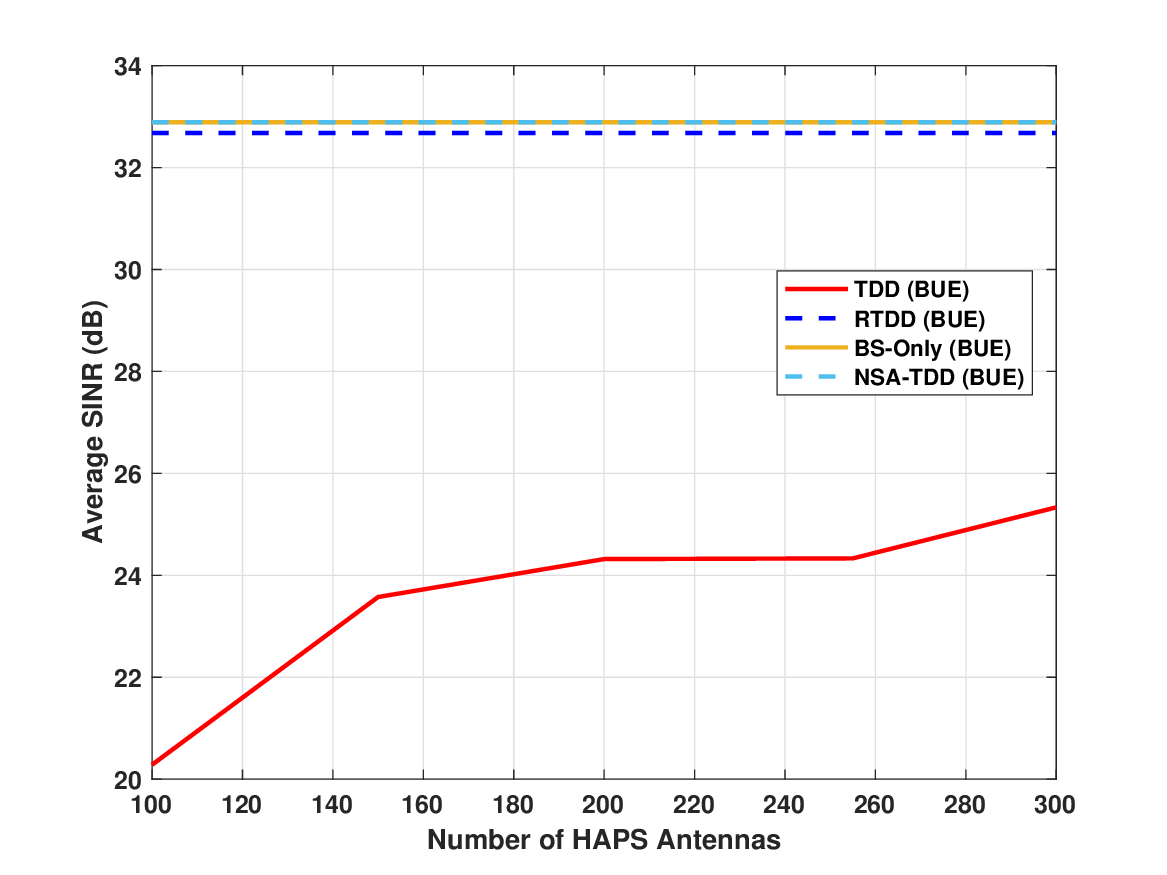}
\label{Fig:BUE_DL_HAPS_AN}
}


\subfigure[Average uplink SINR of HUE.]{
\includegraphics[width=2.5in]{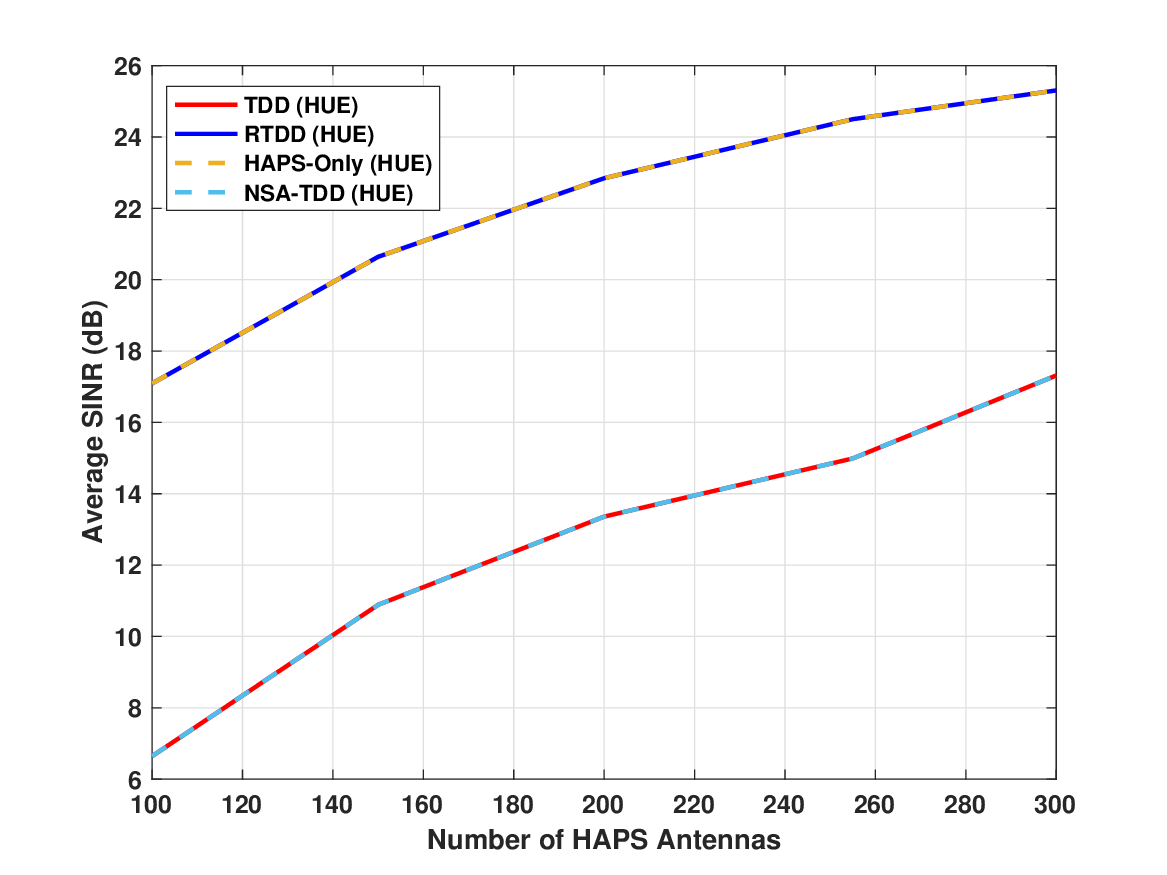}
\label{Fig:HUE_UL_HAPS_AN}
}
\hspace{-0.4cm}
\subfigure[Average uplink SINR of BUE.]{
\includegraphics[width=2.5in]{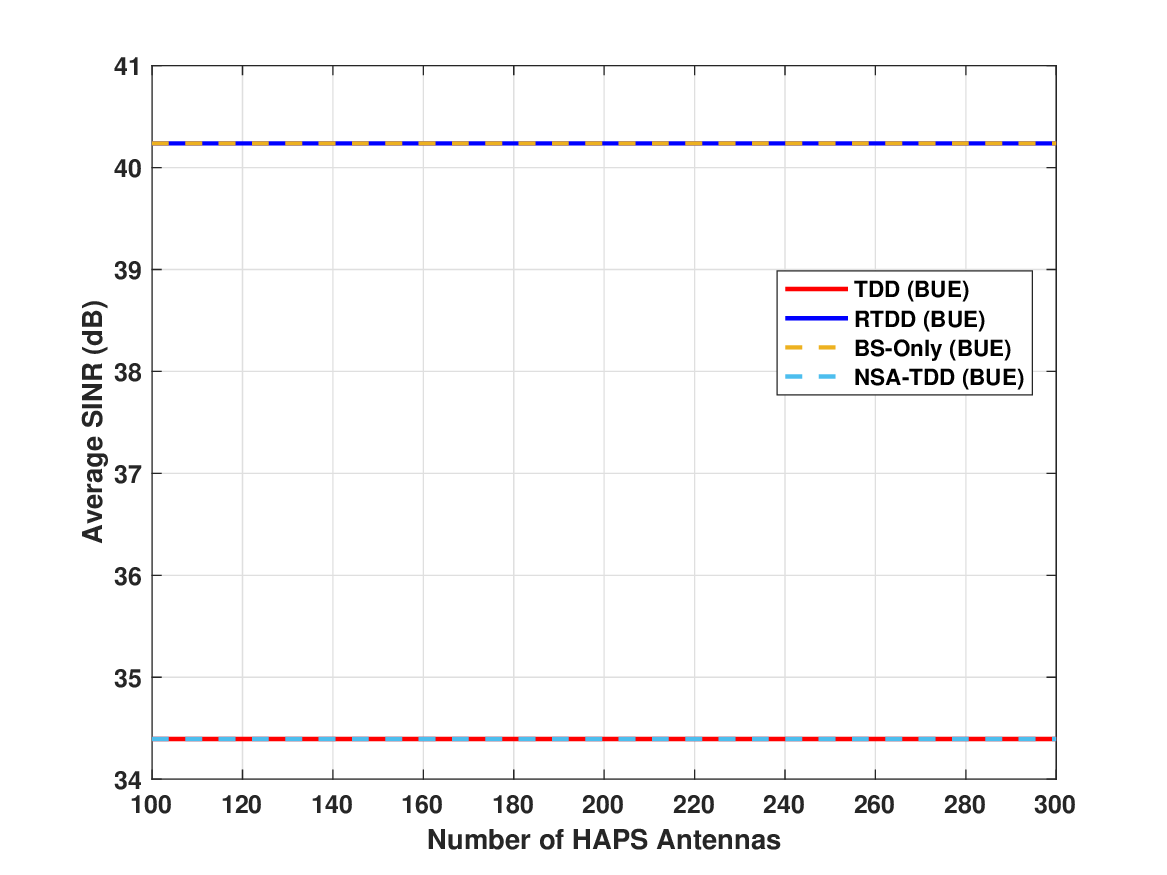}
\label{Fig:BUE_UL_HAPS_AN}
}
\caption{Average downlink and uplink SINR versus the number of HAPS antennas.}
\label{Fig:HAPS_AN}
\end{figure*}
We next investigate the impact of the number of HAPS antennas on the average SINR, as shown in Fig.~\ref{Fig:HAPS_AN}. For the downlink, Fig.~\ref{Fig:HUE_DL_HAPS_AN} shows that the average SINR of HUEs increases with the number of HAPS antennas due to the increased spatial degrees of freedom, with the proposed RTDD achieving performance close to the interference-free HUE-only benchmark. In contrast, Fig.~\ref{Fig:BUE_DL_HAPS_AN} shows that the average SINR of BUEs remains unchanged under RTDD, BS-only, and NSA-TDD, since these schemes eliminate HAPS-induced downlink interference. However, under conventional TDD, the average SINR gradually increases because a larger HAPS antenna array reduces the interference leakage toward the BUEs. For the uplink, Fig.~\ref{Fig:HUE_UL_HAPS_AN} shows that the average SINR of HUEs also increases with the number of HAPS antennas owing to the enhanced receive combining capability, whereas the average SINR of BUEs in Fig.~\ref{Fig:BUE_UL_HAPS_AN} remains unchanged since the BS reception is independent of the HAPS antenna array size.

\begin{figure*}[!t]
\centering

\subfigure[Average downlink SINR of HUE.]{
\includegraphics[width=2.5in]{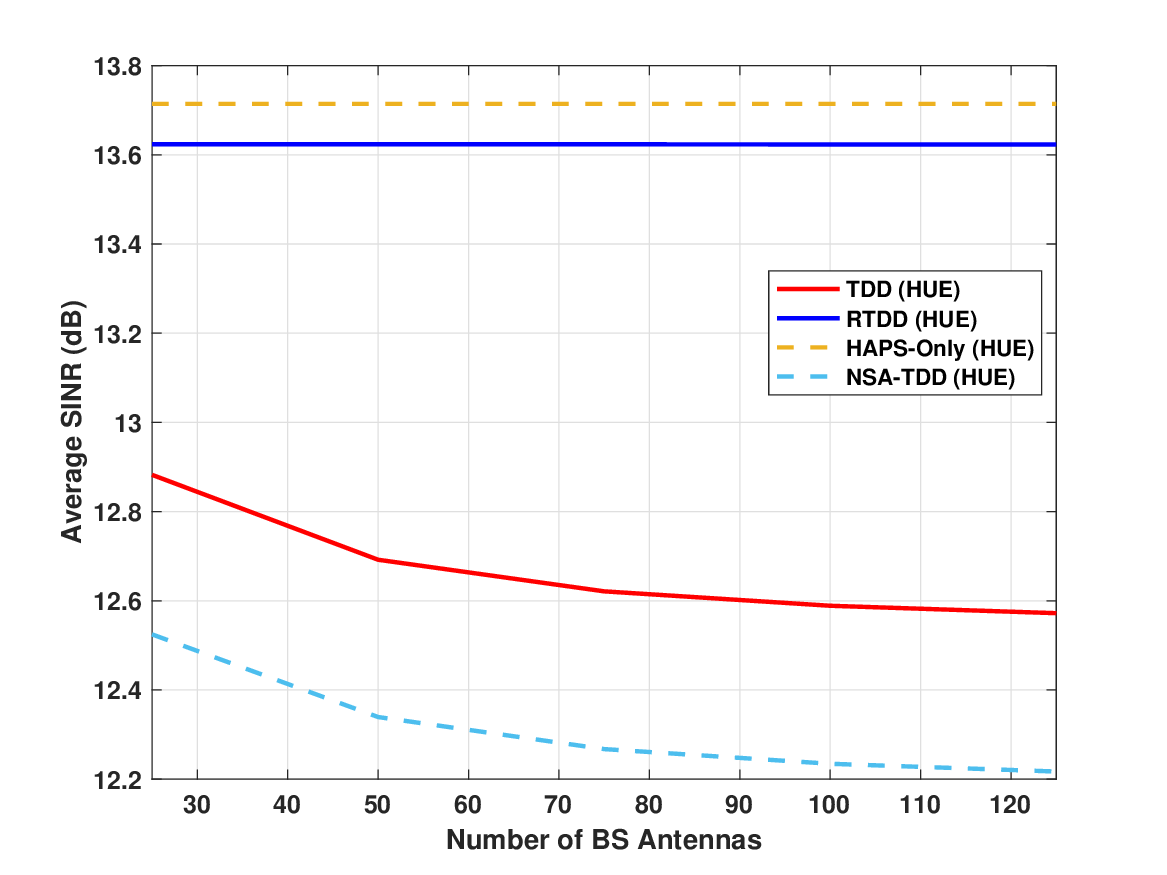}
\label{Fig:HUE_DL_BS_AN}
}
\hspace{-0.4cm}
\subfigure[Average downlink SINR of BUE.]{
\includegraphics[width=2.5in]{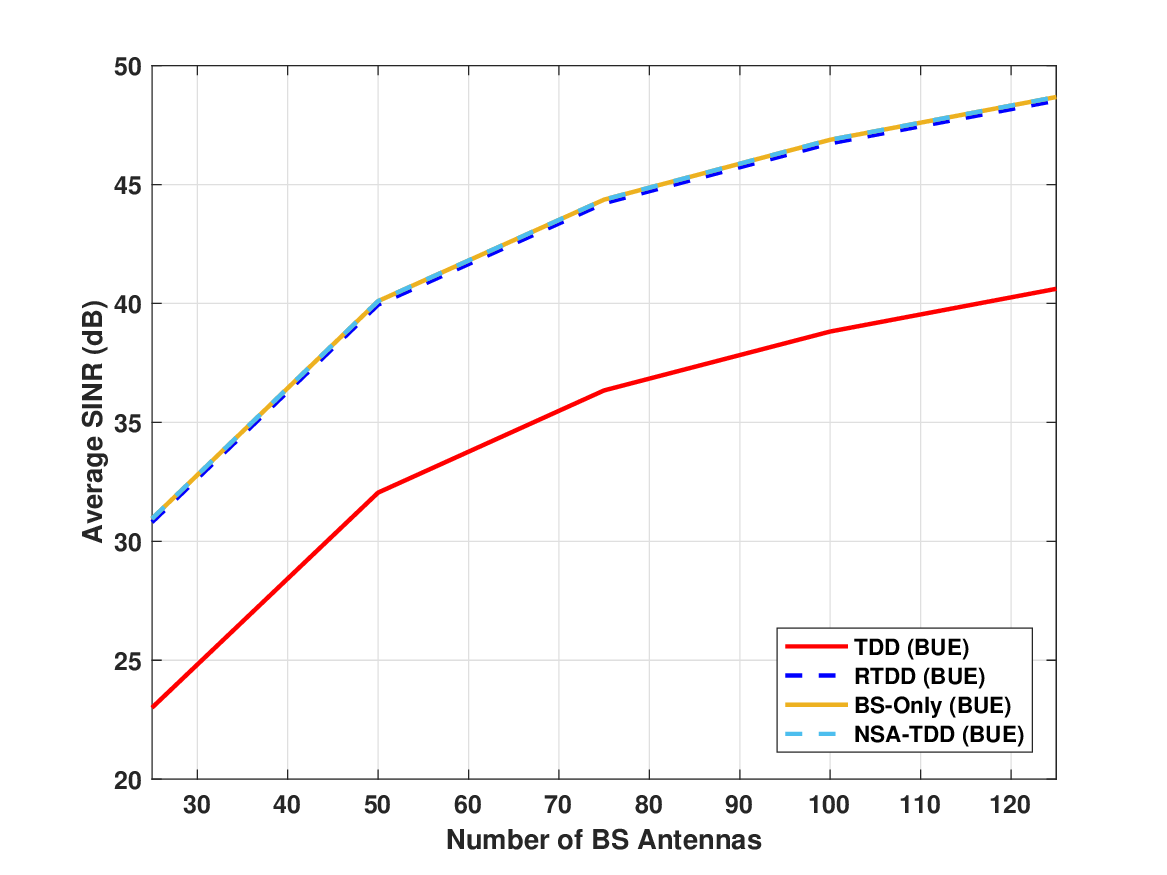}
\label{Fig:BUE_DL_BS_AN}
}

\subfigure[Average uplink SINR of HUE.]{
\includegraphics[width=2.5in]{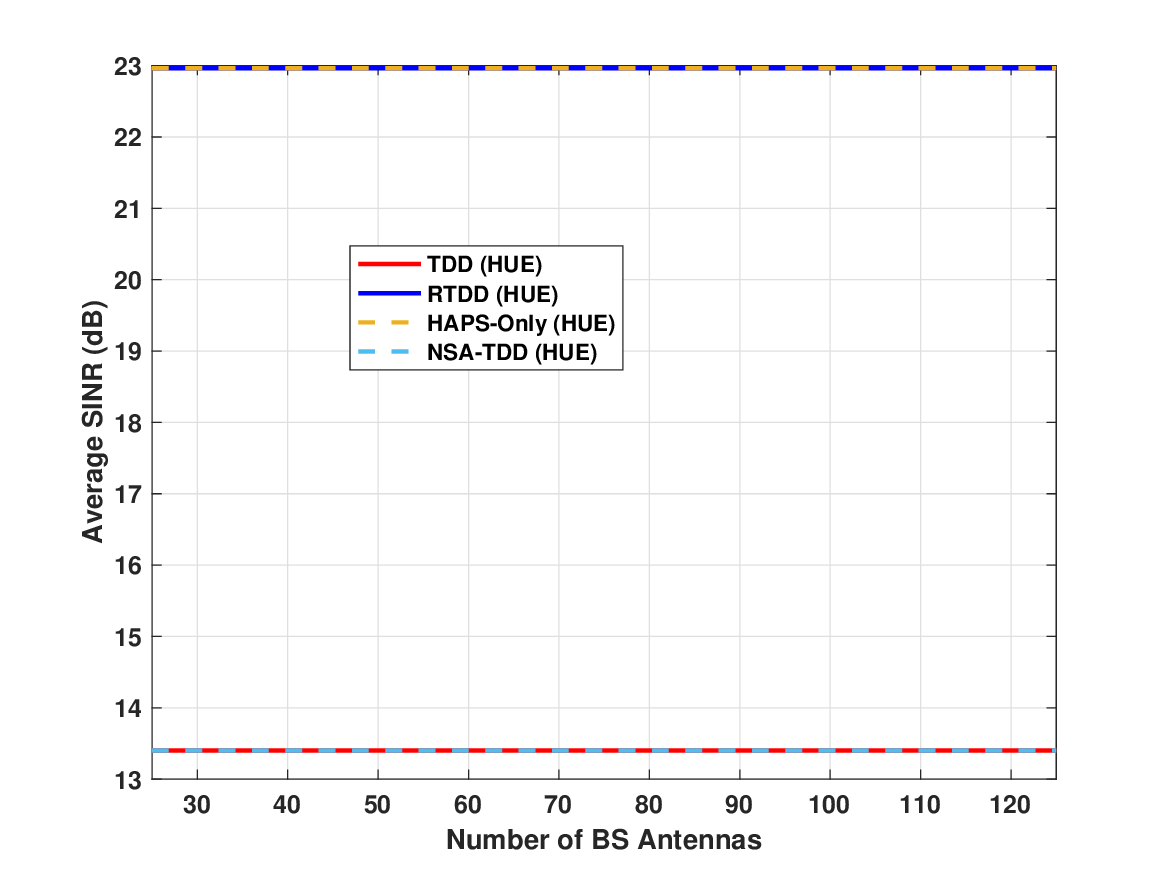}
\label{Fig:HUE_UL_BS_AN}
}
\hspace{-0.4cm}
\subfigure[Average uplink SINR of BUE.]{
\includegraphics[width=2.5in]{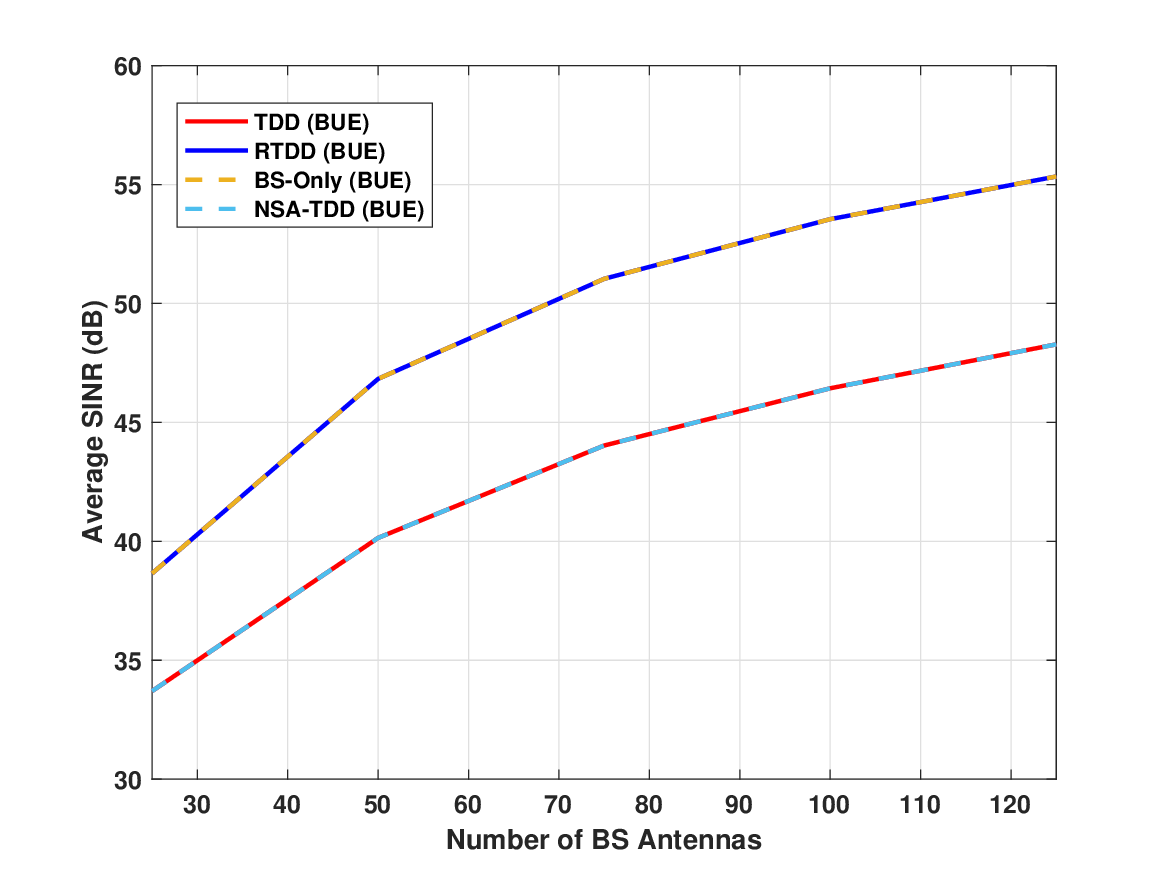}
\label{Fig:BUE_UL_BS_AN}
}

\caption{Average downlink and uplink SINR versus the number of BS antennas.}
\label{Fig:BS_AN}
\end{figure*}
Fig.~\ref{Fig:BS_AN} shows the average SINR versus the number of BS antennas. For the downlink, Fig.~\ref{Fig:HUE_DL_BS_AN} shows that the average SINR of HUEs decreases with the number of BS antennas under conventional TDD and NSA-TDD due to the increased cross-tier interference from the BS. In contrast, RTDD and the HUE-only benchmark remain unchanged since they are independent of the BS antenna configuration. As shown in Fig.~\ref{Fig:BUE_DL_BS_AN}, the average SINR of BUEs increases with the number of BS antennas owing to the enhanced beamforming capability of the BS. Nevertheless, conventional TDD still achieves the lowest performance because of the strong interference from the HAPS downlink. For the uplink, Fig.~\ref{Fig:HUE_UL_BS_AN} shows that the average SINR of HUEs remains unchanged as the number of BS antennas increases, since the HAPS reception is independent of the BS antenna configuration. In contrast, Fig.~\ref{Fig:BUE_UL_BS_AN} shows that the average SINR of BUEs increases with the number of BS antennas owing to the enhanced receive combining capability at the BS.

\subsection{Validation of the Proposed Kronecker-Based RMT Model}
In this subsection, we validate both the proposed RMT analysis and the proposed Kronecker correlation model. The accuracy of the RMT analysis is examined by comparing the derived deterministic equivalents with Monte Carlo simulations based on the Kronecker correlation model, while the accuracy of the Kronecker model is evaluated through comparison with the 3GPP channel model. 

\par
\begin{figure*}[!h]
\centering
\subfigure[CDF of downlink SINR.]{
\includegraphics[width=2.5in]{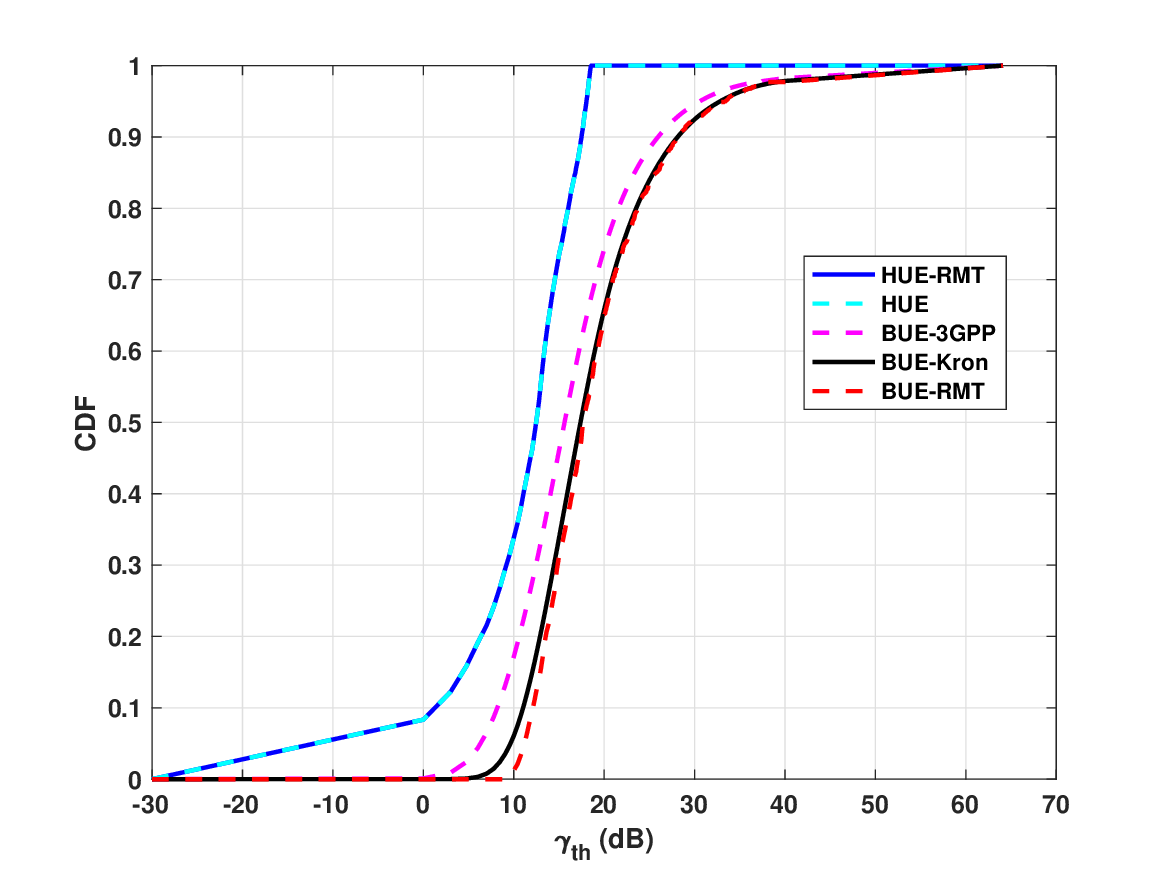}
\label{Fig:RMT_DL_N50}
}
\subfigure[CDF of uplink SINR .]{
\includegraphics[width=2.5in]{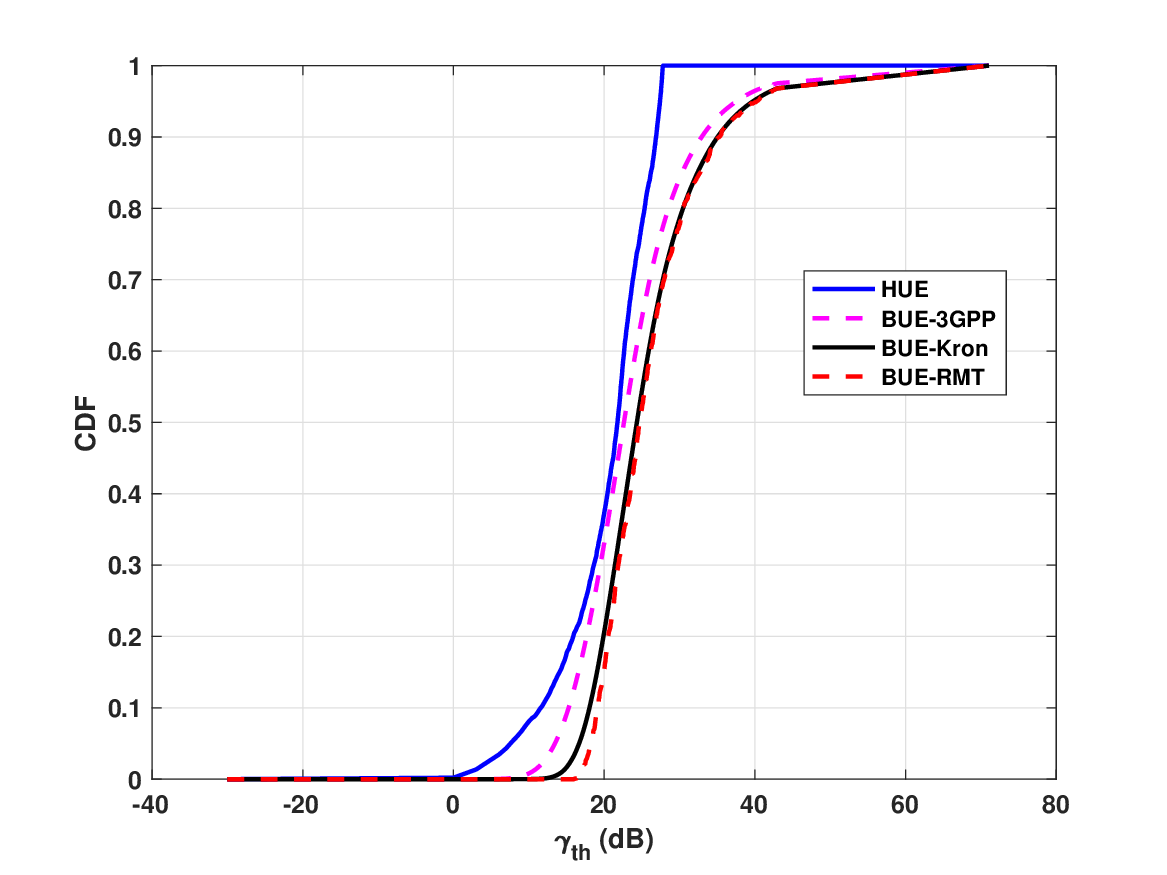}
\label{Fig:RMT_UL_N50}
}
\caption{CDFs of the downlink and uplink SINR.}
\label{Fig:RMT_N50}
\end{figure*}
Fig.~\ref{Fig:RMT_N50} compares the downlink and uplink SINR distributions obtained from the proposed RMT analysis, Monte Carlo simulations based on the proposed Kronecker correlation model, and the 3GPP channel model, where the BS is equipped with $N=50$ antennas.
Fig.~\ref{Fig:RMT_N50} shows that the proposed RMT results closely match the corresponding Monte Carlo simulations for both HUEs and BUEs. Moreover, the SINR distributions obtained from the proposed Kronecker correlation model are in good agreement with those generated by the 3GPP channel model, although a small performance gap can still be observed. This indicates that the proposed Kronecker correlation model captures the dominant spatial correlation characteristics of the practical 3GPP channel while maintaining a tractable analytical form. These results validate the accuracy of the proposed RMT analysis and demonstrate the effectiveness of the proposed Kronecker correlation model.

\begin{figure*}[!h]
\centering
\subfigure[Average downlink SINR.]{
\includegraphics[width=2.5in]{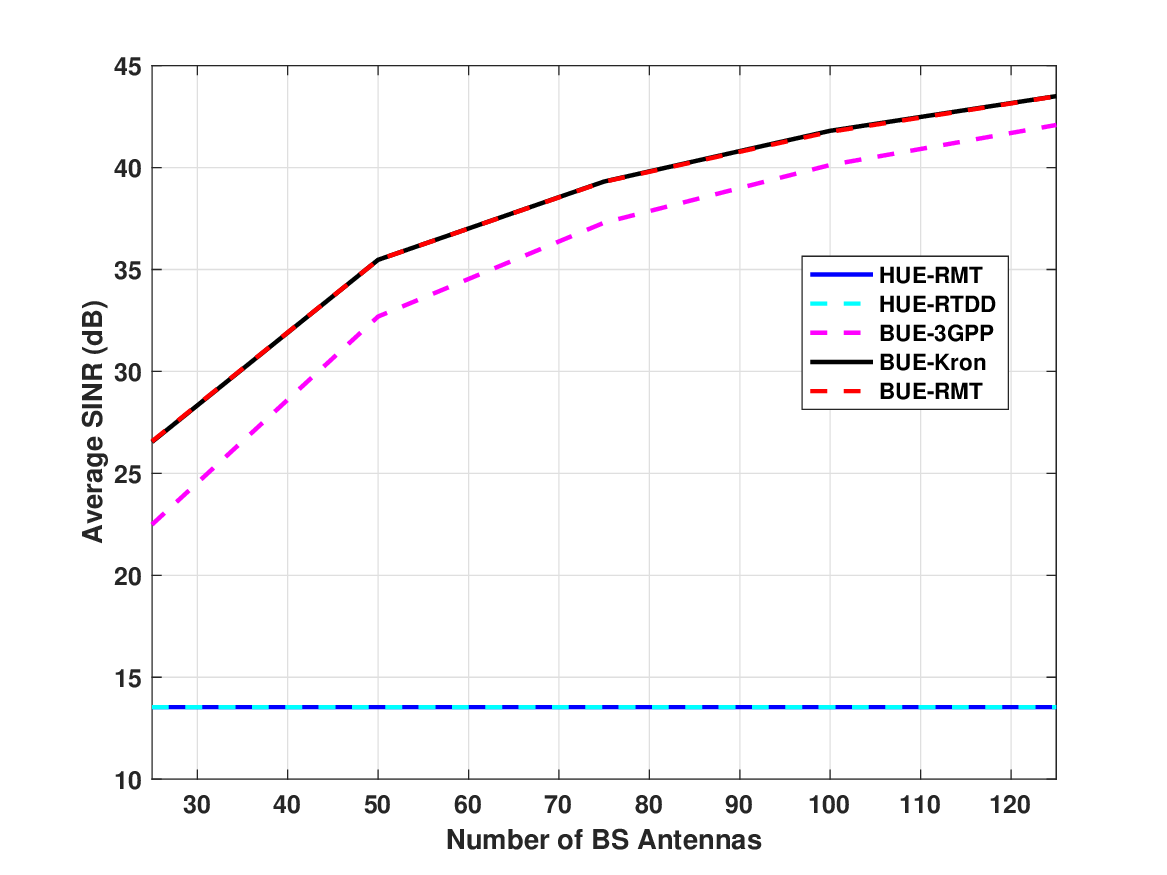}
\label{Fig:RMT_DL_BS_AN}
}
\subfigure[Average uplink SINR .]{
\includegraphics[width=2.5in]{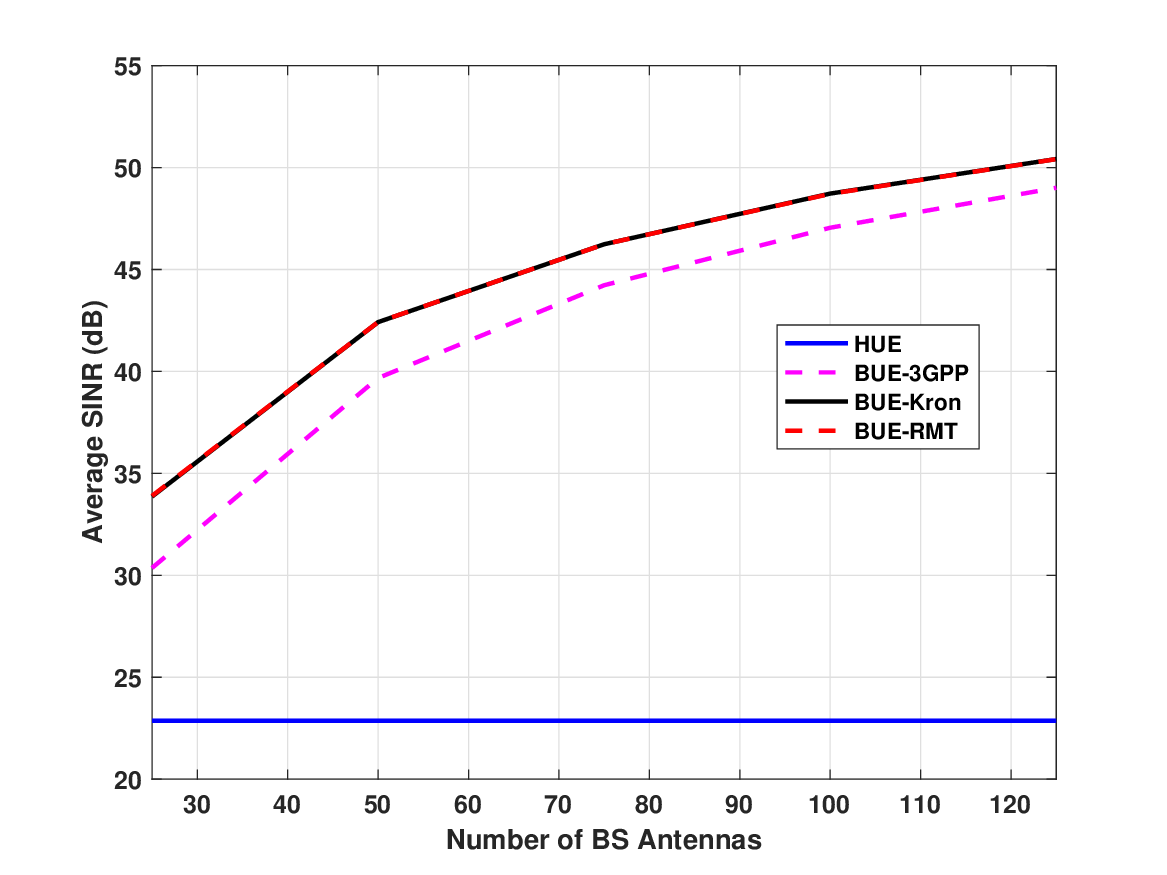}
\label{Fig:RMT_UL_BS_AN}
}
\caption{Average downlink and uplink SINR versus the number of BS antennas.}
\label{Fig:RMT_BS}
\end{figure*}
Fig.~\ref{Fig:RMT_BS} illustrates the average downlink and uplink SINR versus the number of BS antennas. It can be observed that the proposed RMT analysis closely matches the Monte Carlo simulations based on the proposed Kronecker correlation model for both HUEs and BUEs across the entire range of BS antennas, thereby validating the accuracy of the derived deterministic equivalents. Furthermore, although a small performance gap exists between the proposed Kronecker correlation model and the 3GPP channel model, both exhibit highly consistent performance trends. This demonstrates that the proposed Kronecker correlation model provides an accurate yet analytically tractable approximation of the practical 3GPP channel.

\section{Conclusion}
\label{section:Conclusions}
This paper investigated RTDD for integrated HAPS--terrestrial networks under practical 3GPP channel models. An analytical Kronecker correlation model was  developed to approximate the empirical 3GPP terrestrial BS channel, based on which a RMT framework was established to derive deterministic approximations for the uplink and downlink SINRs. Numerical results verified the accuracy of the proposed RMT analysis and demonstrated that the proposed Kronecker correlation model provides a good approximation of the practical 3GPP channel. Furthermore, the proposed RTDD framework effectively suppresses cross-tier interference and achieves near interference-free performance while preserving the channel reciprocity of TDD, making it a promising low-complexity solution for future integrated HAPS--terrestrial networks.

\appendices
\section{Supporting Lemmas and Proofs}
This section presents an auxiliary random matrix theory result that provides the deterministic equivalent of the resolvent for a sample covariance matrix with a deterministic component and correlated random columns. Building on this result, we subsequently characterize the corresponding deterministic equivalents for the ZF resolvent. For clarity and self-containment, the notation used throughout this appendix is independent of that adopted in the main text.
\subsection{Technical Lemma}
\begin{lemma}
\label{lem:stieltjes_LoS}

Given random matrices
\begin{equation}
\mathbf{\Sigma}=\mathbf{A}+\mathbf{Y}=[\boldsymbol{\xi}_1, \cdots,  \boldsymbol{\xi}_n]
\end{equation}
where $\mathbf{A}\in \mathbb{C}^{p \times n}$ is deterministic, $\mathbf{Y}=\sqrt{\frac{1}{n}}[\mathbf{B}_1\mathbf{x}_1, \mathbf{B}_2\mathbf{x}_2\cdots, \mathbf{B}_n\mathbf{x}_n ]\in \mathbb{C}^{p \times n}$ is random, $\mathbf{x_j}\in \mathbb{C}^{p \times 1}$ are i.i.d. random vectors with zero
mean and $\mathbf{B}_j \in \mathbb{C}^{p \times p}$ denotes the correlation of the $j$-th column of $\mathbf{Y}$.
\par
To investigate the random matrix $\mathbf{\Sigma}$ in a high-dimensional setting. To this
end, we make the following assumptions \cite{zhuang2026no}.
\begin{assumption}[On the Asymptotic Regime]
The dimensions of $p$ and $n$ approach infinity with 
\begin{equation}
0<\liminf _{n \rightarrow \infty} \frac{p}{n} \leq \limsup _{n \rightarrow \infty} \frac{p}{n}<\infty .
\end{equation}
\end{assumption}

\begin{assumption}[Randomness]
\label{Assump:randomness}
Let $\{X_{ij}\}_{i,j\geq 1}$ be a double array of i.i.d. complex
random variables satisfying
\begin{equation}
\mathbb E[X_{11}]=0,
\qquad
\mathbb E\!\left[|X_{11}|^2\right]=1.
\end{equation}
Moreover, there exists some $\epsilon>0$ such that
\begin{equation}
\mathbb E\!\left[|X_{11}|^{4+\epsilon}\right]<\infty.
\end{equation}
The entries of the random vectors $\{\mathbf x_j\}_{j=1}^{n}$
are drawn from this array according to
\begin{equation}
[\mathbf x_j]_i=X_{ij}.
\end{equation}
\end{assumption}

\begin{assumption}[Correlation Matrices]
\label{Assump:correlation}
Let
\begin{equation}
\boldsymbol{\Omega}_j
=
\mathbf B_j\mathbf B_j^H,
\qquad
j=1,\ldots,n.
\end{equation}
The correlation matrices satisfy
\begin{equation}
0
<
\liminf_{n\to\infty}
\min_{1\leq j\leq n}
\frac{1}{p}
\operatorname{tr}
\left(
\boldsymbol{\Omega}_j
\right)
\leq
\limsup_{n\to\infty}
\max_{1\leq j\leq n}
\left\|
\boldsymbol{\Omega}_j
\right\|
<
\infty.
\end{equation}
\end{assumption}

\begin{assumption}[on the Mean]
\label{Assump:mean}
The spectral norm of the deterministic matrix $\mathbf A$ is
uniformly bounded, i.e.,
\begin{equation}
\limsup_{n\to\infty}
\left\|
\mathbf A
\right\|
<\infty.
\end{equation}
\end{assumption}
The following lemma follows from Theorem~1 in \cite{zhuang2026no}.
Let the sample covariance matrix is defined as $\mathbf{S}=\mathbf{\Sigma}\mathbf{\Sigma}^H$, the resolvent $\mathbf{Q}(z)=(\mathbf{S}-z\mathbf{I}_p)^{-1}$ satisfy
\begin{equation}
\label{eq: resolvent}
\frac{1}{p}\operatorname{tr}\mathbf{C}[\mathbf{Q}(z)-\mathbf{\Theta}(z)]
\xrightarrow[p,n\to\infty]{a.s.} 0
\end{equation}
where $\mathbf{\Theta}$ satisfies
\begin{equation}
\begin{cases}
\delta_i(z)=\frac{\operatorname{tr} \boldsymbol{\Omega}_i \boldsymbol{\Theta}(z)}{n}, & i=1,\cdots,n, \\ \widetilde{\delta}_j(z)=[\widetilde{\boldsymbol{\Theta}}(z)]_{j, j}, & j=1,\cdots.n,
\end{cases}
\end{equation}
where 
\begin{equation}
\label{eq:F}
\mathbf{F}(z)=\left[-z\left(\mathbf{I}_p+\sum_{j=1}^n\frac{ \boldsymbol{\Omega}_j \widetilde{\delta}_j(z)}{n}\right)\right]^{-1},
\end{equation}
\begin{equation}
\label{eq:F_tilde}
\widetilde{\mathbf{F}}(z)=\operatorname{diag}\left(\frac{-1}{z\left(1+\delta_i(z)\right)} ; 1 \leq i \leq n\right),
\end{equation}
\begin{equation}
\label{eq:theta_z}
\boldsymbol{\Theta}(z)=\left(\mathbf{F}^{-1}(z)-z \mathbf{A} \widetilde{\mathbf{F}}(z) \mathbf{A}^H\right)^{-1},
\end{equation}
\begin{equation}
\widetilde{\boldsymbol{\Theta}}(z)=\left(\widetilde{\mathbf{F}}^{-1}(z)-z \mathbf{A}^H \mathbf{F}(z) \mathbf{A}\right)^{-1} .
\end{equation}
\end{lemma}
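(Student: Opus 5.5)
The plan is to follow the classical Gaussian-free strategy for information-plus-noise models with a variance profile, in the spirit of Hachem--Loubaton--Najim. It has three parts: a concentration step, a step approximating $\mathbb E\mathbf Q(z)$ by a random-free but not yet closed-form matrix, and a stability argument for the fixed-point system that identifies that matrix with $\boldsymbol\Theta(z)$. Throughout I would work with $z\in\mathbb C\setminus\mathbb R^+$, with $\|\mathbf C\|$ bounded. I would first truncate and recentre the entries $X_{ij}$ at level $n^{1/2-\epsilon'}$. Using the $4+\epsilon$ moment in Assumption~\ref{Assump:randomness} and a rank inequality, this does not change the limit of $\frac1p\operatorname{tr}\mathbf C\mathbf Q(z)$ almost surely.

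\textbf{Step 1 (concentration).} I would write $\frac1p\operatorname{tr}\mathbf C(\mathbf Q-\mathbb E\mathbf Q)$ as a sum of martingale differences with respect to the filtration generated by the columns $\boldsymbol\xi_1,\dots,\boldsymbol\xi_n$. Let $\mathbf Q_j$ denote the resolvent with the $j$-th column removed. The rank-one bound $|\operatorname{tr}\mathbf C(\mathbf Q-\mathbf Q_j)|\le \|\mathbf C\|/\operatorname{dist}(z,\mathbb R^+)$, combined with Burkholder's inequality, gives a fourth-moment bound of order $n^{-2}$. Borel--Cantelli then yields almost sure convergence, so it suffices to prove $\frac1p\operatorname{tr}\mathbf C(\mathbb E\mathbf Q-\boldsymbol\Theta)\to0$.

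\textbf{Step 2 (approximating the expectation).} I would introduce the co-resolvent $\widetilde{\mathbf Q}=(\boldsymbol\Sigma^H\boldsymbol\Sigma-z\mathbf I_n)^{-1}$ and use the Schur-complement identity
\[
[\widetilde{\mathbf Q}]_{jj}=\frac{-1}{z\,(1+\boldsymbol\xi_j^H\mathbf Q_j\boldsymbol\xi_j)}.
\]
I would also use Sherman--Morrison, $\mathbf Q\boldsymbol\xi_j=z[\widetilde{\mathbf Q}]_{jj}\cdots$, or more precisely $\mathbf Q\boldsymbol\xi_j=\mathbf Q_j\boldsymbol\xi_j/(1+\boldsymbol\xi_j^H\mathbf Q_j\boldsymbol\xi_j)$. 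With $\boldsymbol\xi_j=\mathbf a_j+\mathbf y_j$, the quadratic form splits as $\boldsymbol\xi_j^H\mathbf Q_j\boldsymbol\xi_j=\mathbf a_j^H\mathbf Q_j\mathbf a_j+\frac1n\operatorname{tr}\boldsymbol\Omega_j\mathbf Q_j+o(1)$. The cross terms vanish by the trace lemma for vectors with independent entries.

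Because of the mean $\mathbf A$, the term $\mathbf a_j^H\mathbf Q_j\mathbf a_j$ is not a normalized trace. This is exactly why $\boldsymbol\Theta$ carries the $\mathbf A\widetilde{\mathbf F}\mathbf A^H$ correction. I would therefore handle $\mathbf Q$ and $\widetilde{\mathbf Q}$ jointly. Using the resolvent identity
\[
\mathbb E\mathbf Q-\mathbf R=\mathbb E\big[\mathbf Q(\mathbf R^{-1}-\mathbf Q^{-1})\big]\mathbf R,
\]
I would take the intermediate matrix
\[
\mathbf R=\big(\mathbf F_\alpha^{-1}-z\mathbf A\widetilde{\mathbf F}_\alpha\mathbf A^H\big)^{-1},
\]
where $\mathbf F_\alpha$ and $\widetilde{\mathbf F}_\alpha$ are defined as in \eqref{eq:F}--\eqref{eq:F_tilde} with $\delta_j,\widetilde\delta_j$ replaced by $\alpha_j=\frac1n\operatorname{tr}\boldsymbol\Omega_j\mathbb E\mathbf Q$ and $\widetilde\alpha_j=\mathbb E[\widetilde{\mathbf Q}]_{jj}$. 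Expanding $\mathbf Q^{-1}=\sum_j\boldsymbol\xi_j\boldsymbol\xi_j^H-z\mathbf I$ column by column, and using the bounds above together with the variance estimates $\mathrm{Var}(\frac1n\operatorname{tr}\boldsymbol\Omega_j\mathbf Q)=O(n^{-2})$, I would show that $\frac1p\operatorname{tr}\mathbf C(\mathbb E\mathbf Q-\mathbf R)\to0$. A symmetric argument gives the analogous statement for $\widetilde{\mathbf Q}$. Consequently $(\alpha_j,\widetilde\alpha_j)$ satisfy the fixed-point system of the lemma up to uniformly vanishing errors.

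\textbf{Step 3 (existence, uniqueness and stability of the fixed point).} For $z\in\mathbb C^+$, I would show that the system admits a unique solution whose entries are Stieltjes transforms of positive measures, for instance by a contraction argument on $\mathbb C^+$ for $z$ far from the real axis, followed by analytic continuation. I would then show that the system is stable, so that the $o(1)$-perturbed solution $(\alpha,\widetilde\alpha)$ is $o(1)$-close to $(\delta,\widetilde\delta)$. The bound $\liminf\frac1p\operatorname{tr}\boldsymbol\Omega_j>0$ and the bounded $\|\mathbf A\|$ are what make the relevant Jacobian-type matrix have spectral radius below one.

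Extending the result to all of $\mathbb C\setminus\mathbb R^+$, and in particular to $z=-1$ as used in the main text, would follow from Montel's theorem. I expect this stability estimate for the coupled $(\delta,\widetilde\delta)$ system in the presence of the mean $\mathbf A$ to be the main obstacle. My first attempt would be the classical inequality $|\delta-\alpha|\le \|\mathbf M\|\,|\delta-\alpha|+o(1)$, with $\|\mathbf M\|<1$ obtained from the imaginary-part identities $\Im\,\delta_j=\Im z\cdot(\cdots)$ and Cauchy--Schwarz.
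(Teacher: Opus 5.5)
The paper gives no argument of its own here; it simply invokes Theorem~1 of \cite{zhuang2026no}. Your architecture is a sensible blueprint: truncation, martingale concentration, comparison of $\mathbb E\mathbf Q$ with $\mathbf R(\alpha,\widetilde\alpha)$ through the resolvent identity, stability, and extension to $z=-1$.

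The normalized-trace comparison $\frac1p\operatorname{tr}\mathbf C(\mathbb E\mathbf Q-\mathbf R)\to0$ in your Step~2 does close. In the $\mathbf a_j\mathbf a_j^H$ term, after expanding $\mathbf Q\mathbf a_j$ by Sherman--Morrison, the unknown $\mathbf a_j^H\mathbf Q_j\mathbf a_j$ enters both sides only through the common factor $1/(1+\boldsymbol\xi_j^H\mathbf Q_j\boldsymbol\xi_j)$. This is why the coefficient of $\mathbf a_j\mathbf a_j^H$ in $\boldsymbol\Theta^{-1}$ is $1/(1+\delta_j)$.

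\textbf{The gap.} It lies in the sentence ``a symmetric argument gives the analogous statement for $\widetilde{\mathbf Q}$''.

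First, there is no symmetry. The column $\mathbf B_j\mathbf x_j$ mixes all entries of $\mathbf x_j$, so the rows of $\mathbf Y$ are not independent. The rank-one expansion of $\boldsymbol\Sigma^H\boldsymbol\Sigma$ over the rows of $\boldsymbol\Sigma$, which is available in the separable variance-profile model of Hachem--Loubaton--Najim, cannot be used here.

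Second, and more fundamentally, the second block of the fixed-point system concerns individual diagonal entries, not a normalized trace. You need $\widetilde\alpha_j\approx[\widetilde{\mathbf R}]_{jj}$ for every $j$, with $\widetilde{\mathbf R}=(\widetilde{\mathbf F}_\alpha^{-1}-z\mathbf A^H\mathbf F_\alpha\mathbf A)^{-1}$. By Schur complement and Woodbury, $[\widetilde{\mathbf R}]_{jj}=-1/\bigl(z(1+\alpha_j+\mathbf a_j^H\mathbf R_{(j)}\mathbf a_j)\bigr)$, where $\mathbf R_{(j)}^{-1}=\mathbf R^{-1}-\mathbf a_j\mathbf a_j^H/(1+\alpha_j)$. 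On the other side, $\widetilde\alpha_j=\mathbb E\bigl[-1/\bigl(z(1+\boldsymbol\xi_j^H\mathbf Q_j\boldsymbol\xi_j)\bigr)\bigr]$.

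Matching the two requires $\mathbf a_j^H\mathbb E\mathbf Q_j\mathbf a_j-\mathbf a_j^H\mathbf R_{(j)}\mathbf a_j\to0$ uniformly in $j$. That is an isotropic (bilinear-form) deterministic equivalent along the mean directions. It cannot be extracted from $\frac1p\operatorname{tr}\mathbf C(\cdot)$ with $\|\mathbf C\|$ bounded. Without it you have not shown that $(\alpha,\widetilde\alpha)$ approximately solves the system, so Step~3 has nothing to act on.

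\textbf{How to repair it.} Run Step~2 for $\mathbf u^H(\mathbb E\mathbf Q-\mathbf R)\mathbf v$ with deterministic unit vectors $\mathbf u,\mathbf v$.
\begin{itemize}
\item The same column expansion works, but the $n$ error terms are no longer damped by $1/p$. Control them with $\sum_j|\mathbf u^H\mathbf Q\boldsymbol\xi_j|^2\le\|\mathbf Q\boldsymbol\Sigma\|^2$ and Cauchy--Schwarz.
\item You also need $\mathrm{Var}(\mathbf u^H\mathbf Q\mathbf v)=O(n^{-1})$. This holds because only the $\mathbf y_j$-dependent part of $\mathbf u^H(\mathbf Q-\mathbf Q_j)\mathbf v$ survives in the martingale differences.
\item Pass from $q_j=\mathbf a_j^H\mathbf Q\mathbf a_j$ to the leave-one-out form via $\mathbf a_j^H\mathbf Q_j\mathbf a_j\approx(1+\alpha_j)q_j/(1+\alpha_j-q_j)$. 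Use the exact Sherman--Morrison counterpart on the deterministic side.
\end{itemize}
This isotropic estimate is precisely what the paper imports from Proposition~5 of \cite{zhuang2026no} in the proof of Proposition~\ref{prop:LOO_quadratic_form}.

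Your stability argument must then handle all $2n$ unknowns in a norm uniform in $n$. This is needed because $\widetilde{\boldsymbol\Theta}$ is not diagonal and couples the $\widetilde\delta_j$.
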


\begin{proposition}[Leave-One-Out Quadratic Form]
\label{prop:LOO_quadratic_form}
Under the assumptions of Lemma~\ref{lem:stieltjes_LoS}, let
$\mathbf{\Sigma}_u$ denote the matrix obtained by removing the
$u$-th column of $\mathbf{\Sigma}$, and define
\begin{equation}
\mathbf S_u
= \mathbf\Sigma_u\mathbf\Sigma_u^H,
\qquad
\mathbf Q_u(z)
= \left(
\mathbf S_u-z\mathbf I_p
\right)^{-1}.
\end{equation}
Then,
\begin{equation}
\boldsymbol{\xi}_u^H
\mathbf Q_u(z)
\boldsymbol{\xi}_u
- \Bigg[
\delta_u(z)+
\frac{ \left(1+\delta_u(z)\right)
\mathbf a_u^H
\boldsymbol{\Theta}(z)
\mathbf a_u
}{1+\delta_u(z)
-\mathbf a_u^H
\boldsymbol{\Theta}(z)
\mathbf a_u} \Bigg]
\xrightarrow[p,n\to\infty]{a.s.}
0,
\end{equation}
where
$
\delta_u(z)
= \frac{1}{n}
\operatorname{tr}
\left(
\boldsymbol{\Omega}_u
\boldsymbol{\Theta}(z)
\right).
$
\end{proposition}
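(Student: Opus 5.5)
The plan is to decompose the column $\boldsymbol{\xi}_u=\mathbf a_u+\mathbf y_u$, where $\mathbf y_u=\frac{1}{\sqrt n}\mathbf B_u\mathbf x_u$, and to treat the random part and the deterministic LoS part separately. The LoS part cannot be handled directly with $\boldsymbol\Theta(z)$, because $\boldsymbol\Theta(z)$ is the deterministic equivalent of the \emph{full} resolvent $\mathbf Q(z)$, not of $\mathbf Q_u(z)$. I will therefore pass through a Sherman--Morrison identity linking the two. Throughout, fix $z\in\mathbb R^-$, so that $\|\mathbf Q_u(z)\|\le 1/|z|$ and $\mathbf Q_u(z)\succ 0$. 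Note also that $\mathbf Q_u$ is independent of $\mathbf x_u$, and that $\|\mathbf a_u\|\le\|\mathbf A\|$ is bounded by Assumption~\ref{Assump:mean}.

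\textbf{Step 1 (random and cross terms).} Expand
\begin{equation*}
\boldsymbol{\xi}_u^H\mathbf Q_u\boldsymbol{\xi}_u=\mathbf y_u^H\mathbf Q_u\mathbf y_u+2\,\mathrm{Re}\,\mathbf a_u^H\mathbf Q_u\mathbf y_u+\mathbf a_u^H\mathbf Q_u\mathbf a_u .
\end{equation*}
\begin{itemize}
\item By the trace lemma for quadratic forms (using the $4+\epsilon$ moments of Assumption~\ref{Assump:randomness}, a Borel--Cantelli argument and bounded $\|\boldsymbol\Omega_u\|$), $\mathbf y_u^H\mathbf Q_u\mathbf y_u-\frac1n\operatorname{tr}\boldsymbol\Omega_u\mathbf Q_u\to0$ a.s.
\item The rank-one perturbation bound $|\operatorname{tr}\mathbf C(\mathbf Q-\mathbf Q_u)|\le\|\mathbf C\|/|z|$ replaces $\mathbf Q_u$ by $\mathbf Q$.
\item Lemma~\ref{lem:stieltjes_LoS} with $\mathbf C=\frac pn\boldsymbol\Omega_u$ (bounded, since $p/n$ is bounded) then gives $\frac1n\operatorname{tr}\boldsymbol\Omega_u\mathbf Q_u\to\delta_u(z)$.
\item The cross term $\mathbf a_u^H\mathbf Q_u\mathbf y_u$ is a centered linear form in $\mathbf x_u$ with variance $\frac1n\mathbf a_u^H\mathbf Q_u\boldsymbol\Omega_u\mathbf Q_u\mathbf a_u=\mathcal O(1/n)$. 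A fourth-moment bound and Borel--Cantelli show that it vanishes a.s.
\end{itemize}
Writing $\alpha_u=\mathbf a_u^H\mathbf Q_u\mathbf a_u\ge0$, we therefore get $\gamma_u:=\boldsymbol{\xi}_u^H\mathbf Q_u\boldsymbol{\xi}_u=\delta_u+\alpha_u+o(1)$ and $t_u:=\mathbf a_u^H\mathbf Q_u\boldsymbol\xi_u=\alpha_u+o(1)$.

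\textbf{Step 2 (identify $\alpha_u$).} Since $\mathbf Q=\mathbf Q_u-\frac{\mathbf Q_u\boldsymbol\xi_u\boldsymbol\xi_u^H\mathbf Q_u}{1+\gamma_u}$, we obtain
\begin{equation*}
\mathbf a_u^H\mathbf Q\mathbf a_u=\alpha_u-\frac{|t_u|^2}{1+\gamma_u}=\frac{\alpha_u(1+\delta_u)}{1+\delta_u+\alpha_u}+o(1).
\end{equation*}
The denominator satisfies $1+\gamma_u\ge1$ because $z<0$, so all the $o(1)$ substitutions are legitimate. If $\mathbf a_u^H\mathbf Q\mathbf a_u-q_u\to0$ with $q_u=\mathbf a_u^H\boldsymbol\Theta\mathbf a_u$, we can invert the map $\alpha\mapsto\alpha(1+\delta)/(1+\delta+\alpha)$. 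This map is increasing, with derivative bounded below on the bounded range of $\alpha$, which yields
\begin{equation*}
\alpha_u=\frac{q_u(1+\delta_u)}{1+\delta_u-q_u}+o(1).
\end{equation*}
Moreover, $1+\delta_u-q_u\approx(1+\delta_u)^2/(1+\delta_u+\alpha_u)$ is bounded away from zero, because $\alpha_u\le\|\mathbf a_u\|^2/|z|$. Substituting into $\gamma_u=\delta_u+\alpha_u+o(1)$ gives exactly the claimed equivalent.

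\textbf{Main obstacle.} The delicate point is the convergence $\mathbf a_u^H\mathbf Q\mathbf a_u-\mathbf a_u^H\boldsymbol\Theta\mathbf a_u\to0$ for the deterministic bounded vector $\mathbf a_u$. Lemma~\ref{lem:stieltjes_LoS} is stated only for normalized traces with bounded $\mathbf C$, and the choice $\mathbf C=p\,\mathbf a_u\mathbf a_u^H$ is not bounded in norm. I would first invoke the isotropic (bilinear-form) version of the deterministic equivalent, which the resolvent approach in \cite{zhuang2026no} delivers through the same Gaussian/interpolation estimates. Failing that, I would establish it by a variance bound $\mathrm{Var}(\mathbf a^H\mathbf Q\mathbf a)=\mathcal O(1/n)$ (Efron--Stein or martingale differences over columns). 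This would be combined with the expectation step $\mathbb E\mathbf a^H\mathbf Q\mathbf a-\mathbf a^H\boldsymbol\Theta\mathbf a=\mathcal O(1/\sqrt n)$, obtained by rerunning the usual column-by-column resolvent identities with $\mathbf a\mathbf a^H$ in place of $\mathbf C/p$.
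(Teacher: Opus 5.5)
Your proposal is correct. For the random part (trace lemma, rank-one perturbation, Lemma~\ref{lem:stieltjes_LoS} with $\mathbf C=\frac{p}{n}\boldsymbol\Omega_u$, vanishing cross terms) it matches the paper's argument; the genuine difference is in how you handle the LoS term $\mathbf a_u^H\mathbf Q_u\mathbf a_u$. The paper imports an isotropic deterministic equivalent for the \emph{leave-one-out} resolvent, $\mathbf a_u^H[\mathbf Q_u-\boldsymbol\Theta_u]\mathbf a_u\to0$ (Proposition~5 of \cite{zhuang2026no}). It then links $\boldsymbol\Theta_u$ to $\boldsymbol\Theta$ by a \emph{deterministic} rank-one identity, which is exact when $\boldsymbol\Theta_u=(\boldsymbol\Theta^{-1}-\mathbf a_u\mathbf a_u^H/(1+\delta_u))^{-1}$ and returns $\mathbf a_u^H\boldsymbol\Theta_u\mathbf a_u=(1+\delta_u)q_u/(1+\delta_u-q_u)$ directly. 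You instead put the rank-one update on the random side, applying Sherman--Morrison to $\mathbf Q$ versus $\mathbf Q_u$. You then need only the isotropic equivalent of the \emph{full} resolvent against the $\boldsymbol\Theta$ already defined in Lemma~\ref{lem:stieltjes_LoS}, and you recover $\alpha_u$ by inverting $\alpha\mapsto\alpha(1+\delta_u)/(1+\delta_u+\alpha)$.

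The paper's route avoids the inversion, but it requires a second deterministic object $\boldsymbol\Theta_u$ whose relation to $\boldsymbol\Theta$ is asserted rather than derived. That relation holds exactly only for the particular choice above, not, for instance, if $\boldsymbol\Theta_u$ also drops $\widetilde{\delta}_u\boldsymbol\Omega_u/n$ from $\mathbf F$. Your route needs a single deterministic equivalent, at the price of the stability step, which you handle correctly using $z<0$. That step can even be made deterministic: since $\boldsymbol\Theta^{-1}\succeq|z|\mathbf I_p+\mathbf a_u\mathbf a_u^H/(1+\delta_u)$, one gets $1+\delta_u-q_u\geq(1+\delta_u)^2/(1+\delta_u+\|\mathbf a_u\|^2/|z|)$.

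Both routes rest on a bilinear-form deterministic equivalent that Lemma~\ref{lem:stieltjes_LoS}, stated only for normalized traces, does not supply. You are right to single this out as the main obstacle; the paper simply cites it from the reference.
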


\begin{proof}
Recall that
$
\boldsymbol{\xi}_u= \mathbf a_u+\mathbf y_u,
$
and 
$
\mathbf y_u
= \frac{1}{\sqrt n} \mathbf B_u\mathbf x_u .
$
Hence,
\begin{equation}
\begin{aligned}
\boldsymbol{\xi}_u^H
\mathbf Q_u(z)
\boldsymbol{\xi}_u
=&
\mathbf a_u^H
\mathbf Q_u(z)
\mathbf a_u
+
\mathbf y_u^H
\mathbf Q_u(z)
\mathbf y_u
\\
&+
\mathbf a_u^H
\mathbf Q_u(z)
\mathbf y_u
+
\mathbf y_u^H
\mathbf Q_u(z)
\mathbf a_u .
\end{aligned}
\end{equation}

Since $\mathbf Q_u(z)$ is independent of $\mathbf y_u$, the trace
lemma yields
\begin{equation}
\mathbf y_u^H
\mathbf Q_u(z)
\mathbf y_u
-
\frac{1}{n}
\operatorname{tr}
\left(
\boldsymbol{\Omega}_u
\mathbf Q_u(z)
\right)
\xrightarrow[p,n\to\infty]{a.s.}
0,
\end{equation}
while the cross terms satisfy
\begin{equation}
\mathbf a_u^H
\mathbf Q_u(z)
\mathbf y_u
\xrightarrow[p,n\to\infty]{a.s.}
0,
\qquad
\mathbf y_u^H
\mathbf Q_u(z)
\mathbf a_u
\xrightarrow[p,n\to\infty]{a.s.}
0.
\end{equation}
By the rank-one perturbation argument,
\begin{equation}
\frac{1}{n}
\operatorname{tr}
\left(
\boldsymbol{\Omega}_u
\mathbf Q_u(z)
\right)
-
\frac{1}{n}
\operatorname{tr}
\left(
\boldsymbol{\Omega}_u
\mathbf Q(z)
\right)
\xrightarrow[p,n\to\infty]{a.s.}
0.
\end{equation}
Together with Lemma~\ref{lem:stieltjes_LoS}, this gives
\begin{equation}
\mathbf y_u^H
\mathbf Q_u(z)
\mathbf y_u
-
\delta_u(z)
\xrightarrow[p,n\to\infty]{a.s.}
0.
\end{equation}

Moreover, for any deterministic vector with uniformly bounded norm,
the leave-one-out resolvent satisfies, which is give in the Proposition 5 in \cite{zhuang2026no} 
\begin{equation}
\mathbf a_u^H
\left[
\mathbf Q_u(z)
-
\boldsymbol{\Theta}_u(z)
\right]
\mathbf a_u
\xrightarrow[p,n\to\infty]{a.s.}
0.
\end{equation}
The leave-one-out and full deterministic equivalents are related by
\begin{equation}
\mathbf a_u^H
\boldsymbol{\Theta}(z)
\mathbf a_u
+
\frac{
\mathbf a_u^H
\boldsymbol{\Theta}_u(z)
\mathbf a_u
\,
\mathbf a_u^H
\boldsymbol{\Theta}(z)
\mathbf a_u
}{
1+\delta_u(z)
}
=
\mathbf a_u^H
\boldsymbol{\Theta}_u(z)
\mathbf a_u,
\end{equation}
which gives
\begin{equation}
\mathbf a_u^H
\boldsymbol{\Theta}_u(z)
\mathbf a_u
=
\frac{
\left(1+\delta_u(z)\right)
\mathbf a_u^H
\boldsymbol{\Theta}(z)
\mathbf a_u
}{
1+\delta_u(z)
-
\mathbf a_u^H
\boldsymbol{\Theta}(z)
\mathbf a_u
}.
\end{equation}
Combining the above relations yields the desired result.
\end{proof}

\subsection{Deterministic Equivalent of the ZF Resolvent}
\label{proof_of_ZF}
To characterize the deterministic equivalents for ZF, we consider the limiting regime in which the regularization parameter tends to zero, i.e., $z\rightarrow0$. In this regime, the fixed-point quantities and their derivatives introduced in Lemma~\ref{lem:stieltjes_LoS} require appropriate scalings to admit finite and nonvanishing limits. We therefore introduce the following assumption on these limiting quantities.
\begin{assumption}
Assume that $\lim\inf\frac{p}{n}>c_{+}>1$, and that there exists $\varphi$ and $\eta$ positive constants such that:
$$
\min_{i=1,\cdots,n}\frac{1}{n}{\rm tr} \boldsymbol{\Omega}_i(\boldsymbol{\Omega}_i+\varphi{\bf I}_p)^{-1}\geq 1+\eta
$$
\label{ass:covariance}
\end{assumption}
\begin{proposition}
\label{prop: delta}
Consider the setting of Lemma \ref{lem:stieltjes_LoS} and Assumption \ref{ass:covariance}. Then the following limits 
\begin{equation}
\underline{\delta}_i
=\lim_{t\rightarrow0}
t\delta_i(-t),
\qquad
\underline{\widetilde{\delta}}_i
=\lim_{t\rightarrow0}
\widetilde{\delta}_i(-t),
\end{equation}
exist for every $i$. Moreover, there exists a constant
$\epsilon>0$, independent of $p$, such that
\begin{equation}
\underline{\delta}_i>\epsilon,\qquad
\underline{\widetilde{\delta}}_i>\epsilon
\qquad
\forall i.
\end{equation}
\end{proposition}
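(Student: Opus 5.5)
The plan is to rescale the fixed-point system of Lemma~\ref{lem:stieltjes_LoS} at $z=-t$ so that $t\to0$ becomes a regular limit. Set $d_i(t)=t\,\delta_i(-t)$. Substituting $z=-t$ into \eqref{eq:F}--\eqref{eq:theta_z} gives
\begin{equation*}
\mathbf F(-t)=\tfrac1t\Big(\mathbf I_p+\tfrac1n\textstyle\sum_j\boldsymbol\Omega_j\widetilde\delta_j(-t)\Big)^{-1},
\qquad
t\,\widetilde{\mathbf F}(-t)=\operatorname{diag}\Big(\tfrac{1}{t+d_i(t)}\Big).
\end{equation*}
The system then becomes
\begin{equation*}
d_i=\tfrac1n\operatorname{tr}\boldsymbol\Omega_i\Big(\mathbf I_p+\tfrac1n\textstyle\sum_j\boldsymbol\Omega_j\widetilde\delta_j+\mathbf A\,\mathrm{diag}\big(\tfrac{1}{t+d_j}\big)\mathbf A^H\Big)^{-1},
\end{equation*}
\begin{equation*}
\widetilde\delta_i=\Big[\Big(\mathrm{diag}(t+d_j)+\mathbf A^H\big(\mathbf I_p+\tfrac1n\textstyle\sum_j\boldsymbol\Omega_j\widetilde\delta_j\big)^{-1}\mathbf A\Big)^{-1}\Big]_{i,i}.
\end{equation*}
At $t=0$ this is exactly the system defining $\underline{\delta}_i,\underline{\widetilde\delta}_i$ in the proposition, with $\underline{\widetilde{\mathbf F}}=\mathrm{diag}(1/\underline\delta_i)$. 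So it suffices to show three things: $(d(t),\widetilde\delta(t))$ stays in a compact set bounded away from $0$, uniformly in $t\in(0,1]$ and in $p,n$; the limits as $t\downarrow0$ exist; and the limit solves the $t=0$ system.

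\textbf{Step 1 (monotonicity).} $\delta_i(z)$ is, up to normalization, a Stieltjes transform of a positive measure; it is the deterministic counterpart of $\frac1n\operatorname{tr}\boldsymbol\Omega_i\mathbf Q(z)$. Hence $d_i(t)=\int \frac{t}{\lambda+t}\,\mu_i(d\lambda)$ is nondecreasing in $t$ and nonnegative. In the same way, $\widetilde\delta_i(-t)$ is the diagonal of a co-resolvent and is nonincreasing in $t$. These monotonicity properties can also be checked directly by differentiating the fixed-point system and using positivity of the Jacobian of the associated standard-interference-type map. As a result, both limits as $t\downarrow0$ exist in $[0,\infty]$. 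The remaining work is to show they are finite and positive with uniform constants.

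\textbf{Step 2 (upper bounds).} The first equation gives $d_i\le\frac1n\operatorname{tr}\boldsymbol\Omega_i\le \frac pn\|\boldsymbol\Omega_i\|$, which is bounded by Assumption~\ref{Assump:correlation} and the regime assumption. The second gives $\widetilde\delta_i\le 1/(t+d_i)\le 1/d_i$. Therefore a uniform upper bound on $\widetilde\delta_i$ reduces to a uniform lower bound on $d_i$.

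\textbf{Step 3 (lower bounds; main obstacle).} The difficulty is that $d_i$ and $\widetilde\delta_i$ control each other circularly. I would first use a trace identity. Taking the trace of $\widetilde{\boldsymbol\Theta}\widetilde{\boldsymbol\Theta}^{-1}=\mathbf I_n$ and of $\boldsymbol\Theta\boldsymbol\Theta^{-1}=\mathbf I_p$, exactly as for the usual Gram-matrix rank identities, should give
\begin{equation*}
\tfrac1n\sum_j\widetilde\delta_j(t+d_j)\le1 .
\end{equation*}
This identity says $\widetilde\delta_j$ cannot be large for many $j$ at once. To get a lower bound on $d_i$, I would then argue by contradiction. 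Suppose $\min_i d_i\to0$ along some subsequence. Then the effective regularization $\frac1n\sum_j\boldsymbol\Omega_j\widetilde\delta_j+\mathbf A\,\mathrm{diag}(\cdot)\mathbf A^H$ has rank-like mass of at most about $n$ directions; the $\mathbf A$ part has rank $\le n$ and the $\boldsymbol\Omega$ part is controlled by the trace identity. I would apply this with the constant $\varphi$ of Assumption~\ref{ass:covariance}: bound the inverse from below on a subspace of dimension at least $p-n$, and then use the inequality $\frac1n\operatorname{tr}\boldsymbol\Omega_i(\boldsymbol\Omega_i+\varphi\mathbf I)^{-1}\ge1+\eta$. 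This should yield $d_i\ge\epsilon_1>0$, contradicting the supposition. Making this subspace/rank argument quantitative, with constants independent of $p$, is the step I expect to be hardest. Once $d_i\ge\epsilon_1$, a lower bound on $\widetilde\delta_i$ follows from the second equation:
\begin{equation*}
\widetilde\delta_i\ge\Big(t+d_i+\|\mathbf A\|^2\big\|(\mathbf I_p+\cdots)^{-1}\big\|\Big)^{-1}\ge\big(1+C+\|\mathbf A\|^2\big)^{-1},
\end{equation*}
using Assumption~\ref{Assump:mean}.

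\textbf{Step 4 (passing to the limit).} With all quantities confined to a compact set bounded away from $0$, the right-hand sides of the rescaled system are continuous in $(t,d,\widetilde\delta)$ down to $t=0$. Letting $t\downarrow0$ shows that the limits $\underline\delta_i,\underline{\widetilde\delta}_i$ solve the $t=0$ system. The uniform bounds from Step 3 give the constant $\epsilon$ in the statement.
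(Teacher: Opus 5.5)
You have a genuine gap in Step~3, which carries the entire content of the proposition. Your rescaling, the monotonicity argument, the upper bounds $d_i\le\frac1n\operatorname{tr}\boldsymbol\Omega_i$ and $\widetilde\delta_i\le 1/(t+d_i)$, and the lower bound on $\widetilde\delta_i$ all match the paper. Two small slips there: it is $\widetilde{\mathbf F}(-t)$, not $t\widetilde{\mathbf F}(-t)$, that equals $\operatorname{diag}(1/(t+d_i))$, although your displayed rescaled system is correct; and the lower bound on $\widetilde\delta_i$ uses the \emph{upper} bound on $d_i$, not the lower one.

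In Step~3 the uniform bound $d_i(t)\ge\epsilon$ is only hoped for, and the ingredients you name do not deliver it. Suppose you control the two parts of the matrix inverted in the $d_i$-equation separately: the $\mathbf A$-part by its rank ($\le n$), and the $\boldsymbol\Omega$-part by your inequality $\frac1n\sum_j\widetilde\delta_j(t+d_j)\le1$. Then the trace identity from $\boldsymbol\Theta\boldsymbol\Theta^{-1}=\mathbf I_p$ gives only $\operatorname{tr}\,t\boldsymbol\Theta(-t)\ge p-2n$. That bound is vacuous whenever $p<2n$, a case Assumption~\ref{ass:covariance} allows.

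The missing idea is that the two $\mathbf A$-terms coincide exactly. The push-through identity $\boldsymbol\Theta\mathbf A\widetilde{\mathbf F}=\mathbf F\mathbf A\widetilde{\boldsymbol\Theta}$ gives $\operatorname{tr}(\mathbf A\widetilde{\mathbf F}\mathbf A^H\boldsymbol\Theta)=\operatorname{tr}(\mathbf A^H\mathbf F\mathbf A\widetilde{\boldsymbol\Theta})$. Up to the factor $t$, this is precisely the nonnegative term that must be dropped to obtain your inequality. Subtracting the $n\times n$ trace identity from the $p\times p$ one then yields the exact relation $t\operatorname{tr}\boldsymbol\Theta(-t)-t\operatorname{tr}\widetilde{\boldsymbol\Theta}(-t)=p-n$. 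Hence $\operatorname{tr}\,t\boldsymbol\Theta(-t)\ge p-n$.

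With this in hand no contradiction argument is needed, but you also need a quantitative form of Assumption~\ref{ass:covariance}. Choose $\tau>0$ with $\frac{\tau}{\tau+\varphi}\frac pn\le\frac\eta2$; this is possible uniformly because $p/n$ is bounded. Let $\mathbf P_i$ be the spectral projector of $\boldsymbol\Omega_i$ onto eigenvalues $\ge\tau$. Splitting $\frac1n\sum_k\frac{\lambda_{ik}}{\lambda_{ik}+\varphi}$ according to whether $\lambda_{ik}\ge\tau$ or $\lambda_{ik}<\tau$ gives $\boldsymbol\Omega_i\succeq\tau\mathbf P_i$ and $\operatorname{rank}\mathbf P_i\ge(1+\eta/2)n$. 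Then, using $0\preceq t\boldsymbol\Theta(-t)\preceq\mathbf I_p$,
\[
d_i(t)\ge\frac{\tau}{n}\operatorname{tr}\bigl(\mathbf P_i\,t\boldsymbol\Theta(-t)\bigr)\ge\frac{\tau}{n}\Bigl[\operatorname{tr}\,t\boldsymbol\Theta(-t)-(p-\operatorname{rank}\mathbf P_i)\Bigr]\ge\frac{\tau}{n}\bigl(\operatorname{rank}\mathbf P_i-n\bigr)\ge\frac{\tau\eta}{2}
\]
for all $t>0$ and all $i$, with constants independent of $p$. This is the paper's argument. Once your Step~3 is replaced by it, the rest of your proposal goes through. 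Your Step~4 goes beyond the statement, since the paper passes to the $t=0$ fixed-point system only in the subsequent theorem, but it is harmless.
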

\begin{proof}
The proof relies on several key steps. We first establish that, under Assumption~\ref{ass:covariance}, there exist constants $\tau>0$ and $\rho>0$, independent of $n$, such that, for all sufficiently large $n$ and every $i=1,\ldots,n$,
\begin{equation}
\boldsymbol{\Omega}_i \succeq \tau\mathbf P_i,
\qquad
\operatorname{rank}(\mathbf P_i)\geq (1+\rho)n,
\label{eq:covariance_projector}
\end{equation}
where $\mathbf P_i$ denotes the orthogonal projector onto the subspace spanned by the eigenvectors of $\boldsymbol{\Omega}_i$ associated with eigenvalues greater than or equal to $\tau$.
The second step consists in establishing the following bounds:
\begin{equation}
t\operatorname{tr}\boldsymbol{\Theta}(-t)
\geq p-n,
\qquad
t\boldsymbol{\Theta}(-t)\preceq\mathbf I_p.
\label{eq:theta_trace_bounds}
\end{equation}
We now show how these results can be combined to establish the proposition.
Combining these two results, we show that
$t\delta_i(-t)$ is uniformly bounded away from zero
as $t\downarrow 0$.
Indeed, using
$\boldsymbol{\Omega}_i\succeq\tau\mathbf P_i$
and the positive semidefiniteness of
$\boldsymbol{\Theta}(-t)$, we obtain
\begin{align}
t\delta_i(-t)
&=\frac{t}{n}\operatorname{tr}
\bigl(\boldsymbol{\Omega}_i
\boldsymbol{\Theta}(-t)\bigr)
\nonumber\\
&\geq
\frac{t\tau}{n}
\operatorname{tr}
\bigl(\mathbf P_i
\boldsymbol{\Theta}(-t)\bigr)
\nonumber\\
&=
\frac{t\tau}{n}
\operatorname{tr}\boldsymbol{\Theta}(-t)
\nonumber\\
&\quad-
\frac{t\tau}{n}
\operatorname{tr}
\bigl((\mathbf I_p-\mathbf P_i)
\boldsymbol{\Theta}(-t)\bigr).
\label{eq:delta_lower_bound_step1}
\end{align}
Since
$t\boldsymbol{\Theta}(-t)\preceq\mathbf I_p$,
we have
\begin{align}
&t\operatorname{tr}
\bigl((\mathbf I_p-\mathbf P_i)
\boldsymbol{\Theta}(-t)\bigr)
\nonumber\\
&\qquad\leq
\operatorname{tr}(\mathbf I_p-\mathbf P_i)
\nonumber\\
&\qquad=
p-\operatorname{rank}(\mathbf P_i).
\label{eq:projector_complement_bound}
\end{align}
Consequently, using
\eqref{eq:theta_trace_bounds} and
\eqref{eq:covariance_projector}, we obtain
\begin{align}
t\delta_i(-t)
&\geq
\frac{\tau}{n}
\left(
p-n-
\bigl(p-\operatorname{rank}(\mathbf P_i)\bigr)
\right)
\nonumber\\
&=
\frac{\tau}{n}
\left(\operatorname{rank}(\mathbf P_i)-n\right)
\nonumber\\
&\geq \tau\rho.
\label{eq:delta_uniform_lower_bound}
\end{align}
Therefore,
\begin{equation}
\inf_{t>0}\,
\min_{1\leq i\leq n}
t\delta_i(-t)
\geq\tau\rho>0.
\label{eq:delta_positive_lower_bound}
\end{equation}
In particular, $t\delta_i(-t)$ remains uniformly
bounded away from zero as $t\downarrow0$.
Using \cite{zhuang2026no}, the function
$z\mapsto\delta_i(z)$ is the Stieltjes transform
of a finite positive measure $\mu_i$ supported
on $[0,\infty)$. Consequently,
\begin{equation}
t\delta_i(-t)
=
\int_{[0,\infty)}
\frac{t}{t+\lambda}\,\mu_i(d\lambda).
\end{equation}
Since the mapping $t\mapsto t/(t+\lambda)$ is
nondecreasing for every $\lambda\geq0$,
$t\delta_i(-t)$ is also nondecreasing with
respect to $t$. Moreover, using
$t\boldsymbol{\Theta}(-t)\preceq\mathbf I_p$,
we obtain
\begin{equation}
t\delta_i(-t)
\leq
\frac{1}{n}\operatorname{tr}
\boldsymbol{\Omega}_i<\infty.
\end{equation}
Therefore, by monotonicity and boundedness,
$t\delta_i(-t)$ admits a finite limit as
$t\downarrow0$. Combining this result with
the previously established lower bound yields
\begin{equation}
\lim_{t\downarrow0}t\delta_i(-t)
=d_i\geq\tau\rho>0.
\end{equation}
We next establish the convergence of
$\widetilde{\delta}_i(-t)$ as $t\downarrow0$.
By Proposition~1 in \cite{zhuang2026no},
$\widetilde{\delta}_i(z)$ is the Stieltjes
transform of a nonnegative measure
$\widetilde{\mu}_i$ supported on $[0,\infty)$.
Consequently,
\begin{equation}
\widetilde{\delta}_i(-t)
=
\int_{[0,\infty)}
\frac{1}{t+\lambda}
\,\widetilde{\mu}_i(d\lambda).
\end{equation}
Since the mapping $t\mapsto 1/(t+\lambda)$
is nonincreasing for every $\lambda\geq0$,
$\widetilde{\delta}_i(-t)$ is nonincreasing
with respect to $t$.

Moreover, using the fixed-point equations,
we have
\begin{align}
0<\widetilde{\delta}_i(-t)
&\leq
\frac{1}{t(1+\delta_i(-t))}
\nonumber\\
&\leq
\frac{1}{\tau\rho}.
\end{align}
Therefore, by monotonicity and boundedness,
$\widetilde{\delta}_i(-t)$ admits a finite
limit as $t\downarrow0$, namely,
\begin{equation}
\widetilde d_i
:=
\lim_{t\downarrow0}
\widetilde{\delta}_i(-t)
\leq
\frac{1}{\tau\rho}.
\end{equation}
Using the inequality
$[\mathbf M^{-1}]_{ii}\geq 1/M_{ii}$,
which holds for any Hermitian positive definite
matrix $\mathbf M$, we obtain
\begin{align}
\widetilde{\delta}_i(-t)
&=[\widetilde{\boldsymbol{\Theta}}(-t)]_{ii}
\nonumber\\
&\geq
\frac{1}{
[\widetilde{\boldsymbol{\Theta}}^{-1}(-t)]_{ii}
}
\nonumber\\
&=
\frac{1}{
t(1+\delta_i(-t))
+\mathbf a_i^H\mathbf F\mathbf a_i
}
\nonumber\\
&\geq
\frac{1}{
t+\frac{1}{n}\operatorname{tr}
\boldsymbol{\Omega}_i+\|\mathbf a_i\|^2
},
\end{align}
where the last inequality follows from
$\mathbf F\preceq\mathbf I_p$ and
$t\delta_i(-t)\leq
n^{-1}\operatorname{tr}\boldsymbol{\Omega}_i$.\\
\underline{Proof of \eqref{eq:covariance_projector}.} 
We now prove \eqref{eq:covariance_projector}.
Let $\lambda_{i1},\ldots,\lambda_{ip}$ denote
the eigenvalues of $\boldsymbol{\Omega}_i$.
By Assumption~\ref{ass:covariance},
\begin{equation}
\frac{1}{n}\sum_{k=1}^{p}
\frac{\lambda_{ik}}{\lambda_{ik}+\varphi}
\geq 1+\eta.
\end{equation}

Choose $\tau>0$ sufficiently small such that
\begin{equation}
\frac{\tau}{\tau+\varphi}\frac{p}{n}
\leq \frac{\eta}{2}.
\end{equation}
Since
$\lambda/(\lambda+\varphi)
\leq \tau/(\tau+\varphi)$
for $\lambda<\tau$, we obtain
\begin{align}
1+\eta
&\leq
\frac{1}{n}\sum_{k=1}^{p}
\frac{\lambda_{ik}}{\lambda_{ik}+\varphi}
\nonumber\\
&\leq
\frac{\operatorname{rank}(\mathbf P_i)}{n}
+
\frac{\tau}{\tau+\varphi}
\frac{p-\operatorname{rank}(\mathbf P_i)}{n}
\nonumber\\
&\leq
\frac{\operatorname{rank}(\mathbf P_i)}{n}
+\frac{\eta}{2}.
\end{align}
Consequently,
\begin{equation}
\operatorname{rank}(\mathbf P_i)
\geq \left(1+\frac{\eta}{2}\right)n.
\end{equation}

Setting $\rho=\eta/2$ and recalling that
$\mathbf P_i$ is the orthogonal projector
onto the eigenspace of
$\boldsymbol{\Omega}_i$ associated with
eigenvalues greater than or equal to $\tau$,
we conclude that
\begin{equation}
\boldsymbol{\Omega}_i\succeq\tau\mathbf P_i,
\qquad
\operatorname{rank}(\mathbf P_i)
\geq(1+\rho)n,
\end{equation}
which establishes \eqref{eq:covariance_projector}.

\underline{Proof of \eqref{eq:theta_trace_bounds}}
We now prove \eqref{eq:theta_trace_bounds}.
From the fixed-point equations, we have
\begin{equation}
\boldsymbol{\Theta}(-t)^{-1}
=
t\mathbf I_p+
\frac{1}{n}\sum_{i=1}^{n}
\boldsymbol{\Omega}_it\tilde{\delta}_i(-t)
+
t\mathbf A\widetilde{\mathbf F}
\mathbf A^H,
\end{equation}
where the last term is positive semidefinite.
Consequently,
\begin{equation}
\boldsymbol{\Theta}(-t)^{-1}
\succeq t\mathbf I_p,
\end{equation}
which yields
\begin{equation}
t\boldsymbol{\Theta}(-t)
\preceq\mathbf I_p.
\end{equation}

To establish the first inequality, we use the
corresponding companion fixed-point equations.
Writing
\begin{equation}
\boldsymbol{\Theta}(-t)
=
\left[
t\mathbf I_p+
\frac{1}{n}\sum_{i=1}^{n}t\tilde{\delta}_i(-t)\boldsymbol{\Omega}_i
+t\mathbf A
\widetilde{\mathbf{F}}
\mathbf A^H
\right]^{-1},
\end{equation}
and taking the trace after multiplication by
$\boldsymbol{\Theta}(-t)^{-1}$ gives
\begin{align}
p
&=
t\operatorname{tr}\boldsymbol{\Theta}(-t)
+
\sum_{i=1}^{n}
t\tilde{\delta}_i(-t)\delta_i(-t)
+
t\operatorname{tr}\left[
\mathbf A\widetilde{\mathbf F}
\mathbf A^H
\boldsymbol{\Theta}(-t)
\right].\label{eq:first_eq}
\end{align}

Similarly, multiplying the fixed-point equation defining
$\widetilde{\boldsymbol{\Theta}}(-t)$ by
$\widetilde{\boldsymbol{\Theta}}(-t)^{-1}$ and taking
the trace yields
\begin{align}
n
&=
t\operatorname{tr}\widetilde{\boldsymbol{\Theta}}(-t)
+
\sum_{i=1}^{n}
t\delta_i(-t)\tilde{\delta}_i(-t)+
t\operatorname{tr}\left[
\mathbf A^H\mathbf F\mathbf A
\widetilde{\boldsymbol{\Theta}}(-t)
\right].
\label{eq:trace_tilde_theta}
\end{align}
Using the Woodbury matrix identity
\begin{equation}
(\mathbf X+\mathbf U\mathbf V)^{-1}\mathbf U
=
\mathbf X^{-1}\mathbf U
(\mathbf I+\mathbf V\mathbf X^{-1}\mathbf U)^{-1},
\end{equation}
we obtain
$$
\boldsymbol{\Theta}(-t)\mathbf A\widetilde{\mathbf F}=\mathbf F\mathbf A\widetilde{\boldsymbol{\Theta}}(-t)
$$
and therefore
\begin{align}
&\operatorname{tr}\left[
\mathbf A\widetilde{\mathbf F}(-t)
\mathbf A^H\boldsymbol{\Theta}(-t)
\right]
\nonumber\\
&\quad=
\operatorname{tr}\left[
\mathbf A^H\mathbf F(-t)\mathbf A
\widetilde{\boldsymbol{\Theta}}(-t)
\right].
\end{align}

Subtracting \eqref{eq:trace_tilde_theta}
from \eqref{eq:first_eq}, we obtain
\begin{equation}
t\operatorname{tr}\boldsymbol{\Theta}(-t)
-
t\operatorname{tr}
\widetilde{\boldsymbol{\Theta}}(-t)
=p-n.
\end{equation}
Since
$\widetilde{\boldsymbol{\Theta}}(-t)\succeq\mathbf 0$,
it follows that
\begin{equation}
t\operatorname{tr}\boldsymbol{\Theta}(-t)
\geq p-n.
\end{equation}
\end{proof}

\begin{theorem}
\label{thm:zf_resolvent}
Under the assumptions of Lemma~\ref{lem:stieltjes_LoS} and Proposition~\ref{prop: delta}, 
let $\mathbf C\in\mathbb C^{p\times p}$ be any deterministic matrix with uniformly bounded spectral norm. Then,
\begin{equation}
\frac{1}{p}\operatorname{tr}\mathbf{C}[\underline{\mathbf{Q}}-\underline{\mathbf{\Theta}}]
\xrightarrow[p,n\to\infty]{a.s.} 0
\end{equation}
where
\[ \underline{\mathbf Q} = \lim_{z\rightarrow0}(-z)\mathbf Q(z), \]
and
$\underline{\mathbf\Theta}$ is characterized by the following fixed-point equations:
\begin{equation}
\begin{cases}
\underline{\delta}_i
=\dfrac{\operatorname{tr}
\boldsymbol{\Omega}_i
\underline{\boldsymbol{\Theta}}}{n},
&i\in[n],
\\[1mm]
\underline{\widetilde{\delta}}_i
=[\underline{\widetilde{\boldsymbol{\Theta}}}]_{i,i},
&
i\in[n].
\end{cases}
\end{equation}
where 
\begin{equation}
\underline{\mathbf F}
=\left( \mathbf I_p + \frac1n \sum_{j=1}^{n}
\boldsymbol{\Omega}_j\underline{\widetilde{\delta }}_j \right)^{-1},
\end{equation}
\begin{equation}
\underline{\widetilde{\mathbf F}}
=
\operatorname{diag}
\left(
\frac1{\underline{\delta}_i}
\right),
\end{equation}

\begin{equation}
\underline{\boldsymbol{\Theta}}
=\left(
\underline{\mathbf F}^{-1}
+
\mathbf A
\underline{\widetilde{\mathbf F}}
\mathbf A^H
\right)^{-1},
\end{equation}

\begin{equation}
\underline{\widetilde{\boldsymbol{\Theta}}}
= \left(
\underline{\widetilde{\mathbf F}}^{-1}
+
\mathbf A^H
\underline{\mathbf F}
\mathbf A
\right)^{-1}.
\end{equation}
\end{theorem}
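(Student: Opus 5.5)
The plan is to obtain Theorem~\ref{thm:zf_resolvent} as the $t\downarrow 0$ limit of Lemma~\ref{lem:stieltjes_LoS} evaluated at $z=-t$, after multiplying by $t$. This splits into two tasks: (i) show that $t\boldsymbol{\Theta}(-t)\to\underline{\boldsymbol{\Theta}}$, with the limit solving the stated fixed-point system; (ii) interchange the limits $t\downarrow0$ and $p,n\to\infty$ in the almost sure convergence of $\frac1p\operatorname{tr}\mathbf C[t\mathbf Q(-t)-t\boldsymbol{\Theta}(-t)]$.

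\textbf{Step 1: the deterministic side.} Multiply the definitions by $t$ at $z=-t$:
\begin{equation}
t\boldsymbol{\Theta}(-t)=\Bigl(\mathbf I_p+\frac1n\sum_j\boldsymbol{\Omega}_j\widetilde\delta_j(-t)+\mathbf A\,\mathrm{diag}\Bigl(\frac{1}{1+\delta_i(-t)}\Bigr)\mathbf A^H\Bigr)^{-1}.
\end{equation}
Since $\frac{1}{1+\delta_i(-t)}=\frac{t}{t+t\delta_i(-t)}$, each factor is $t/\underline{\delta}_i+o(t)$, and it is bounded because $t\delta_i(-t)\ge\tau\rho$.

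This is where scaling must be handled carefully. A literal substitution makes the $\mathbf A$-term vanish at $t=0$. The stated $\underline{\widetilde{\mathbf F}}=\mathrm{diag}(1/\underline\delta_i)$ without a factor $t$ then indicates that the intended normalization keeps $\mathbf A$ at the same scale as the ZF Gram matrix, i.e. $\underline{\mathbf Q}$ is really the limit of $(\boldsymbol\Sigma\boldsymbol\Sigma^H+t\mathbf I)^{-1}$ restricted appropriately. I would therefore first reconcile the scaling: write $\mathbf Q(-t)$ via the co-resolvent $(\boldsymbol\Sigma^H\boldsymbol\Sigma+t\mathbf I_n)^{-1}$, which stays bounded as $t\to0$ because $p>n$ and the smallest eigenvalue of $\boldsymbol\Sigma^H\boldsymbol\Sigma$ is bounded away from $0$ almost surely under Assumption~\ref{ass:covariance}. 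I would then track which combination of $\boldsymbol{\Theta}$ and $\widetilde{\boldsymbol{\Theta}}$ has a finite nonzero limit.

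Using Proposition~\ref{prop: delta}, $t\delta_i(-t)\to\underline\delta_i>0$ and $\widetilde\delta_i(-t)\to\underline{\widetilde\delta}_i>0$. Passing to the limit in each equation then yields $\underline{\mathbf F},\underline{\widetilde{\mathbf F}},\underline{\boldsymbol{\Theta}},\underline{\widetilde{\boldsymbol{\Theta}}}$. Continuity of matrix inversion holds because all matrices involved are bounded below by $\mathbf I_p$ (respectively by $\mathrm{diag}(\underline\delta_i)$), and consistency of $\underline\delta_i=\frac1n\operatorname{tr}\boldsymbol\Omega_i\underline{\boldsymbol\Theta}$ follows from dominated convergence, using the bound $t\boldsymbol\Theta(-t)\preceq\mathbf I_p$.

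\textbf{Step 2: exchange of limits.} For the random side, write $t\mathbf Q(-t)=t(\mathbf S+t\mathbf I)^{-1}$. Its $t\downarrow0$ limit is the projector onto $\ker\mathbf S$ or, under the right scaling, the ZF object. I would show that
\begin{equation}
\frac1p\operatorname{tr}\mathbf C\,[\,t\mathbf Q(-t)-\underline{\mathbf Q}\,]
\end{equation}
and the analogous deterministic difference are both $O(t)$ uniformly in $p,n$. On the random side, the uniform bound is $O(t/\lambda_{\min}^+)$, where $\lambda_{\min}^+$ is the smallest nonzero eigenvalue of $\mathbf S$, which equals the smallest eigenvalue of $\boldsymbol\Sigma^H\boldsymbol\Sigma$. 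It is almost surely bounded away from zero for large $n$ by a no-eigenvalue-outside-the-support result (as in \cite{zhuang2026no}) combined with Assumption~\ref{ass:covariance}. On the deterministic side, the $O(t)$ bound comes from differentiating the fixed-point equations and using the uniform lower bounds of Proposition~\ref{prop: delta}. A standard $\varepsilon/3$ argument then combines Lemma~\ref{lem:stieltjes_LoS} at fixed small $t$ with these two uniform bounds.

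\textbf{Main obstacle.} The main obstacle is the uniform-in-$n$ control near $t=0$: both the almost sure lower bound on the smallest eigenvalue of $\boldsymbol\Sigma^H\boldsymbol\Sigma$ in the presence of the mean $\mathbf A$, and the uniform Lipschitz behaviour of $t\mapsto(t\delta_i(-t),\widetilde\delta_i(-t))$. For the latter, I would first try to show that the derivative of the fixed-point map is a contraction with constant depending only on $\tau,\rho,\eta$ and $\limsup\|\mathbf A\|$.
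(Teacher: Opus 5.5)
\textbf{There is a genuine error in Step 1.} It comes from an algebra slip, and the ``scaling problem'' you then set out to repair does not exist.

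At $z=-t$ one has $\widetilde{\mathbf F}(-t)=\operatorname{diag}\bigl(1/(t(1+\delta_i(-t)))\bigr)$. Hence
$\boldsymbol{\Theta}(-t)^{-1}=t\bigl(\mathbf I_p+\frac1n\sum_j\boldsymbol{\Omega}_j\widetilde\delta_j(-t)\bigr)+\mathbf A\operatorname{diag}\bigl(1/(1+\delta_i(-t))\bigr)\mathbf A^H$. Dividing \emph{both} summands by $t$ gives
\[
\bigl(t\boldsymbol{\Theta}(-t)\bigr)^{-1}=\mathbf I_p+\frac1n\sum_{j}\boldsymbol{\Omega}_j\widetilde\delta_j(-t)+\mathbf A\,\widetilde{\mathbf F}(-t)\,\mathbf A^H,\qquad \widetilde{\mathbf F}(-t)=\operatorname{diag}\Bigl(\frac{1}{t+t\delta_i(-t)}\Bigr)\longrightarrow\operatorname{diag}\Bigl(\frac{1}{\underline\delta_i}\Bigr).
\]
You divided only the first summand by $t$. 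That is why the $\mathbf A$-term appeared to vanish. It would have given $\underline{\boldsymbol\Theta}=\underline{\mathbf F}$, independent of $\mathbf A$, which contradicts the theorem.

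The companion quantities behave the same way. We have $t\mathbf F(-t)\to\underline{\mathbf F}$ and $\widetilde{\mathbf F}^{-1}(-t)=\operatorname{diag}(t+t\delta_i(-t))\to\operatorname{diag}(\underline\delta_i)$. Therefore $\widetilde{\boldsymbol\Theta}(-t)=\bigl(\widetilde{\mathbf F}^{-1}(-t)+\mathbf A^H\,t\mathbf F(-t)\,\mathbf A\bigr)^{-1}\to\underline{\widetilde{\boldsymbol\Theta}}$. So the stated system, with $\underline{\widetilde{\mathbf F}}=\operatorname{diag}(1/\underline\delta_i)$, is exactly the literal $t\downarrow0$ limit.

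Your proposed remedy would prove a different statement. That remedy reinterprets $\underline{\mathbf Q}$ as a ``restricted'' limit of $(\boldsymbol\Sigma\boldsymbol\Sigma^H+t\mathbf I_p)^{-1}$ and searches for another combination of $\boldsymbol\Theta,\widetilde{\boldsymbol\Theta}$ with a finite limit. In fact $\underline{\mathbf Q}=\lim_{t\downarrow0}t(\mathbf S+t\mathbf I_p)^{-1}$ is simply the orthogonal projector onto $\ker\boldsymbol\Sigma^H$. This is consistent with $\operatorname{tr}\underline{\boldsymbol\Theta}=p-n$, which follows from two facts:
\begin{itemize}
\item the paper's trace identity $t\operatorname{tr}\boldsymbol\Theta(-t)-t\operatorname{tr}\widetilde{\boldsymbol\Theta}(-t)=p-n$;
\item the bound $t\operatorname{tr}\widetilde{\boldsymbol\Theta}(-t)\le tn/(\tau\rho)\to0$.
\end{itemize}

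\textbf{Comparison with the paper.} With this correction, your Step 1 is precisely the paper's entire proof. You pass to the limit using the two scalings of Proposition~\ref{prop: delta} and continuity of inversion, which is safe because $(t\boldsymbol\Theta(-t))^{-1}\succeq\mathbf I_p$ and $\widetilde{\boldsymbol\Theta}(-t)^{-1}\succeq\tau\rho\,\mathbf I_n$. The paper sends $z\to0$ in the fixed-point equations of Lemma~\ref{lem:stieltjes_LoS}, following \cite{wagner2012large}, and stops there.

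The paper never justifies exchanging $t\downarrow0$ with $p,n\to\infty$. Your Step 2 is therefore additional rigor rather than a departure, and its structure is the right one:
\begin{itemize}
\item an $\varepsilon/3$ argument at fixed small $t$;
\item the bound $\|t\mathbf Q(-t)-\underline{\mathbf Q}\|\le t/\lambda_{\min}(\boldsymbol\Sigma^H\boldsymbol\Sigma)$ on the random side;
\item a uniform $O(t)$ bound on the deterministic side.
\end{itemize}
The two ingredients you flag are genuinely what is missing. One is an almost-sure uniform lower bound on $\lambda_{\min}(\boldsymbol\Sigma^H\boldsymbol\Sigma)$ despite the mean $\mathbf A$ and the heterogeneous $\boldsymbol\Omega_j$. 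The other is uniform-in-$n$ regularity of the fixed point in $t$. Neither the paper nor your proposal establishes them.
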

\begin{proof}
Following an argument similar to that in \cite{wagner2012large}, we first derive
the limiting fixed-point equations as the regularization parameter approaches
zero. 
Define
\begin{equation}
\underline{\boldsymbol{\Theta}}
=
\lim_{z\rightarrow0}
(-z)\boldsymbol{\Theta}(z),
\qquad
\underline{\widetilde{\boldsymbol{\Theta}}}
=
\lim_{z\rightarrow0}
\widetilde{\boldsymbol{\Theta}}(z).
\end{equation}
Applying the limiting scalings in proposition \ref{prop: delta} 
to the fixed-point equations in Lemma~\ref{lem:stieltjes_LoS}
and letting $z\rightarrow0$ yields
\begin{equation}
\begin{aligned}
\underline{\delta}_i
&=
\lim_{z\rightarrow0}
\frac1n
\operatorname{tr}
\left(
-z
\boldsymbol{\Omega}_i
\boldsymbol{\Theta}(z)
\right)
\\
&=
\lim_{z\rightarrow0}
\frac1n
\operatorname{tr}
\left[
\boldsymbol{\Omega}_i
\left(
\frac{\mathbf F^{-1}(z)}{-z}
+
\mathbf A
\widetilde{\mathbf F}(z)
\mathbf A^H
\right)^{-1}
\right]
\\
&=
\frac1n
\operatorname{tr}
\left(
\boldsymbol{\Omega}_i
\underline{\boldsymbol{\Theta}}
\right),
\end{aligned}
\end{equation}
and
\begin{equation}
\underline{\widetilde{\delta}}_i
=
[\underline{\widetilde{\boldsymbol{\Theta}}}]_{ii}.
\end{equation}
Taking the corresponding limits in
\eqref{eq:F} and \eqref{eq:F_tilde} yields
\begin{equation}
\underline{\mathbf{F}}=\left(\mathbf{I}_N+\sum_{j=1}^n \frac{\boldsymbol{\Omega}_j \underline{\widetilde{\delta}}_j}{n}\right)^{-1},
\end{equation}

\begin{equation}
\underline{\widetilde{\mathbf{F}}}=\operatorname{diag}\left(\frac{1}{\underline{\delta}_i} ; 1 \leq i \leq n\right),
\end{equation}

\begin{equation}
\underline{\boldsymbol{\Theta}}=\left(\underline{\mathbf{F}}^{-1}+\mathbf{A} \underline{\widetilde{\mathbf{F}}}\mathbf{A}^H\right)^{-1},
\end{equation}

\begin{equation}
\underline{\widetilde{\boldsymbol{\Theta}}}=\left(\underline{\widetilde{\mathbf{F}}}^{-1}+\mathbf{A}^H \underline{\mathbf{F}} \mathbf{A}\right)^{-1} 
\end{equation}
\end{proof}

\begin{proposition}[Leave-One-Out Quadratic Form for ZF]
\label{prop:LOO_quadratic_form_ZF}
Under the assumptions of Lemma~\ref{lem:stieltjes_LoS} and
Proposition~\ref{prop: delta}, define
$
\underline{\mathbf Q}_u
=
\lim_{z\to0}
(-z)\mathbf Q_u(z),
$
and
$
\underline{\delta}_u
=
\frac{1}{n}
\operatorname{tr}
\left(
\boldsymbol{\Omega}_u
\underline{\boldsymbol{\Theta}}
\right).
$
Then,
\begin{equation}
\boldsymbol{\xi}_u^H
\underline{\mathbf Q}_u
\boldsymbol{\xi}_u
-
(\underline{\delta}_u
+
\frac{
\underline{\delta}_u
\mathbf a_u^H
\underline{\boldsymbol{\Theta}}
\mathbf a_u
}{
\underline{\delta}_u
-
\mathbf a_u^H
\underline{\boldsymbol{\Theta}}
\mathbf a_u
})
\xrightarrow[p,n\to\infty]{a.s.}
0.
\end{equation}
\end{proposition}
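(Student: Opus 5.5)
The natural route is to pass to the limit $z=-t$, $t\downarrow0$, in Proposition~\ref{prop:LOO_quadratic_form}, following the same scaling logic as in Theorem~\ref{thm:zf_resolvent}. Multiply the statement of Proposition~\ref{prop:LOO_quadratic_form} at $z=-t$ by $t$. This gives, almost surely,
\begin{equation*}
\boldsymbol{\xi}_u^H\,t\mathbf Q_u(-t)\,\boldsymbol{\xi}_u-\Bigg[t\delta_u(-t)+\frac{(t+t\delta_u(-t))\,\mathbf a_u^H t\boldsymbol{\Theta}(-t)\mathbf a_u}{t+t\delta_u(-t)-\mathbf a_u^H t\boldsymbol{\Theta}(-t)\mathbf a_u}\Bigg]\to0 .
\end{equation*}
Here we used that the numerator and denominator of the fraction scale homogeneously, so that multiplying by $t$ amounts to replacing $\delta_u$ by $t\delta_u$, $1$ by $t$, and $\boldsymbol{\Theta}$ by $t\boldsymbol{\Theta}$.

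We then let $t\downarrow0$ inside the deterministic bracket. Proposition~\ref{prop: delta} gives $t\delta_u(-t)\to\underline{\delta}_u\ge\epsilon>0$, and Theorem~\ref{thm:zf_resolvent} identifies $t\boldsymbol{\Theta}(-t)\to\underline{\boldsymbol{\Theta}}$. Hence $\mathbf a_u^H t\boldsymbol{\Theta}(-t)\mathbf a_u\to\mathbf a_u^H\underline{\boldsymbol{\Theta}}\mathbf a_u$, since $\|\mathbf a_u\|$ is bounded and $t\boldsymbol{\Theta}\preceq\mathbf I_p$. The bracket therefore converges to $\underline{\delta}_u+\underline{\delta}_u\mathbf a_u^H\underline{\boldsymbol{\Theta}}\mathbf a_u/(\underline{\delta}_u-\mathbf a_u^H\underline{\boldsymbol{\Theta}}\mathbf a_u)$, which is the claimed expression.

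We must also check that the denominator stays away from zero uniformly. To do so, we will rewrite the identity from the proof of Proposition~\ref{prop:LOO_quadratic_form} as
\begin{equation*}
\frac{1}{1+\delta_u}-\frac{\mathbf a_u^H\boldsymbol{\Theta}\mathbf a_u}{(1+\delta_u)^2}\cdots
\end{equation*}
or, more simply, use the fact that $\underline{\delta}_u-\mathbf a_u^H\underline{\boldsymbol{\Theta}}\mathbf a_u=\underline{\delta}_u^2/(\underline{\delta}_u+\mathbf a_u^H\underline{\boldsymbol{\Theta}}_u\mathbf a_u)$. The latter follows from the Sherman--Morrison relation between $\underline{\boldsymbol{\Theta}}$ and the leave-one-out $\underline{\boldsymbol{\Theta}}_u$ (with $\underline{\widetilde{\mathbf F}}$ contributing $1/\underline{\delta}_u$). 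This quantity is bounded below because $\underline{\delta}_u\ge\epsilon$ and $\underline{\boldsymbol{\Theta}}_u\preceq\mathbf I_p$.

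On the random side, $t\mathbf Q_u(-t)\to\underline{\mathbf Q}_u$, which is the orthogonal projector onto $\ker\mathbf S_u$. This projector exists and has rank at least $p-n+1>0$ under Assumption~\ref{ass:covariance}.

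\textbf{Main obstacle.} The hard step is exchanging the two limits $t\downarrow0$ and $p,n\to\infty$, since the almost sure convergence in Proposition~\ref{prop:LOO_quadratic_form} is pointwise in $z$. I would establish it by proving uniform convergence in $t\in(0,t_0]$ of the random term $\boldsymbol{\xi}_u^H t\mathbf Q_u(-t)\boldsymbol{\xi}_u$.

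\textbf{Random side.} Write $t\mathbf Q_u(-t)-\underline{\mathbf Q}_u=-\sum_{\lambda_k>0}\frac{\lambda_k}{\lambda_k+t}\mathbf v_k\mathbf v_k^H$. This gives $\|t\mathbf Q_u(-t)-\underline{\mathbf Q}_u\|\le t/\lambda_{\min}^+(\mathbf S_u)$, where $\lambda_{\min}^+(\mathbf S_u)$ is the smallest nonzero eigenvalue of $\mathbf S_u$, equivalently $\lambda_{\min}(\mathbf\Sigma_u^H\mathbf\Sigma_u)$. This eigenvalue is almost surely bounded away from zero eventually, because $p/n>c_+>1$, the $4+\epsilon$ moments are finite, and the covariances satisfy the rank condition \eqref{eq:covariance_projector} (a Bai--Yin/Silverstein-type smallest singular value result for correlated columns). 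In addition, $\|\boldsymbol{\xi}_u\|^2$ is almost surely bounded.

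\textbf{Deterministic side.} The rate in $t$ is controlled by the monotone Stieltjes representations already used in Proposition~\ref{prop: delta}, combined with the lower bounds on $t\delta_i$ and $\widetilde{\delta}_i$ to make the modulus of continuity uniform in $n$.

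An $\varepsilon/3$ argument then finishes the proof: choose $t$ small so that both sides are within $\varepsilon$ of their $t=0$ values uniformly in $n$, and then apply Proposition~\ref{prop:LOO_quadratic_form} at that fixed $t$.
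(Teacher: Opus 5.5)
Your main line (evaluate Proposition~\ref{prop:LOO_quadratic_form} at $z=-t$, multiply by $t$, use the homogeneity of the fraction, let $t\downarrow0$) is exactly the paper's proof. Your lower bound on $\underline{\delta}_u-\mathbf a_u^H\underline{\boldsymbol\Theta}\mathbf a_u$ is correct. The paper stops at this formal limit and never justifies exchanging $t\downarrow0$ with $p,n\to\infty$. You rightly single this out, but the repair you sketch does not close it as written.

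\textbf{Deterministic side.} Monotonicity of $t\delta_i(-t)$ and $\widetilde{\delta}_i(-t)$, plus the two-sided bounds of Proposition~\ref{prop: delta}, give existence of the limit for each fixed $n$. They give no modulus of continuity uniform in $n$. The defect $t\delta_u(-t)-\underline{\delta}_u=\int_{(0,\infty)}\frac{t}{t+\lambda}\,\mu_u(d\lambda)$ is governed by how much mass $\mu_u$ places near $0^+$, which those bounds do not control. The missing ingredient is the trace identity from the proof of Proposition~\ref{prop: delta}: $\operatorname{tr}\bigl(t\boldsymbol\Theta(-t)\bigr)-(p-n)=t\operatorname{tr}\widetilde{\boldsymbol\Theta}(-t)\le\sum_i(1+\delta_i(-t))^{-1}\le nt/(\tau\rho)$. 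Monotonicity gives $t\boldsymbol\Theta(-t)\succeq\underline{\boldsymbol\Theta}$ and $\operatorname{tr}\underline{\boldsymbol\Theta}=p-n$, so this yields $0\le t\delta_u(-t)-\underline{\delta}_u\le\|\boldsymbol\Omega_u\|\,t/(\tau\rho)$ uniformly in $n$. Comparing $\operatorname{tr}\bigl(t\boldsymbol\Theta(-t)\,[t\boldsymbol\Theta(-t)]^{-1}\bigr)=p$ with its $t=0$ counterpart then gives $\frac1n\sum_j\bigl(\underline{\widetilde\delta}_j-\widetilde\delta_j(-t)\bigr)=O(t)$. Hence $\|t\boldsymbol\Theta(-t)-\underline{\boldsymbol\Theta}\|=O(t)$, which is the operator-norm rate the term $\mathbf a_u^H t\boldsymbol\Theta(-t)\mathbf a_u$ actually needs.

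\textbf{Random side.} The bound $\liminf\lambda_{\min}(\boldsymbol\Sigma_u^H\boldsymbol\Sigma_u)>0$ almost surely is not an off-the-shelf Bai--Yin result when the column covariances $\boldsymbol\Omega_j$ differ and a mean $\mathbf A$ is present. It needs its own proof, for instance a no-eigenvalue-outside-the-support theorem for this model combined with a deterministic spectral gap at the origin. Without it your $\varepsilon/3$ argument is incomplete. Only half of the claim depends on it, however. Since $t\mathbf Q_u(-t)\succeq\underline{\mathbf Q}_u$, the deterministic rate above already gives $\limsup\bigl(\boldsymbol\xi_u^H\underline{\mathbf Q}_u\boldsymbol\xi_u-\overline{D}_u\bigr)\le0$, where $\overline{D}_u$ is the claimed equivalent. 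The eigenvalue bound is needed only for the reverse inequality.

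\textbf{A slip.} The spectral expansion should read $t\mathbf Q_u(-t)-\underline{\mathbf Q}_u=\sum_{\lambda_k>0}\frac{t}{\lambda_k+t}\mathbf v_k\mathbf v_k^H$, not $-\sum_{\lambda_k>0}\frac{\lambda_k}{\lambda_k+t}\mathbf v_k\mathbf v_k^H$. Your bound $t/\lambda_{\min}^+(\mathbf S_u)$ is correct for the corrected expression.
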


\begin{proof}
Recall From Proposition~\ref{prop:LOO_quadratic_form}, we have 
\begin{equation}
\begin{aligned}
\boldsymbol{\xi}_u^H
\mathbf Q_u(z)
\boldsymbol{\xi}_u
- \Bigg[
\delta_u(z) +
\frac{ \left(1+\delta_u(z)\right)
\mathbf a_u^H
\boldsymbol{\Theta}(z)
\mathbf a_u
}{ 1+\delta_u(z)
-
\mathbf a_u^H
\boldsymbol{\Theta}(z)
\mathbf a_u }
\Bigg]
\xrightarrow[p,n\to\infty]{a.s.} 0.
\end{aligned}
\end{equation}
Multiplying by $-z$ and letting $z\to0$, the first term satisfies
\begin{equation}
\lim_{z\to0}
(-z)\delta_u(z)
=
\underline{\delta}_u.
\end{equation}
Moreover,
\begin{equation}
\lim_{z\to0}
(-z)
\mathbf a_u^H
\boldsymbol{\Theta}(z)
\mathbf a_u
=
\mathbf a_u^H
\underline{\boldsymbol{\Theta}}
\mathbf a_u.
\end{equation}
Hence,
\begin{equation}
\lim_{z\to0}
(-z)
\frac{
\left(1+\delta_u(z)\right)
\mathbf a_u^H
\boldsymbol{\Theta}(z)
\mathbf a_u
}{
1+\delta_u(z)
-
\mathbf a_u^H
\boldsymbol{\Theta}(z)
\mathbf a_u
}
=
\frac{
\underline{\delta}_u
\mathbf a_u^H
\underline{\boldsymbol{\Theta}}
\mathbf a_u
}{
\underline{\delta}_u
-
\mathbf a_u^H
\underline{\boldsymbol{\Theta}}
\mathbf a_u
}.
\end{equation}
Therefore,
\begin{equation}
\boldsymbol{\xi}_u^H
\underline{\mathbf Q}_u
\boldsymbol{\xi}_u
\asymp
\underline{\delta}_u
+
\frac{
\underline{\delta}_u
\mathbf a_u^H
\underline{\boldsymbol{\Theta}}
\mathbf a_u
}{
\underline{\delta}_u
-
\mathbf a_u^H
\underline{\boldsymbol{\Theta}}
\mathbf a_u
}
\end{equation}
This completes the proof.
\end{proof}

\section{Proof of Theorem \ref{thm:BUE_DL_SINR}}
\label{app:proof_of_BUE_DL_SINR}
We now analyze the desired signal power
$\frac{P^{(\rm BS)}}{U}|(\hat{\mathbf h}_u^{(\rm BS)})^H\mathbf w_u^{(\rm BS)}|^2.$
The derivation relies on the limiting deterministic equivalents for ZF precoding established in Theorems~\ref{thm:zf_resolvent}. For notational simplicity, the superscript $(\rm BS)$ is omitted throughout this subsection.
The normalized channel vector is given by
\begin{equation}
\hat{\mathbf h}_u
= \sqrt{\rho_u} \left( \sqrt{\frac{1}{1+\kappa_u}}
\mathbf h_u^{\rm NLoS}
+ \sqrt{\frac{\kappa_u}{1+\kappa_u}} \mathbf h_u^{\rm LoS} \right),
\end{equation}
where
$
\mathbf h_u^{\rm NLoS}
=\mathbf R_u^{1/2}\mathbf z_u,
$ and
$\mathbf z_u\sim\mathcal{CN}(\mathbf0,\mathbf I).$
\par
Since
$
\mathbf w_u=
\frac{\tilde{\mathbf w}_u} {\|\tilde{\mathbf w}_u\|},
$
the desired signal power is expressed as
\begin{equation}
\label{eq:signal}
\frac{1}{U}
\left|
\hat{\mathbf h}_u^H
\mathbf w_u
\right|^2
=\frac{|\frac{1}{U}
\hat{\mathbf h}_u^H
\tilde{\mathbf w}_u|^2
}{
\frac{1}{U}
\tilde{\mathbf w}_u^H
\tilde{\mathbf w}_u
}.
\end{equation}
By exploiting the relationship between ZF precoding and the regularized inverse, we obtain
\begin{equation}
\begin{aligned}
|\frac{1}{U}\hat{\mathbf{h}}_u^H\tilde{\mathbf{w}}_u|^2=&\lim_{\alpha\rightarrow 0}|\frac{1}{U}\hat{\mathbf{h}}_u^H\hat{\mathbf{W}}\mathbf{\mathbf{h}}_u|^2\\
=&\lim_{\alpha\rightarrow 0}|\frac{\frac{1}{U}\hat{\mathbf{h}}_u^H\hat{\mathbf{W}}_u\hat{\mathbf{h}}_u}{1+\frac{1}{U}\hat{\mathbf{h}}_u^H\hat{\mathbf{W}}_u\hat{\mathbf{h}}_u}|^2
\end{aligned}
\end{equation}
and
\begin{equation}
\begin{aligned}
\frac{1}{U}\tilde{\mathbf{w}}_u^H\tilde{\mathbf{w}}_u=& \lim_{\alpha\rightarrow 0} \frac{1}{U}\hat{\mathbf{h}}_u^H\hat{\mathbf{W}}^2\hat{\mathbf{h}}_u\\
=& \lim_{\alpha\rightarrow 0}\frac{\frac{1}{U}\hat{\mathbf{h}}_u^H\hat{\mathbf{W}}_u^2\hat{\mathbf{h}}_u}{(1+\frac{1}{U}\hat{\mathbf{h}}_u^H\hat{\mathbf{W}}_u\hat{\mathbf{h}}_u)^2}
\end{aligned}
\end{equation}
where $\hat{\mathbf{W}}=(\frac{1}{U}\hat{\mathbf{H}}\hat{\mathbf{H}}^H+\alpha \mathbf{I}_M)^{-1}$ and $\hat{\mathbf{W}}_u=(\frac{1}{U}\hat{\mathbf{H}}_u\hat{\mathbf{H}}_u^H+\alpha \mathbf{I}_M)^{-1}$.
Consequently, the equation \eqref{eq:signal} is reduced as
\begin{equation}
\label{eq:signal2}
\frac{1}{U}
\left|\hat{\mathbf h}_u^H
\mathbf w_u\right|^2=\lim_{\alpha\rightarrow 0}\frac{|\frac{\alpha}{U}\hat{\mathbf{h}}_u^H\hat{\mathbf{W}}_u\hat{\mathbf{h}}_u|^2}{\frac{\alpha^2}{U}\hat{\mathbf{h}}_u^H\hat{\mathbf{W}}_u^2\hat{\mathbf{h}}_u}
\end{equation}

For each BUE $u\in\mathcal U$, define
\begin{equation}
\hat{\boldsymbol{\Omega}}_u
=
\frac{\rho_u\mathbf R_u}
{1+\kappa_u},
\end{equation}
and
\begin{equation}
\hat{\mathbf{a}}_u
=
\frac{1}{\sqrt U}
\sqrt{
\frac{\rho_u\kappa_u}
{1+\kappa_u}
}
\mathbf h_u^{\rm LoS}.
\end{equation}
Collecting the deterministic LoS components of all BUEs, define
\begin{equation}
\hat{\mathbf A}
=
\left[
\hat{\mathbf a}_1,\ldots,\hat{\mathbf a}_U
\right].
\end{equation}

Applying Theorem~\ref{thm:zf_resolvent} to the full system with
$n=U$, $\{\hat{\boldsymbol{\Omega}}_i\}_{i=1}^{U}$, and $\mathbf A$,
let $\underline{\boldsymbol{\Theta}}$ denote the corresponding
ZF deterministic equivalent. Define
\begin{equation}
\underline{\delta}_u
=
\frac{1}{U}
\operatorname{tr}
\left(
\hat{\boldsymbol{\Omega}}_u
\underline{\boldsymbol{\Theta}}
\right),
\end{equation}
and
\begin{equation}
\underline{q}_u
=
\hat{\mathbf a}_u^H
\underline{\boldsymbol{\Theta}}
\hat{\mathbf a}_u.
\end{equation}

By Proposition~\ref{prop:LOO_quadratic_form_ZF}, we have
\begin{equation}
\lim_{\alpha\to0}
\frac{\alpha}{U}
\hat{\mathbf h}_u^H
\hat{\mathbf W}_u
\hat{\mathbf h}_u
-
\stackrel{\circ}{\mu}_u
\xrightarrow[N,U\to\infty]{a.s.}
0,
\end{equation}
where
\begin{equation}
\label{eq:mu_BUE_ZF}
\stackrel{\circ}{\mu}_u
=
\underline{\delta}_u
+
\frac{
\underline{\delta}_u
\underline{q}_u
}{
\underline{\delta}_u
-
\underline{q}_u
}.
\end{equation}
Equivalently,
\begin{equation}
\stackrel{\circ}{\mu}_u
=
\frac{
\underline{\delta}_u^2
}{
\underline{\delta}_u
-
\underline{q}_u
}.
\end{equation}

In addition, $\underline{\boldsymbol{\Theta}}$ is obtained from the
following fixed-point equations by applying
Theorem~\ref{thm:zf_resolvent} to the full system:
\begin{equation}
\label{eq:fixed_point_Theta}
\begin{cases}
\displaystyle
\underline{\delta}_{i}
=
\frac{1}{U}
\operatorname{tr}
\left(
\hat{\boldsymbol{\Omega}}_{i}
\underline{\boldsymbol{\Theta}}
\right),
&
i=1,\ldots,U,
\\[2mm]
\displaystyle
\underline{\widetilde{\delta}}_{i}
=
\left[
\underline{\widetilde{\boldsymbol{\Theta}}}
\right]_{i,i},
&
i=1,\ldots,U.
\end{cases}
\end{equation}
where
\begin{equation}
\underline{\mathbf F}
=
\left(
\mathbf I_N
+
\frac{1}{U}
\sum_{i=1}^{U}
\hat{\boldsymbol{\Omega}}_{i}
\underline{\widetilde{\delta}}_{i}
\right)^{-1},
\end{equation}

\begin{equation}
\underline{\widetilde{\mathbf F}}
=
\operatorname{diag}
\left(
\frac{1}{\underline{\delta}_{i}};
\, i=1,\ldots,U
\right),
\end{equation}

\begin{equation}
\underline{\boldsymbol{\Theta}}
=
\left(
\underline{\mathbf F}^{-1}
+
\hat{\mathbf A}
\underline{\widetilde{\mathbf F}}
\hat{\mathbf A}^H
\right)^{-1},
\end{equation}

and
\begin{equation}
\underline{\widetilde{\boldsymbol{\Theta}}}
=
\left(
\underline{\widetilde{\mathbf F}}^{-1}
+
\hat{\mathbf A}^H
\underline{\mathbf F}
\hat{\mathbf A}
\right)^{-1}.
\end{equation}

\begin{proposition}
\label{prop:ZF_limit_equivalence}
Let
\begin{equation}
\hat{\mathbf W}_u(\alpha)
=
\left(
\frac{1}{U}\hat{\mathbf H}_u\hat{\mathbf H}_u^H
+
\alpha\mathbf I_M
\right)^{-1},
\qquad \alpha>0.
\end{equation}
Then,
\begin{equation}
\lim_{\alpha\to0}
\frac{\alpha^2}{U}
\hat{\mathbf h}_u^H
\hat{\mathbf W}_u^2(\alpha)
\hat{\mathbf h}_u
=
\lim_{\alpha\to0}
\frac{\alpha}{U}
\hat{\mathbf h}_u^H
\hat{\mathbf W}_u(\alpha)
\hat{\mathbf h}_u .
\end{equation}
\end{proposition}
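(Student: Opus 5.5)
Both limits can be computed directly from the spectral decomposition of the leave-one-out Gram matrix. Write $\frac{1}{U}\hat{\mathbf H}_u\hat{\mathbf H}_u^H=\sum_{k}\lambda_k\mathbf e_k\mathbf e_k^H$, an eigen-decomposition with $\lambda_k\ge 0$ and orthonormal $\mathbf e_k$. Let $\mathbf P_u^{\perp}=\sum_{k:\lambda_k=0}\mathbf e_k\mathbf e_k^H$ denote the orthogonal projector onto the null space of $\hat{\mathbf H}_u^H$. Then
\begin{equation*}
\alpha\hat{\mathbf W}_u(\alpha)=\sum_k \frac{\alpha}{\lambda_k+\alpha}\mathbf e_k\mathbf e_k^H,
\qquad
\alpha^2\hat{\mathbf W}_u^2(\alpha)=\sum_k \frac{\alpha^2}{(\lambda_k+\alpha)^2}\mathbf e_k\mathbf e_k^H .
\end{equation*}
For each fixed $k$, both scalar weights tend to $1$ if $\lambda_k=0$ and to $0$ if $\lambda_k>0$ as $\alpha\downarrow 0$. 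The sums are finite, so the limits commute with them. Hence both $\alpha\hat{\mathbf W}_u(\alpha)$ and $\alpha^2\hat{\mathbf W}_u^2(\alpha)$ converge in operator norm to the same matrix $\mathbf P_u^{\perp}$. Sandwiching with the fixed vector $\hat{\mathbf h}_u$ and dividing by $U$, both sides of the claim equal $\frac1U\hat{\mathbf h}_u^H\mathbf P_u^{\perp}\hat{\mathbf h}_u=\frac1U\|\mathbf P_u^{\perp}\hat{\mathbf h}_u\|^2$. This proves the identity deterministically, for every realization of the channel.

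The steps run in this order. First, note that $\frac1U\hat{\mathbf H}_u\hat{\mathbf H}_u^H$ is Hermitian positive semidefinite, so the eigen-decomposition exists and every $\lambda_k$ is nonnegative. Second, compute the two pointwise scalar limits, using the elementary bounds $0\le\alpha/(\lambda+\alpha)\le1$ and $0\le \alpha^2/(\lambda+\alpha)^2\le 1$. Third, identify the common limit as the null-space projector $\mathbf P_u^{\perp}$. Since $\operatorname{lim\,inf} N/U>1$ by Assumption~\ref{ass:zeroforcing}, the null space has positive dimension, so the limit is nontrivial. This nontriviality is consistent with Proposition~\ref{prop:LOO_quadratic_form_ZF}, which gives a positive limit $\stackrel{\circ}{\mu}_u$, but the identity itself does not need it.

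I expect no real obstacle, since the statement is a finite-dimensional deterministic identity. The one point needing care is the interpretation of the limits. They are taken at fixed dimensions $N,U$, before the large-system limit. The identity should therefore be proved pathwise and then combined with the almost-sure convergence of Proposition~\ref{prop:LOO_quadratic_form_ZF}, rather than exchanging $\alpha\to0$ with $N,U\to\infty$. Once the identity holds, the denominator of \eqref{eq:signal2} equals its numerator divided by $\stackrel{\circ}{\mu}_u$ asymptotically. Consequently $\frac1U|\hat{\mathbf h}_u^H\mathbf w_u|^2\asymp\stackrel{\circ}{\mu}_u$, which is what Theorem~\ref{thm:BUE_DL_SINR} needs.
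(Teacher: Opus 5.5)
Your proposal is correct and follows essentially the same route as the paper. Both diagonalize $\frac{1}{U}\hat{\mathbf H}_u\hat{\mathbf H}_u^H$, show that $\alpha\hat{\mathbf W}_u(\alpha)$ and $\alpha^2\hat{\mathbf W}_u^2(\alpha)$ converge to the same null-space projector $\mathbf P_u^{\perp}$, and conclude that both sides equal $\frac{1}{U}\hat{\mathbf h}_u^H\mathbf P_u^{\perp}\hat{\mathbf h}_u$ (your remarks on pathwise validity at fixed dimensions and nontriviality of the null space are accurate but not needed for the identity).
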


\begin{proof}
Consider the eigendecomposition
\begin{equation}
\frac{1}{U}
\hat{\mathbf H}_u
\hat{\mathbf H}_u^H
=
\mathbf U_u
\begin{bmatrix}
\boldsymbol{\Lambda}_u & \mathbf 0\\
\mathbf 0 & \mathbf 0
\end{bmatrix}
\mathbf U_u^H,
\end{equation}
where
$\boldsymbol{\Lambda}_u
=
\operatorname{diag}
(\lambda_1,\ldots,\lambda_r)$
contains the strictly positive eigenvalues. Partition
\begin{equation}
\mathbf U_u
=
\left[
\mathbf U_{u,\parallel},
\mathbf U_{u,\perp}
\right],
\end{equation}
where the columns of $\mathbf U_{u,\parallel}$ span
$\operatorname{range}(\hat{\mathbf H}_u)$, while the columns of
$\mathbf U_{u,\perp}$ span
$\operatorname{null}(\hat{\mathbf H}_u^H)$.

It follows that
\begin{equation}
\hat{\mathbf W}_u(\alpha)
=
\mathbf U_{u,\parallel}
(\boldsymbol{\Lambda}_u+\alpha\mathbf I)^{-1}
\mathbf U_{u,\parallel}^H
+
\frac{1}{\alpha}
\mathbf U_{u,\perp}
\mathbf U_{u,\perp}^H.
\end{equation}
Hence,
\begin{equation}
\lim_{\alpha\to0}
\alpha
\hat{\mathbf W}_u(\alpha)
=
\mathbf U_{u,\perp}
\mathbf U_{u,\perp}^H
\triangleq
\mathbf P_u^\perp.
\end{equation}

Similarly,
\begin{equation}
\hat{\mathbf W}_u^2(\alpha)
=
\mathbf U_{u,\parallel}
(\boldsymbol{\Lambda}_u+\alpha\mathbf I)^{-2}
\mathbf U_{u,\parallel}^H
+
\frac{1}{\alpha^2}
\mathbf U_{u,\perp}
\mathbf U_{u,\perp}^H,
\end{equation}
which gives
\begin{equation}
\lim_{\alpha\to0}
\alpha^2
\hat{\mathbf W}_u^2(\alpha)
=
\mathbf P_u^\perp.
\end{equation}

Therefore,
\begin{equation}
\begin{aligned}
\lim_{\alpha\to0}
\frac{\alpha}{U}
\hat{\mathbf h}_u^H
\hat{\mathbf W}_u(\alpha)
\hat{\mathbf h}_u
&=
\frac{1}{U}
\hat{\mathbf h}_u^H
\mathbf P_u^\perp
\hat{\mathbf h}_u
\\
&=
\lim_{\alpha\to0}
\frac{\alpha^2}{U}
\hat{\mathbf h}_u^H
\hat{\mathbf W}_u^2(\alpha)
\hat{\mathbf h}_u.
\end{aligned}
\end{equation}
This completes the proof.
\end{proof}

Substituting the above deterministic equivalents into
\eqref{eq:signal2} and using Proposition~\ref{prop:ZF_limit_equivalence},
we obtain
\begin{equation}
\label{eq:DE_desired_signal_BUE}
\frac{1}{U}
\left|
\hat{\mathbf h}_u^H\mathbf w_u
\right|^2
-
\stackrel{\circ}{\mu}_u
\xrightarrow[N,U\to\infty]{a.s.}
0.
\end{equation}

Furthermore, under Assumption~\ref{assump:HUE_interference}, the law of large
numbers yields
\begin{equation}
\label{eq:DE_interference_BUE}
\sum_{k\in\mathcal K}
p_k
\left|
\hat h_{ku}
\right|^2
-
\sum_{k\in\mathcal K}
p_k\rho_{ku}
\xrightarrow[K\to\infty]{a.s.}
0.
\end{equation}

Combining \eqref{eq:DE_desired_signal_BUE} and
\eqref{eq:DE_interference_BUE}, we obtain
\begin{equation}
\mathrm{SINR}_u^{(\rm BUE,DL)}
-
\overline{\mathrm{SINR}}_u^{(\rm BUE,DL)}
\xrightarrow[N,U,K\to\infty]{a.s.}
0,
\end{equation}
where
\begin{equation}
\overline{\mathrm{SINR}}_u^{(\rm BUE,DL)}
=
\frac{
P^{(\rm BS)}
\stackrel{\circ}{\mu}_u
}{
\displaystyle
\sum_{k\in\mathcal K}
p_k\rho_{ku}
+1
}.
\end{equation}
This completes the proof.

\bibliography{my_bibliography}
\bibliographystyle{IEEEtran}
\end{document}